	\DeclareMathOperator\arccot{arccot}
	\DeclareMathAlphabet\mathbfcal{OMS}{cmsy}{b}{n}	
	\colorlet{shadecolor}{blue!20}
\theoremstyle{definition}
\newtheorem{thm}{Theorem} 
\begin{document}
%
%
\title{The Amplitude-Phase Decomposition for \\the Magnetotelluric Impedance Tensor}
%
%
	\author[$\ddagger$]{Maik Neukirch}
	\affil[$\ddagger$]{presently: independent researcher; previously: Barcelona Center for Subsurface Imaging, Institut de Ci\`encies del Mar, Barcelona, Spain}

	\author[$\,*$]{Daniel Rudolf}
	\affil[$\,*$]{Institute for Mathematical Stochastics, Georg-August-Universit\"at G\"ottingen, Germany}

	\author[$\dagger$]{Xavier Garcia}
	\affil[$\dagger$]{Barcelona Center for Subsurface Imaging, Institut de Ci\`encies del Mar, Barcelona, Spain}
	
	\author[$\ddagger$]{Savitri Galiana} 
%
%
{
\maketitle
\vspace{-.75cm}
{\begin{center} 
Short title: Amplitude-Phase Decomposition
\end{center}}
\begin{abstract}
The Phase Tensor marked a breakthrough in understanding and analysis of electric galvanic distortion. It can be used for (distortion free) dimensionality analysis, distortion analysis, mapping and subsurface model inversion. However, it does not contain any impedance amplitude information and therefore cannot quantify resistivity without complementary data. 
Any technique considering the impedance amplitude as well as the Phase Tensor will result in a more reliable interpretation. 
We formulate a complete impedance tensor decomposition into the Phase Tensor and a new Amplitude Tensor that is shown to be complementary and mathematically independent to the Phase Tensor. 
We show that for the special cases of 1D and 2D models, the geometric Amplitude Tensor parameters (strike and skew angles) converge to Phase Tensor parameters and the singular values of the Amplitude Tensor correspond to the impedance amplitudes of the transverse electric (TE) and transverse magnetic (TM) modes. 
In all cases, we show that the Amplitude Tensor contains both galvanic and inductive amplitudes, the latter of which is argued to be physically related to the inductive information of the Phase Tensor. 
The geometric parameters of the inductive Amplitude and the Phase Tensors represent the same geometry of the subsurface conductivity distribution that is affected by induction processes, and therefore we hypothesise that geometric Phase Tensor parameters can be used to approximate the inductive Amplitude Tensor. 
Then, this hypothesis leads to the estimation of the galvanic Amplitude Tensor which is equal to the galvanic electric distortion tensor at the lowest measured period. 
This estimation of the galvanic distortion departs from the common assumption to consider 1D or 2D regional structures and can be applied for general 3D subsurfaces. 
We demonstrate exemplarily with an explicit formulation how our hypothesis can be used to recover the galvanic electric anisotropic distortion for 2D subsurfaces, which was, until now, believed to be indeterminable for 2D data. 
Moreover, we illustrate the Amplitude Tensor as a mapping tool and we compare it to the Phase Tensor with both synthetic and real data examples. 
Lastly, we argue that the Amplitude Tensor can provide important non-redundant amplitude information to Phase Tensor inversions.

\begin{center}\rule{12cm}{0.02cm}\end{center}
\end{abstract}
}
%
%


\section{Introduction}

Galvanic distortion of the electric field remains a recognised complication for Magnetotelluric (MT) 3D data and, if present, can lead to inversion artefacts and incorrect interpretation of obtained subsurface images \citep[][and references therein]{jones:2012b}. It is caused by near-surface inhomogeneities below the size of the experiment's resolution defined by its sampling rate \citep[][and references therein]{berdichevsky:1976,Larsen:1977,Bahr:1988,Groom:1989,Jiracek:1990,Jones:2011}. 
Two main methods have emerged to deal with it. 
The first one models either near surface inhomogeneities or their effects, and removes them or accounts for them \citep{berdichevsky:1976,deGroot:1991,deGroot:1995,baba:2005,miensopust:2010,patro:2011,avdeeva:2015}. 
The second one uses analytic solutions either based on structural assumptions of the impedance tensor, which must be 1D or 2D  
\citep{Larsen:1977,Bahr:1988,Groom:1989,Chave:1994,McNeice:2001,garcia:2003,Bibby:2005,jones:2012}, 
or requiring a sufficiently dense spatial distribution of sites for general 3D data \citep{utada:2000,garcia:2002}. 
However, both groups that approached 3D data reported instabilities when they analysed the impedance tensors of real measurements. Besides, the modelling of distortion effects has grown more popular than analytic solutions since \cite{Caldwell:2004} introduced the distortion free Phase Tensor, $\mathbf{\Phi}$. 

The Phase Tensor is defined as the matrix multiplication between the inverse real and the imaginary parts of the impedance tensor, $\mathbf{Z}=\mathbf{X}+i\mathbf{Y}$, 
 \begin{equation}
\mathbf{\Phi}=\mathbf{X}^{-1}\mathbf{Y}, 
\label{eq:PhaseTensor}
\end{equation} 
so that for a distorted impedance $\mathbf{Z_d}=\mathbf{CZ}$, the Phase Tensor 
$\mathbf{\Phi_d}=\mathbf{X}^{-1}\mathbf{C}^{-1}\mathbf{C} \mathbf{Y}$ 
is free of the distortion $\mathbf{C}$. It contains four algebraic reformulations of the eight components (four real and four imaginary) of the MT impedance tensor such that it is dimensionless \citep{Caldwell:2004}. Moreover, the Phase Tensor has been widely accepted and is used for distortion free dimensionality analysis \citep{Marti:2014}, distortion analysis \citep{Bibby:2005}, tensor parameter mapping \citep{Booker:2014} and subsurface model inversion \citep{patro:2012,Tietze:2015}. 
Although it only indicates changes of subsurface conductivity, the recovery of absolute values by inversion has been demonstrated successfully \citep{patro:2012,Tietze:2015} if information of several sites overlaps sufficiently and an appropriate a-priori model has been chosen. 
Nevertheless, the authors acknowledge that amplitude information would be a great asset for Phase Tensor inversions in order to relax these demanding requirements. For a more complete treatment of the Phase Tensor and galvanic distortion the reader is referred to \cite{weidelt:2012b} and \cite{jones:2012b}. 

In this work, in analogy to the Phase Tensor definition and with the motivation to obtain an observable that contains amplitude information but is independent to the Phase Tensor, we derive a new algebraic matrix decomposition that is applicable to the full MT impedance tensor and contains the Phase Tensor by introducing the Amplitude Tensor. 
We prove the existence and uniqueness of the proposed Amplitude Tensor and its mathematical independence to the Phase Tensor. Additionally, we compare the information present in the Phase and Amplitude Tensors in terms of parameters that are directly related to the geometry of the subsurface conductivity distribution, like the strike angle, dimensionality and anisotropy. 
Furthermore, we show that electric galvanic distortion is contained in the Amplitude Tensor 
and can be expressed as a matrix multiplication. We postulate the hypothesis that the regional inductive Amplitude Tensor parameters 
can be approximated with Phase Tensor parameters. This hypothesis allows the estimation of local distortion up to a single constant, the static shift factor, without any assumptions on the dimensionality of the regional impedance. We demonstrate this finding explicitly on a synthetic example where we estimate anisotropic distortion at a single site, a problem that cannot be solved with any other available solution to date. We argue that this hypothesis can be further exploited to recover all distortion parameters except for the static shift. Then, we illustrate how Amplitude and Phase Tensor parameters complement each other with respect to subsurface information inferred from map plots of real and synthetic data. Lastly, we discuss that the Amplitude Tensor is particularly suited to be included in a Phase Tensor inversion scheme, because of its unique properties of containing purely impedance amplitude information and being mathematically independent to the Phase Tensor.

\section{The Amplitude Tensor}
\subsection{Notation and Preliminaries}
For natural numbers $n$ we denote by $\mathbb{C}^{n\times n}$ and $\mathbb{R}^{n\times n}$ the complex and real $n\times n$ matrix spaces, respectively. Let $\mathbf{\mathbb{I}}$ be the identity matrix and for $\mathbf{N}\in \mathbb{R}^{n\times n}$ let $\mathbf{N}^T$ be the transpose of $\mathbf{N}$. 
For $\lambda_1,\dots,\lambda_n \in \mathbb{R}$ let $\mathbf{\Sigma}_\lambda$ be a diagonal matrix with $\lambda_1,\dots,\lambda_n$ on its diagonal.

\cite{Booker:2014} proposes a (phase) tensor parameterization based on geometric
considerations, which is closely related to the singular value decomposition (SVD) of the Phase Tensor and decomposes the Phase Tensor in its singular values and two rotations related to the left and right orthogonal SVD matrices. The virtue of this parameterization is that the rotations are related to two undistorted, important observables: the strike angle that is representative for the geologic strike and the skew angle that is associated with subsurface dimensionality. Let us adopt this parameterization for general matrices $\mathbf{X}\in
\mathbb{R}^{2\times2}$ in the following:
 \begin{equation}
\mathbf{X} =
 \begin{pmatrix}
X_{1,1}	&X_{1,2}      \\
X_{2,1} 	&X_{2,2}
\end{pmatrix}
=  \mathbf{R}(-\theta_X)   \begin{pmatrix} x_1 & 0\\0 & x_2 \end{pmatrix}
  \mathbf{R}(\psi_X)   \mathbf{R}(\theta_X)=\mathbf{V}_X \mathbf{\Sigma}_X \mathbf{W}_X^T,
\label{eq:TensorParameterBooker}
\end{equation} 
where $\theta_X=\arccos\left(\mathrm{Tr}\left(\mathbf{V}_X\right)/2\right)$ is 
the angle between the cartesian and the matrix ellipse coordinates that, for the Phase Tensor, represents the geologic strike angle in 2D cases \citep{Booker:2014};
 \begin{equation}
	\label{psi_X}
  \psi_X=\begin{cases}
    \arctan\frac{X_{1,2}-X_{2,1}}{X_{1,1}+X_{2,2}}, & \text{if $0\le|X_{1,2}-X_{2,1}| \le |X_{1,1}+X_{2,2}| \ne 0$},\\
    \arccot\frac{X_{1,1}+X_{2,2}}{X_{1,2}-X_{2,1}}, & \text{if $|X_{1,2}-X_{2,1}| > |X_{1,1}+X_{2,2}|$ and $\frac{X_{1,1}+X_{2,2}}{X_{1,2}-X_{2,1}}\ge0$},\\
    \arccot\frac{X_{1,1}+X_{2,2}}{X_{1,2}-X_{2,1}}-\pi, & \text{if $|X_{1,2}-X_{2,1}| > |X_{1,1}+X_{2,2}|$ and $\frac{X_{1,1}+X_{2,2}}{X_{1,2}-X_{2,1}}<0$},
  \end{cases}
\end{equation} 
is the normalised matrix skew angle \citep{Booker:2014}; $x_1$ and $x_2$, in the diagonal matrix $\mathbf{\Sigma}_X$, are the singular values of $\mathbf{X}$; and $\mathbf{V}_X=\mathbf{R}(-\theta_X)$ and $\mathbf{W}_X=\mathbf{R}(-\psi_X-\theta_X)$ are the left and
right orthogonal SVD matrices, with the rotation by $\delta$ defined as:
 \begin{equation}
\mathbf{R} (\delta) = \begin{pmatrix} \cos\delta & \sin\delta\\
                            -\sin\delta & \cos\delta \end{pmatrix}.
\label{eq:RotMat}
\end{equation} 
The normalised matrix skew angle in \eqref{psi_X} ensures that the definition of $\hat{\psi}_X= \arctan\frac{X_{1,2}-X_{2,1}}{X_{1,1}+X_{2,2}}$ by \cite{Booker:2014} is defined for diagonal and anti-diagonal matrices $\mathbf{X}$ by expressing the limit of the arctangent function with arccotangent, viz.~$\lim_{y \to \pm 0} \arctan \frac{x}{y}=\pm\frac{\pi}{2}=\arccot\frac{y}{x}-\frac{\pi}{2}\pm\frac{\pi}{2}, \forall x>0$, and therewith, provides improved numerical stability for $\psi_X>45^\circ$.

\subsection{Amplitude-Phase Decomposition}
Let us introduce a new Amplitude-Phase decomposition of the impedance tensor as a generalisation of the polar form of the complex numbers, represented by a real-valued amplitude and phase, for complex matrices. The Amplitude-Phase decomposition is a new, to the best knowledge of the authors, mathematical decomposition of a complex matrix in real-valued Phase and Amplitude matrices, which is general for any complex matrix $\mathbf{M}\in\mathbb{C}^{n\times n}$.
We proof the existence of such a generalization and argue for the uniqueness of a principal solution in appendix \ref{app:proofs}. We derive this formulation in analogy to the polar form of complex numbers
 \begin{equation}
z = \varrho   \exp{(i \varphi)}=
    \varrho   (\cos{\varphi}+i\sin{\varphi}),
\label{eq:polForm}
\end{equation} 
using the property that the phase, $\varphi$, and the amplitude, $\varrho$, are real-valued, and that the Pythagorean identity: 
 \begin{equation}
\cos^2{\varphi}+\sin^2{\varphi}=1,
\end{equation} 
is satisfied. 

Thus, let us represent $\mathbf{M}\in \mathbb{C}^{n\times n}$ in terms of two real-valued matrices, $\mathbf{P}$ (read capital greek letter Rho) and $\mathbf{\Phi}$, associated with the amplitude ($\varrho$) and phase ($\varphi$) of the polar form in \eqref{eq:polForm}, with defined functions $c(\mathbf{\Phi})$, $s(\mathbf{\Phi})$ and $e(\mathbf{\Phi})$ analogue to $\cos\varphi$, $\sin\varphi$ and $\exp{(i\varphi)}$:
 \begin{equation}
\mathbf{M}=\mathbf{P}e(\mathbf{\Phi})=\mathbf{P}(c(\mathbf{\Phi})+is(\mathbf{\Phi})).
\label{eq:phaseFun}
\end{equation} 
In particular, the sum of squares of the amplitude independent terms shall be unity in analogy to the Pythagorean identity:
 \begin{equation}
c(\mathbf{\Phi})  c(\mathbf{\Phi})^T + s(\mathbf{\Phi})  s(\mathbf{\Phi})^T=\mathbb{I}.
\label{eq:trigIdent}
\end{equation} 
The definition of $\mathbf{\Phi}$ in \eqref{eq:PhaseTensor}, the decomposition in \eqref{eq:phaseFun} and the constraint in \eqref{eq:trigIdent} yield solutions for $e$ such that a real-valued amplitude matrix $\mathbf{P}$ always exists if $e(\mathbf{\Phi})$ is invertible. Further, it can be argued that one fundamental solution for $\mathbf{P}$ exists where $e(\mathbf{\Phi})$ shares SVD properties with the unique phase matrix (see appendix \ref{app:proofs}).

\subsection{Definition of the Amplitude Tensor}
Let us represent the invertible, complex-valued $2\times2$ MT impedance tensor,
$\mathbf{Z}=\mathbf{X}+i\mathbf{Y}$, using \eqref{eq:phaseFun} and constraint by \eqref{eq:trigIdent} as a
multiplication of a real-valued Amplitude Tensor, $\mathbf{P}$, with a function, $e$, of the
invertible, real-valued Phase Tensor, $\mathbf{\Phi}=\mathbf{X}^{-1}\mathbf{Y}$. Then, we obtain the following fundamental solutions (see appendix \ref{app:proofs}) for $c(\mathbf{\Phi})$, $s(\mathbf{\Phi})$ and $e(\mathbf{\Phi})$:
 \begin{equation}
c(\mathbf{\Phi}) = \left(\mathbb{I}+\mathbf{\Phi\Phi}^T\right)^{-\frac{1}{2}},\qquad
s(\mathbf{\Phi}) =\left(\mathbb{I}+\mathbf{\Phi\Phi}^T\right)^{-\frac{1}{2}} \mathbf{\Phi},\qquad
e(\mathbf{\Phi})=c(\mathbf{\Phi})+is(\mathbf{\Phi}),
\label{eq:SolCSE}
\end{equation} 
where the superscript $-\frac{1}{2}$ denotes the matrix inverse of the matrix square root. 
Hence, as simply follows from \eqref{eq:phaseFun}, an unique and real-valued Amplitude Tensor can be defined as:
 \begin{equation}
\mathbf{P} =\mathbf{Z}  (e(\mathbf{\Phi}))^{-1}.
\label{eq:ampTen}
\end{equation} 

\subsection{Mathematical Relation to the Phase Tensor}
\label{sec:AP_relation}
From the definitions of the Phase Tensor in \eqref{eq:PhaseTensor} and the Amplitude-Phase Decomposition in \eqref{eq:phaseFun} it is seen directly that the Phase Tensor is mathematically unrelated to the Amplitude Tensor:
 \begin{equation}
\mathbf{\Phi}=\mathbf{X}^{-1}\mathbf{Y}=(\mathbf{P}c(\mathbf{\Phi}))^{-1}\mathbf{P}s(\mathbf{\Phi})=(c(\mathbf{\Phi}))^{-1}\mathbf{P}^{-1}\mathbf{P}s(\mathbf{\Phi})=(c(\mathbf{\Phi}))^{-1}s(\mathbf{\Phi}),
\end{equation} 
and therefore, we argue that the Amplitude and Phase Tensors must 
contain distinct, complementary information. The potential of the proposed Amplitude-Phase decomposition lies just in this property of mathematical independence between both tensors since it guarantees that any similarity between them cannot originate from the mathematical construction of the decomposition but must be due to the underlying physics that result in the present impedance tensor. This result contrasts with alternative decompositions proposed in the literature that involve the Phase Tensor in combination with either the real or imaginary part of the impedance, i.e.~$\mathbf{Z}=\mathbf{X}(\mathbb{I}+i\mathbf{\Phi})=\mathbf{Y}(\mathbf{\Phi}^{-1}+i\mathbb{I})$, because in those ones there is no mathematical independence between the decomposed tensors since the Phase Tensor itself is always derived from the other.

Let us note here that we denote $\mathbf{P}$ as the MT Amplitude Tensor, because the MT Phase Tensor, $\mathbf{\Phi}$, contains only impedance phase information and $\mathbf{P}$ is, on one hand, mathematically independent to $\mathbf{\Phi}$ and thus, devoid of phase information, and, on the other hand, forms with $\mathbf{\Phi}$ a complete decomposition of the MT impedance tensor, so that all impedance amplitude information must be contained in $\mathbf{P}$. 

\subsection{Rotational Invariants}
The MT impedance can be represented by seven rotational invariant parameters and one rotational dependent parameter, the phase strike angle \citep{weaver:2006,lilley:2012}. This number of rotational invariant parameters must be preserved after the Amplitude-Phase decomposition. The Amplitude and Phase Tensors, parameterised by \eqref{eq:TensorParameterBooker}, contain a rotationally dependent strike angle and three rotational invariants: the skew angle and two singular values. Then, since the impedance only features one rotational variant, the phase and amplitude strike angles must be related by another rotational invariant. This one can be deduced by parameterising the Amplitude and Phase Tensors in \eqref{eq:phaseFun} with \eqref{eq:SolCSE} and using \eqref{eq:TensorParameterBooker}:
 \begin{equation}
\mathbf{Z}=\mathbf{P}e(\mathbf{\Phi})=\mathbf{R}(-\theta_\Phi)\mathbf{R}(-(\theta_P-\theta_\Phi))\mathbf{\Sigma}_P
\mathbf{R}(\psi_P)\mathbf{R}(\theta_P-\theta_\Phi)(\mathbf{\Sigma}_c+i\mathbf{\Sigma}_s\mathbf{R}(\psi_\Phi))\mathbf{R}(\theta_\Phi),
\label{eq:Z2APdecomposed}
\end{equation} 
where the subscript is used in all the expression to indicate if it is an Amplitude or a Phase Tensor parameter, but in $\mathbf{\Sigma}_c$ and $\mathbf{\Sigma}_s$, which are $\mathbf{\Sigma}_c=\left(\mathbb{I}+\mathbf{\Sigma}_\Phi\mathbf{\Sigma}_\Phi^T\right)^{-\frac{1}{2}}$ and $\mathbf{\Sigma}_s=\mathbf{\Sigma}_c\mathbf{\Sigma}_\Phi$, containing the phase singular values.
Note, that the phase strike angle rotation, $\mathbf{R}(\theta_\Phi)$, in \eqref{eq:Z2APdecomposed} brackets the seven other parameters and therefore, any coordinate rotation applied to the impedance, i.e.~a counter-clockwise coordinate rotation by angle $\alpha$ that yields $\mathbf{Z}_\mathrm{rot}=\mathbf{R}(-\alpha)\mathbf{ZR}(\alpha)$, can only affect the phase strike angle, i.e.~$\theta_{\Phi,\mathrm{rot}}=\theta_\Phi+\alpha$. Hence, $\theta_\Phi$ is the only rotationally dependent parameter and the strike angle difference $\theta_P-\theta_\Phi$ is the seventh rotational invariant in the parameterization of the Amplitude-Phase decomposition. 

\subsection{Properties According to Different Dimensionality Cases}
\label{sec:AmpDim}
\begin{figure}[t]\centering
		\noindent\includegraphics[width=.48\textwidth]{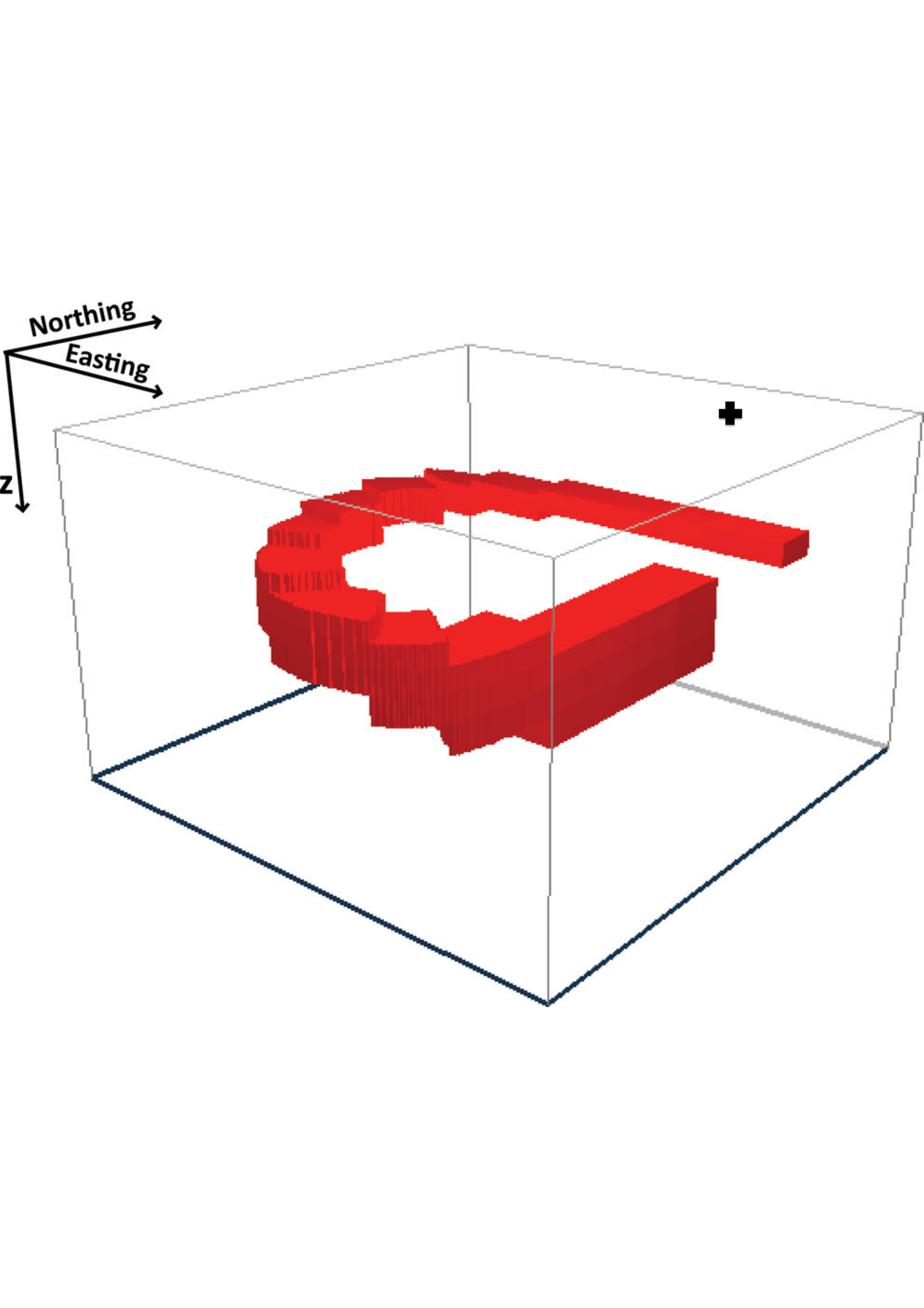}
	\caption{Dublin Secret Model 1 (DSM1), the red body represents a $1\,\Omega m$ conductive structure embedded in a $100\,\Omega m$ background. Station $A09$, marked by a black cross, is situated at the surface in the North-East corner on top of both, the shallowest and the deepest part of the conductor. Illustration adopted and modified from \cite{Miensopust:2013}. 	}
	\label{fig:DSM1model}
\end{figure}
\begin{figure}[t]\centering
		a) Period-scaled real part of impedance\\
		\noindent\includegraphics[width=.58\textwidth]{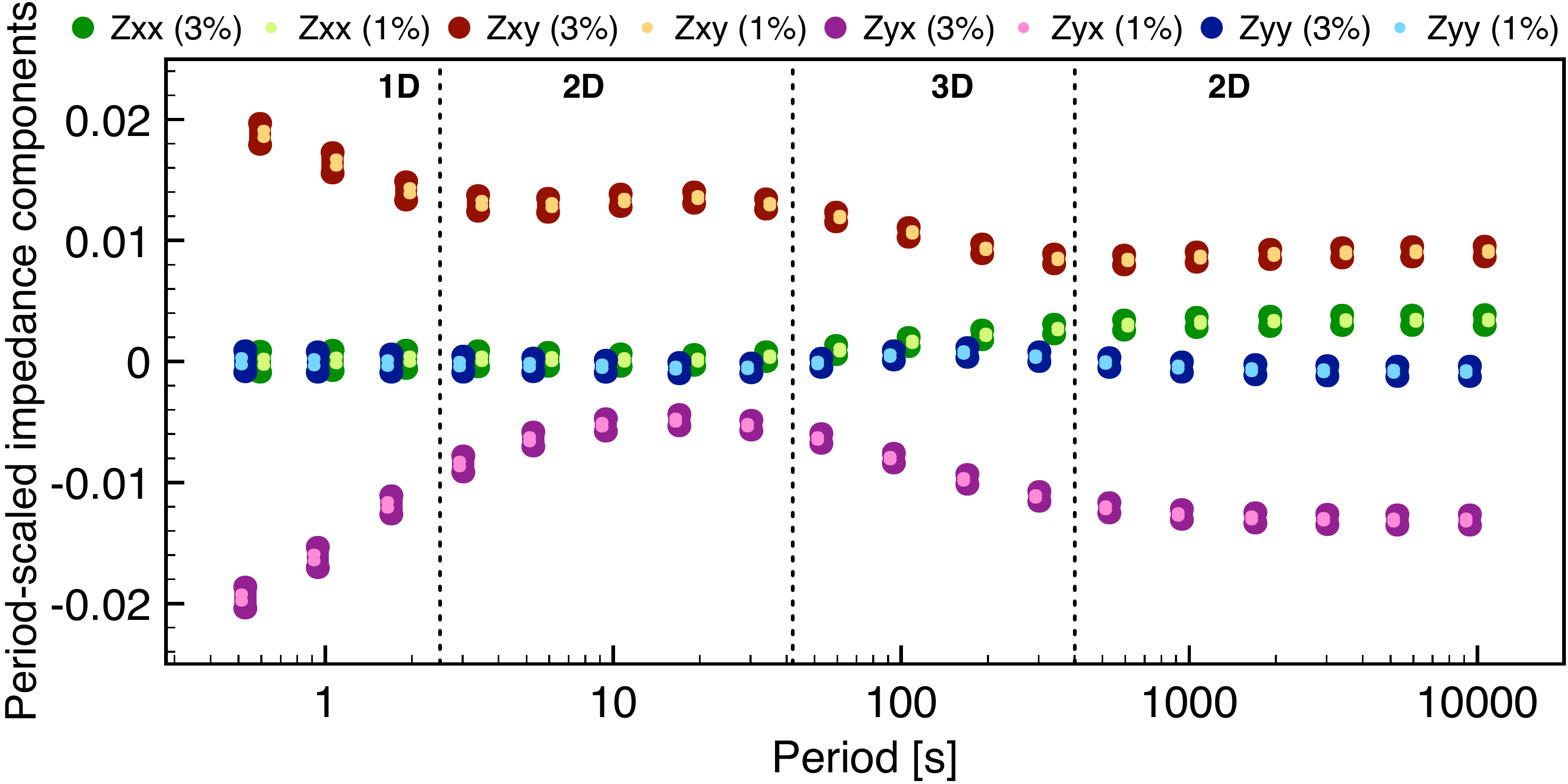}\\
		b) Period-scaled imaginary part of impedance\\
		\noindent\includegraphics[width=.58\textwidth]{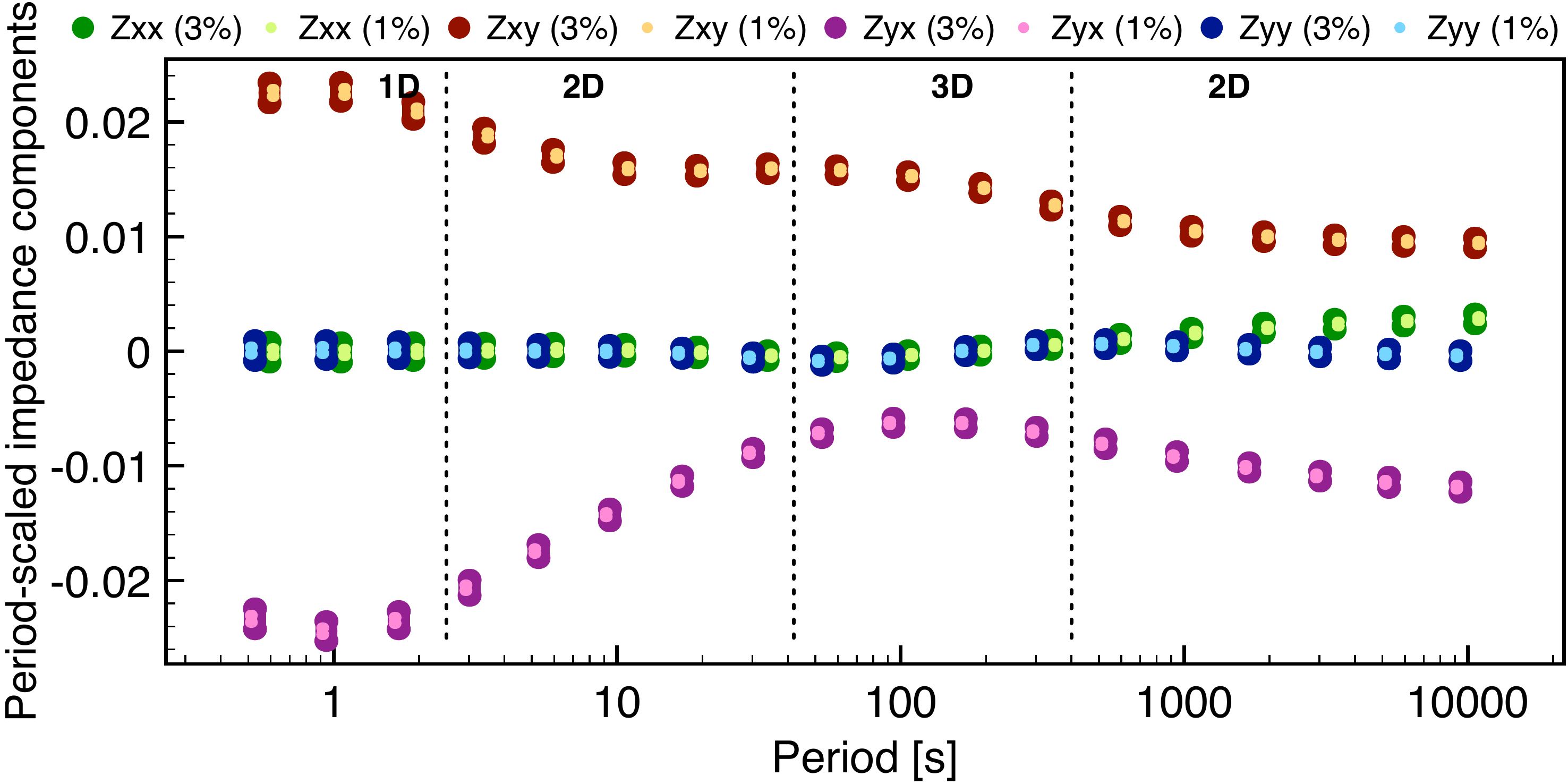}
	\caption{Impedance data of site \emph{A09} of the DSM1 \citep{Miensopust:2013}. A confidence limit of $1\%$ and $3\%$ was added to the impedance based on the impedance determinants square root. Periods of different data are slightly shifted for comparison's sake.}
	\label{fig:exampleImpedances}
\end{figure}
\begin{figure}[p]
		\centering
		a) Normalised skews\\
		\noindent\includegraphics[width=.58\textwidth]{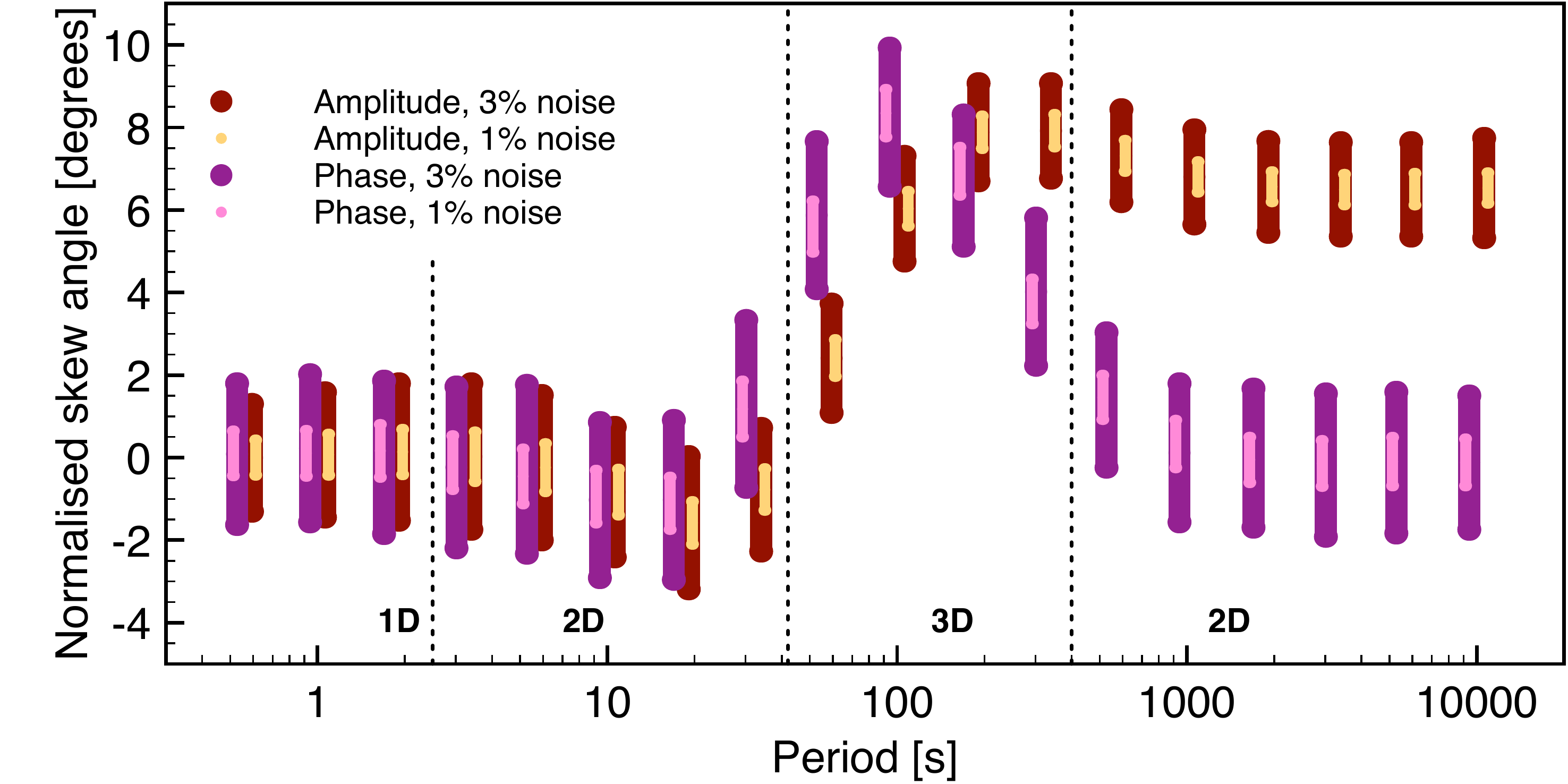}\\
		b) Strike angles\\
		\noindent\includegraphics[width=.58\textwidth]{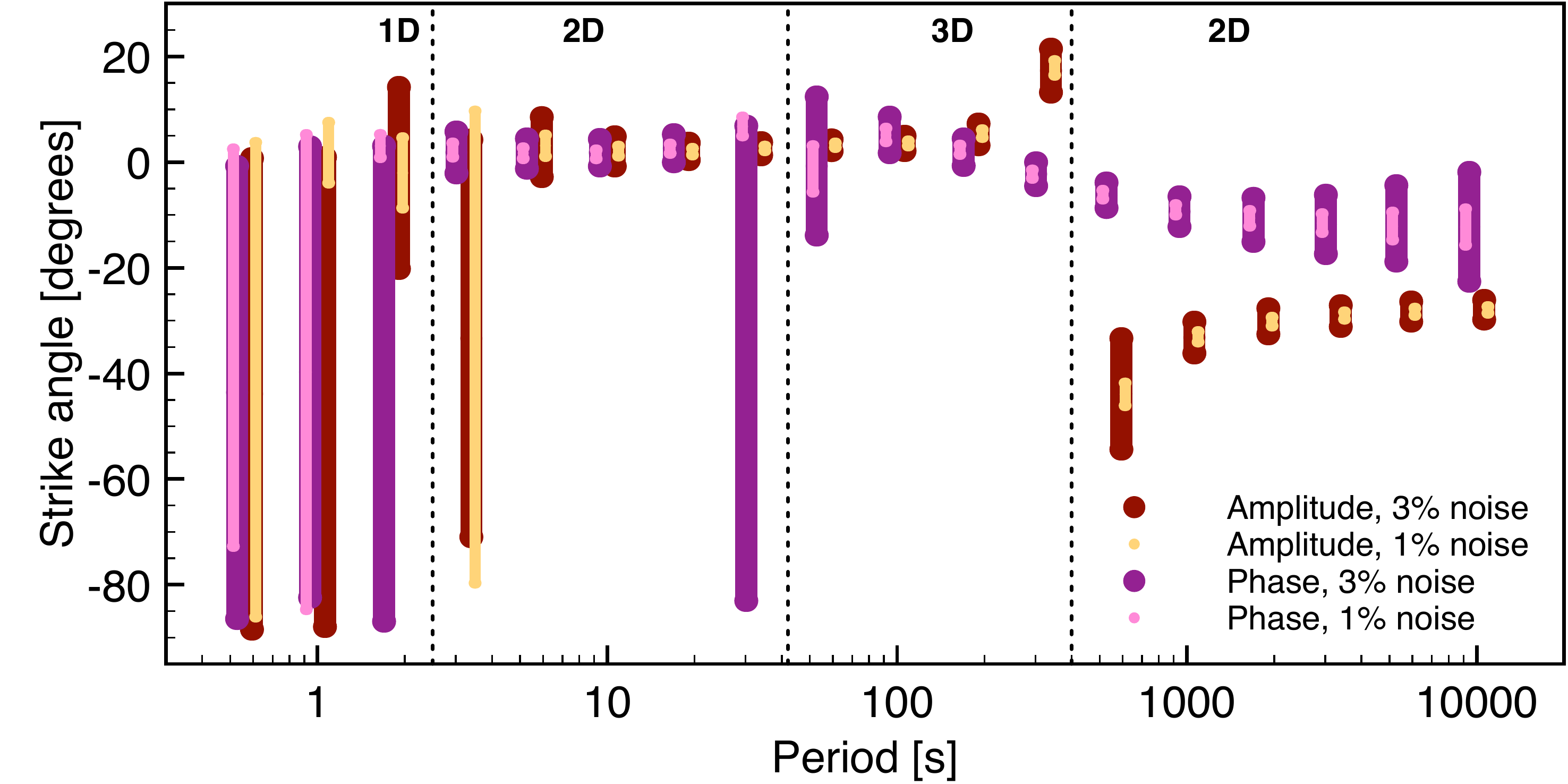}\\
		c) Period-scaled Amplitude Tensor singular values\\
		\includegraphics[width=.58\textwidth]{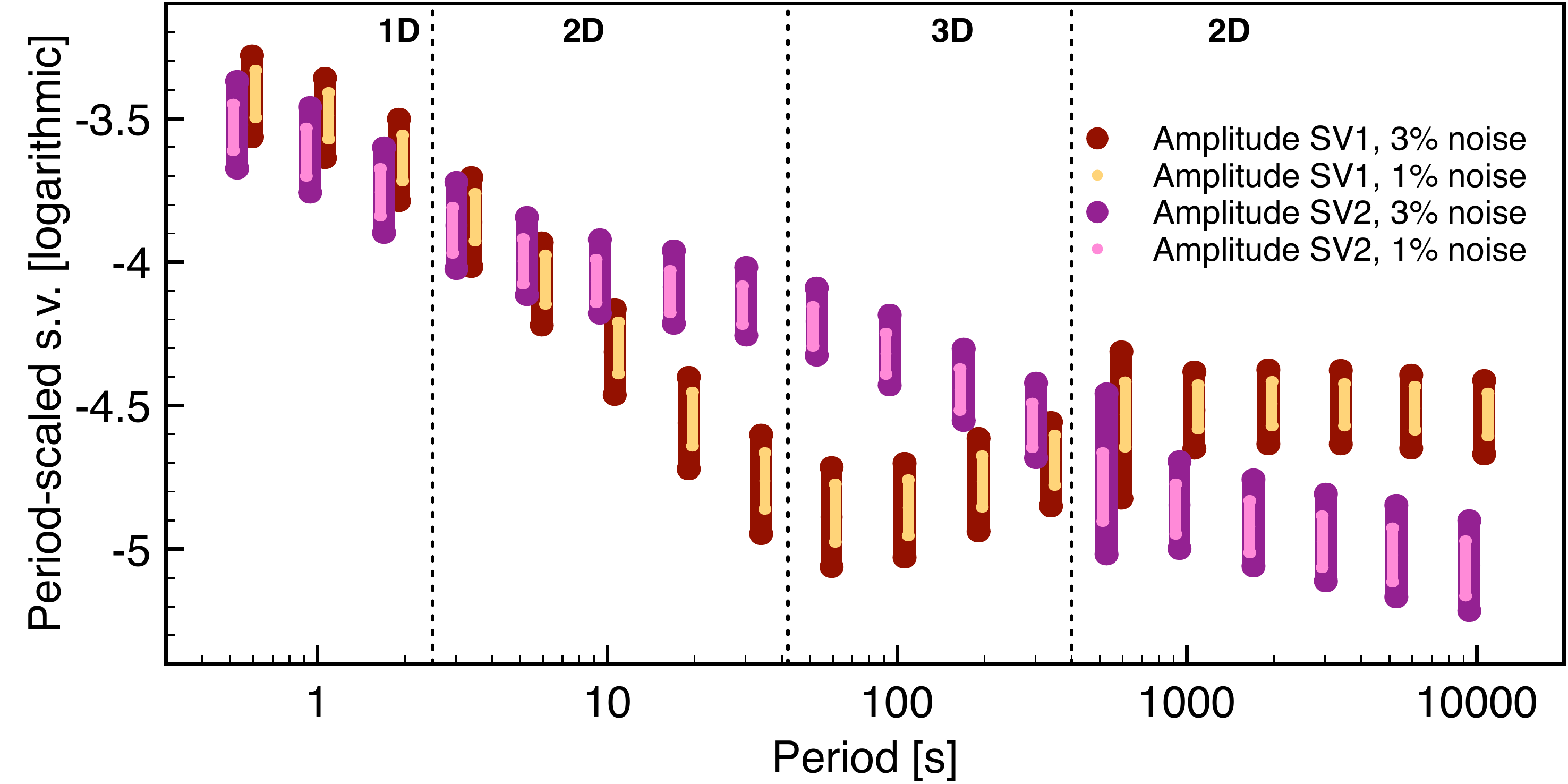}\\
		d) Phase Tensor singular values\\
		\includegraphics[width=.58\textwidth]{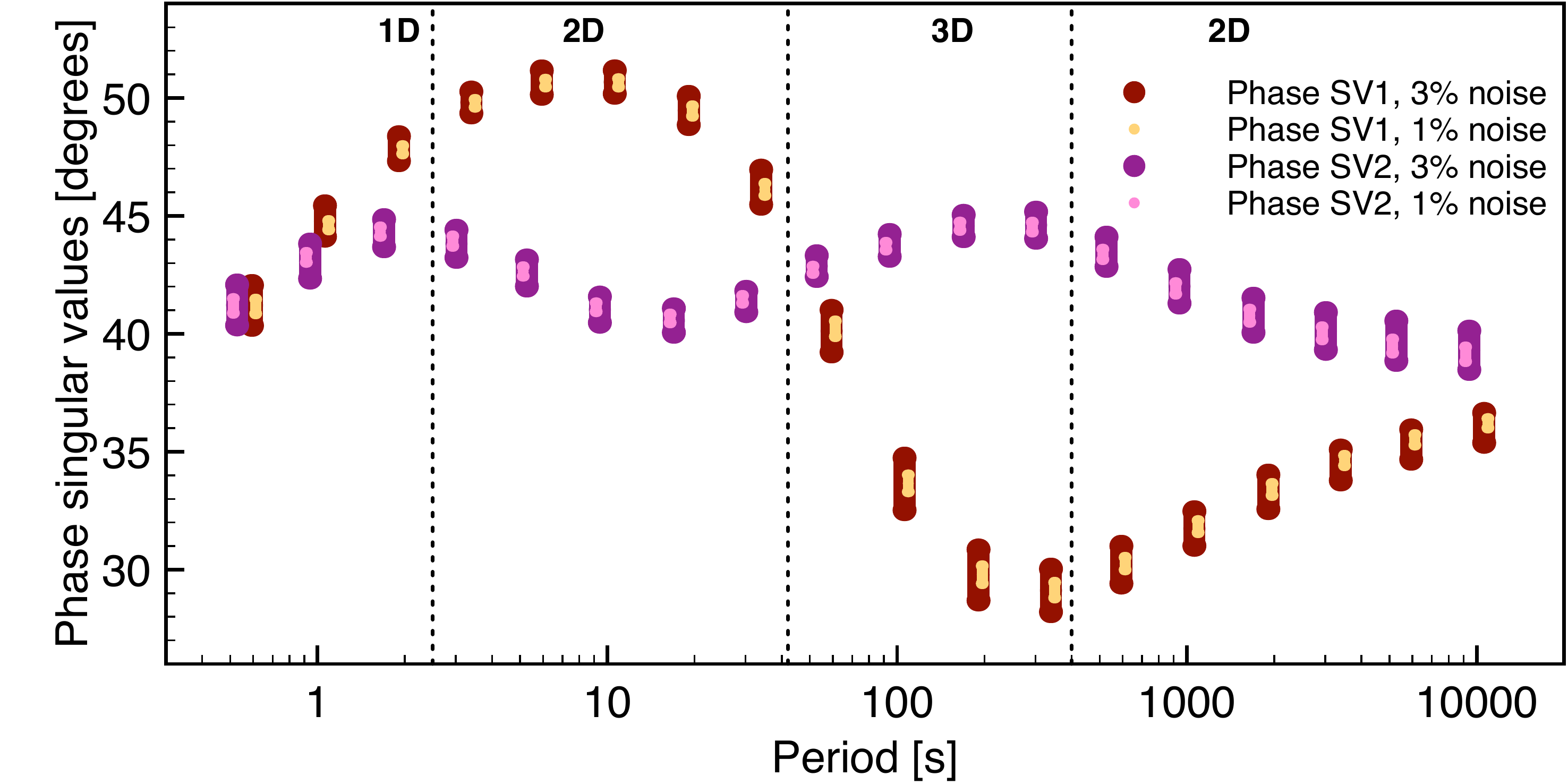}	
	\caption{Impedance data of site \emph{A09} of the DSM1 \citep{Miensopust:2013} are decomposed into Phase and Amplitude Tensor skew angles, strike angles and singular values. Note that the amplitude skew angle is normalised to $\psi_P^{\mathrm{norm}}=90^{\circ}-\psi_P$ (refer to text for details). To illustrate the statistical stability of the decomposition, a confidence limit of $1\%$ and $3\%$ was added to the impedance based on the impedance determinants square root. Periods of different data are slightly shifted for comparison's sake.}
	\label{fig:exampleAmpPhaseParam}
\end{figure}
In the following, we analyse the Amplitude and Phase Tensor representations for different dimensionality cases, considering an isotropic subsurface in 1D, 2D, 3D and in the special case of a reduction of dimensionality with increasing depth, i.e.~a 2D structure beneath a 3D anomaly. To illustrate our analysis we use a data set from the Dublin 3D Modelling and Inversion Workshops \citep{Miensopust:2013}. These workshops provided a number of freely available, large synthetic datasets, which are optimal for testing algorithms and ideas, not only restricted to modelling and inversion. We consider data of 
site $A09$ of the Dublin Secret Model One (DSM1). We chose DSM1, because the data are: (i) noise-free, (ii) distortion-free and (iii) contain clearly a 3D subsurface model with (iv) a homogenous near surface. In this data set, site $A09$ was chosen, because it is situated on top of two differently oriented 2D structures at different depths in an homogenous background, therefore this site is truly 3D but still allows to investigate the effects that correspond to these features. The model is represented in Figure \ref{fig:DSM1model}, where the position of site $A09$ is indicated. 
The 
impedance data of site $A09$ are plotted in Figure \ref{fig:exampleImpedances} and have been parameterised according to \eqref{eq:TensorParameterBooker} into the parameters: amplitude and phase skew angles, strike angles and singular values, which are plotted in Figure \ref{fig:exampleAmpPhaseParam}. We assumed confidence limits for the impedance tensor of $1\%$ and $3\%$ of the square root of its determinant values at each period to show the effect of impedance uncertainty on the tensor parameters resulting from the decomposition. In the decomposition algorithm, the impedance tensor components have been sampled assuming a gaussian distribution with the standard deviation according to the confidence intervals, so that each impedance sample decomposes into a sample for the Amplitude and Phase Tensor parameters. We use the median and the median absolute deviation to the median of the parameter distributions to provide robust estimates of mean and standard deviation of the decomposition parameters, respectively, which improves real data estimates when the exact noise distributions are unknown. 

\subsubsection{One-dimensional Earth Case}
In the 1D case, the MT impedance tensor and the Phase Tensor simplify to:
 \begin{equation}
\mathbf{Z}_{1D}=
 \begin{pmatrix}
0	&z_{1D}      \\
-z_{1D} 	&0
\end{pmatrix},\quad\mathrm{and}\quad
\mathbf{\Phi}_{1D}=
 \begin{pmatrix}
 \phi_{1D}	&0      \\
0	&\phi_{1D}
\end{pmatrix}.
\end{equation} 
Denoting the conjugate transpose of a matrix $\mathbf{M}$ by $\mathbf{M}^H$ and considering that $e(\mathbf{\Phi}_{1D})(e(\mathbf{\Phi}_{1D}))^{H}=\mathbf{\mathbb{I}}$, it follows from \eqref{eq:ampTen} that $\mathbf{Z}_{1D}\mathbf{Z}_{1D}^H=\mathbf{P}_{1D}\mathbf{P}_{1D}^T$. Then, considering the $1D$ impedance given by $z_{1D}=|z_{1D}|\exp(i\arctan\phi_{1D})$, where $\rho_{1D}=|z_{1D}|$, the Amplitude Tensor can be expressed by the apparent resistivities $\rho_{1D}^{\mathrm{app}}$:
 \begin{equation}
\mathbf{P}_{1D}=
 \begin{pmatrix}
0	&\rho_{1D}      \\
-\rho_{1D} 	&0
\end{pmatrix}=
	\begin{pmatrix}
		0&\sqrt{\omega\mu_0\rho_{1D}^{\mathrm{app}}}\\
		-\sqrt{\omega\mu_0\rho_{1D}^{\mathrm{app}}}&0
	\end{pmatrix},
\end{equation} 
with angular frequency $\omega=2\pi/T$, period $T$ and magnetic permeability of free space $\mu_0$. On the other hand, it can be seen from \eqref{psi_X} that, in the 1D case, the skew angle is $\psi_P=\frac{\pi}{2}$ at all frequencies. This corresponds to $\psi_\Phi=0$, because the 1D Amplitude Tensor is anti-diagonal like the impedance. Then, for simplicity, we define the normalised amplitude skew angle as:
 \begin{equation}
	\psi_P^{\mathrm{norm}}=\frac{\pi}{2}-\psi_P,
\end{equation} 
so that it equals zero for 1D and 2D situations in analogy to the phase skew angle. In other words, the Amplitude Tensor can be represented by a unity matrix rotated by $\psi_P=\frac{\pi}{2}$ and multiplied by the scalar $|z_{1D}|$ and therefore, the singular values
$\rho_1$ and $\rho_2$ are equal. 
In Figure \ref{fig:exampleAmpPhaseParam}, the Amplitude Tensor parameters at the three shortest periods exemplify the properties we have discussed for the 1D case. 

\subsubsection{Two-dimensional Earth Case}
In the 2D case, the MT impedance tensor and the Phase Tensor simplify to:
 \begin{equation}
\mathbf{Z}_{2D}=
 \begin{pmatrix}
0	&z_{TE}      \\
-z_{TM} 	&0
\end{pmatrix}\quad\mathrm{and}\quad
\mathbf{\Phi}_{2D}=
 \begin{pmatrix}
 \phi_{TE}	&0      \\
0	&\phi_{TM}
\end{pmatrix}.
\end{equation} 
Besides, considering $e(\mathbf{\Phi}_{2D})(e(\mathbf{\Phi}_{2D}))^{H}=\mathbf{\mathbb{I}}$, it follows from \eqref{eq:ampTen} that $\mathbf{Z}_{2D}\mathbf{Z}_{2D}^H=\mathbf{P}_{2D}\mathbf{P}_{2D}^T$. Then, using that the 2D impedances for the transverse electric (TE) and transverse magnetic (TM) modes are given by $z_{m}=|z_{m}|\exp(i\arctan\phi_{m})$, with $\rho_{m}=|z_{m}|$ and $m$ denoting either the TE or TM mode, the Amplitude Tensor can be expressed by the apparent resistivities $\rho_{m}^{\mathrm{app}}$:
 \begin{equation}
\mathbf{P}_{2D}=
 \begin{pmatrix}
0	&\rho_{TE}      \\
-\rho_{TM} 	&0
\end{pmatrix}=
	\begin{pmatrix}
		0&\sqrt{\omega\mu_0\rho_{TE}^{\mathrm{app}}}\\
		-\sqrt{\omega\mu_0\rho_{TM}^{\mathrm{app}}}&0
	\end{pmatrix}.
\end{equation} 
Again, as in the 1D case, it can be seen from \eqref{psi_X} that the skew angle is $\psi_P=\frac{\pi}{2}$. It is also easy to see that the strike angle parameters of the Amplitude and Phase Tensors must be equal in this case, since an arbitrary oriented 2D impedance, $\mathbf{Z}_{\mathrm{2D}, \mathrm{rot}}$, can always be expressed as a product of its local 2D form, $\mathbf{Z}_\mathrm{2D}$, and a coordinate rotation by the strike angle $\theta_z$: $\mathbf{Z}_{\mathrm{2D},\mathrm{rot}}=\mathbf{R}(\theta_z)\mathbf{Z}_\mathrm{2D}\mathbf{R}(-\theta_z)$. Then, equalling the diagonal, local 2D form $\mathbf{Z}_\mathrm{2D}$ with \eqref{eq:Z2APdecomposed} 
 implies that the strike angle difference between Amplitude and Phase Tensor is $\theta_P-\theta_\Phi=0$ 
for 2D subsurfaces. Further, note that the singular values of Phase and Amplitude Tensors equal the impedance phase and amplitude values, respectively, of the TE and TM modes, similarly to the 1D case. 
In Figure \ref{fig:exampleAmpPhaseParam} the Amplitude Tensor parameters at the period range from $2.5$ to $40\,\mathrm{s}$ exemplify the discussed properties for a quasi 2D case. 
Note that in this work and in analogy to the interpretation of the Phase Tensor skew angle of \cite{Booker:2014}, we consider that a subsurface is 2D, if the confidence interval of the normalised amplitude skew angle lies between $-6^\circ$ and $6^\circ$.

\subsubsection{Three-dimensional Earth Case}
The change over period of all Amplitude Tensor parameters is the signature of a 3D subsurface. In Figure \ref{fig:exampleAmpPhaseParam}, the period range from $40$ to $400\,\mathrm{s}$ exemplifies the 3D Amplitude Tensor parameters. If some parameters are constant but the Amplitude Tensor cannot be categorised as either 1D or 2D by the criteria we discussed above, it indicates a situation that is discussed on a dedicated example in the following section. 

Note that in 3D situations the singular values of the Amplitude Tensor can also be defined as apparent resistivities but interpreting them in terms of subsurface structure is not as straight forward as for 1D or 2D cases because interpretation must include the amplitude skew and strike angles.

\subsubsection{High-dimensional Over Low-dimensional Subsurface}
\label{sec:3D2Dexample}
We refer to a high-dimensional subsurface when its dimensionality can be categorized as 2D or 3D and to a low-dimensional subsurface when it is 2D, 1D or an homogenous half-space. Note that the 2D case is present in both categories. The reason is to include situations in which a shallow 2D subsurface structure with a given strike orientation changes into a deeper 2D structure with a different strike angle. 

For clarity of exposition of this dimensionality case, let us take and discuss more extensively the impedance data of site $A09$, which we have already used to illustrate the other dimensionality cases  in Figure \ref{fig:exampleAmpPhaseParam}. 
First, for periods from $0.5\,\mathrm{s}$ to $2\,\mathrm{s}$, the model appears quite 1D, because the normalised amplitude and phase skew angles are both around zero. In the next few periods ($5-25\,\mathrm{s}$), both strike angles are established at
constant $0^\circ$ and the singular values of Amplitude and Phase begin to
spread out, which indicates the presence of a 2D structure with the
strike direction along East-West (or North-South due to the inherent $90^\circ$ strike ambiguity). At longer periods ($>25\,\mathrm{s}$), the
signal penetrates deeper and covers a larger volume recognising the three
dimensionality of the model, which results in large and different
values for the skew angles of Amplitude and Phase Tensors and in a differently
varying strike angle. 
At very long periods ($>1,\!500\,\mathrm{s}$), the phase skew angle has returned to a value below $6^\circ$, indicating that the response at those periods is quasi-two-dimensional. However, the deep phase strike angle is considerable different from the shallow phase strike angle, so that the strike angle cannot be assumed constant for the entire period range (according to the Phase Tensor) and, therefore, the data set should be considered 3D when interpreting all periods. The negligible phase skew angle at these periods indicates 
that the shallow anomaly (that caused the initial strike angle) acts as a galvanic distorter and is, at the longest periods, only present in the real-valued Amplitude Tensor. For this reason, the amplitude skew angle is not negligible but constant and represents the galvanic electric field distortion.

Based on the observations from this example, we can derive some general statements for the special situation of a shallow structure that is dimensionally incompatible (i.e.~3D or 2D with different strike angle) with the deeper low-dimensional structures. We state that in this special case, the Amplitude Tensor adopts a specific form at long periods, which is associated to the dimensionality of the deep structure. 
If the deep structure is 2D then, the amplitude strike angle, $\theta_P$, and the amplitude skew angle, $\psi_P$, are constant and at least one singular value, $\rho_1$ or $\rho_2$, is variable.
On the other hand, if the deep structure can be assumed 1D then, the behaviour of the Amplitude Tensor parameters will be the same as in the 2D case with the only difference of a constant value for the ratio $\rho_1/\rho_2$. Finally, if the deep structure can be assumed an homogeneous half-space, then the only difference to the previous cases is that the Amplitude Tensor singular values remain constant at their constant offset. 
Figure \ref{fig:exampleAmpPhaseParam} exemplifies, at periods from $1,\!500$ to $10,\!000\,\mathrm{s}$, the Amplitude Tensor parameters for the situation of a seemingly 2D deep structure under a differently oriented shallow 2D structure.

\section{Insights and Applications of the Amplitude Tensor}
\label{sec:IndGalAmp}
\subsection{Galvanic and Inductive Amplitude Tensors}
\label{sec:IndGalAmpTen}
The Amplitude Tensor contains the amplitude information of both, the electromagnetic induction processes and galvanic effects, since, as we have demonstrated, it contains all amplitude information of the MT impedance tensor. Further, it has been shown that, under reasonable assumptions \citep{groom:1992,Chave:1994}, for any structure whose response has reached the galvanic limit, the galvanic distortion of the electric field can be represented by a real matrix pre-multiplied to the regional impedance tensor: $\mathbf{Z}_d=\mathbf{CZ}^{\mathrm{reg}}$. Then, by construction and as can be seen in \eqref{eq:Z2APdecomposed}, the electrically distorted Amplitude Tensor must be $\mathbf{P}_d=\mathbf{CP}^{\mathrm{reg}}$. 
Given that the Amplitude Tensor contains galvanic and inductive information, we hypothesise that it can be separated generally in a product between two tensors representing the galvanic, $\mathbf{P}^{\mathrm{gal}}$, and the inductive, $\mathbf{P}^{\mathrm{ind}}$, amplitude contributions:
 \begin{equation}
\mathbf{P}=\mathbf{P}^{\mathrm{gal}}\mathbf{P}^{\mathrm{ind}}.
\label{eq:AmpGalInd}
\end{equation} 
The reason for the conjecture of this hypothesis is that the removal of the galvanic distortion $\mathbf{C}$ from the distorted impedance $\mathbf{Z}_d$ leaves the regional inductive response $\mathbf{Z}_{\mathrm{reg}}$ of the subsurface at any given period. Applying this logic to the Amplitude Tensor and period-wise associating $\mathbf{C}$ with $\mathbf{P}^{\mathrm{gal}}$ and $\mathbf{P}^{\mathrm{reg}}$ with $\mathbf{P}^{\mathrm{ind}}$ justifies the construction of \eqref{eq:AmpGalInd} and our proposed hypothesis. Note that $\mathbf{C}$ and $\mathbf{P}^{\mathrm{gal}}$ as well as $\mathbf{P}^{\mathrm{reg}}$ and $\mathbf{P}^{\mathrm{ind}}$ can only be associated period-wise and are not generally equal when considering an extended period range at once, because $\mathbf{P}^{\mathrm{gal}}$ is period dependent. 
For example, if longer period data is used to determine $\mathbf{C}$ and to correct shorter period impedance data, an overcorrection can occur that introduces systematic errors in the subsequent data interpretation.

\subsection{Approximating the Inductive Amplitude Tensor with the Phase Tensor}
\label{sec:ApproxIndAmp}
The inductive Amplitude Tensor is, by definition, devoid of electric galvanic information and represents the regional inductive amplitude response of the subsurface just like the Phase Tensor represents the phase response. 
This means in particular that both, the inductive Amplitude and Phase Tensors, respond to the same subsurface volume that is sensed by induction at each given period and therefore, their observed geometric parameters, strike angle, skew angle and macroscopic anisotropy, must represent the same subsurface conductivity distribution and geometry. The angular tensor parameters, strike and skew, are directly provided by the parameterisation \eqref{eq:TensorParameterBooker} and in appendix \ref{sec:logAniso} we motivate and define the measure of macroscopic anisotropy for the Amplitude and Phase Tensors as function of their respective singular values. 

Based on the idea that inductive Amplitude and Phase Tensors represent the inductive response of the same subsurface volume, we propose to approximate the parameters, strike angle, $\theta^{\mathrm{ind}}_P$, skew angle, $\psi^{\mathrm{ind}}_P$ and macroscopic anisotropy, $\rho^{\mathrm{ind}}_a$ (see appendix \ref{sec:logAniso}), of the inductive Amplitude Tensor with the respective values obtained from the Phase Tensor:
 \begin{equation}
  \widetilde{\mathbf{P}}^{\mathrm{ind}}\approx \mathbf{R}(-\theta_\Phi)\begin{pmatrix}\exp{(\ln\rho+\phi_a)}&0\\0&\exp{(\ln\rho-\phi_a)}\end{pmatrix}
  \mathbf{R}(\pi/2-\psi_\Phi)\mathbf{R}(\theta_\Phi),
  \label{eq:AmpInd}
\end{equation} 
with $\rho=\sqrt{\rho_1\rho_2}$. This proposition allows an approximate solution for the galvanic Amplitude Tensor with the decomposition given by \eqref{eq:AmpGalInd}:
 \begin{equation}
  \widetilde{\mathbf{P}}^{\mathrm{gal}}\approx \mathbf{P} \left[\widetilde{\mathbf{P}}^{\mathrm{ind}}\right]^{-1}.
  \label{eq:AmpGal}
\end{equation} 
The motivation for this proposition is that the galvanic Amplitude Tensor equals the galvanic electric distortion at the shortest measured periods. Therefore, the estimation of the galvanic Amplitude Tensor allows us to recover distortion parameters without the traditional assumptions on the regional impedance dimensionality in contrast to the work by \cite{Larsen:1977} and  \cite{Chave:1994}. This is particularly advantageous for the recovery of the anisotropic distortion parameter for which, until now, it was necessary to assume a regional 1D subsurface. 

In the following, we assess the performance of the proposition \eqref{eq:AmpInd} on an explicit example in which we estimate the electric galvanic anisotropic distortion assuming a regional 2D impedance. 
This 2D structural assumption of the impedance is necessary to derive an explicit formula for the anisotropic distortion parameter in appendix \ref{sec:logAniso} but is not inherent in \eqref{eq:AmpInd} and \eqref{eq:AmpGal}. We refer to a companion paper \citep{Neukirch:2017b} for an algorithm that recovers quantitatively, using \eqref{eq:AmpInd} and \eqref{eq:AmpGal}, the galvanic electric distortion parameters, including the anisotropic parameter and without any assumption on the dimensionality of the subsurface structure, which is, however, beyond the scope of this work due to the amount of required technical detail.

\subsubsection{Example: Recovering Anisotropic Distortion for 2D Impedances}
\label{sec:RecAniDis2D}
\begin{figure}[p]
		\centering
		a) Recovered anisotropy from the galvanic amplitude estimate with period-wise tensor singular values\\
		\includegraphics[width=.61\textwidth]{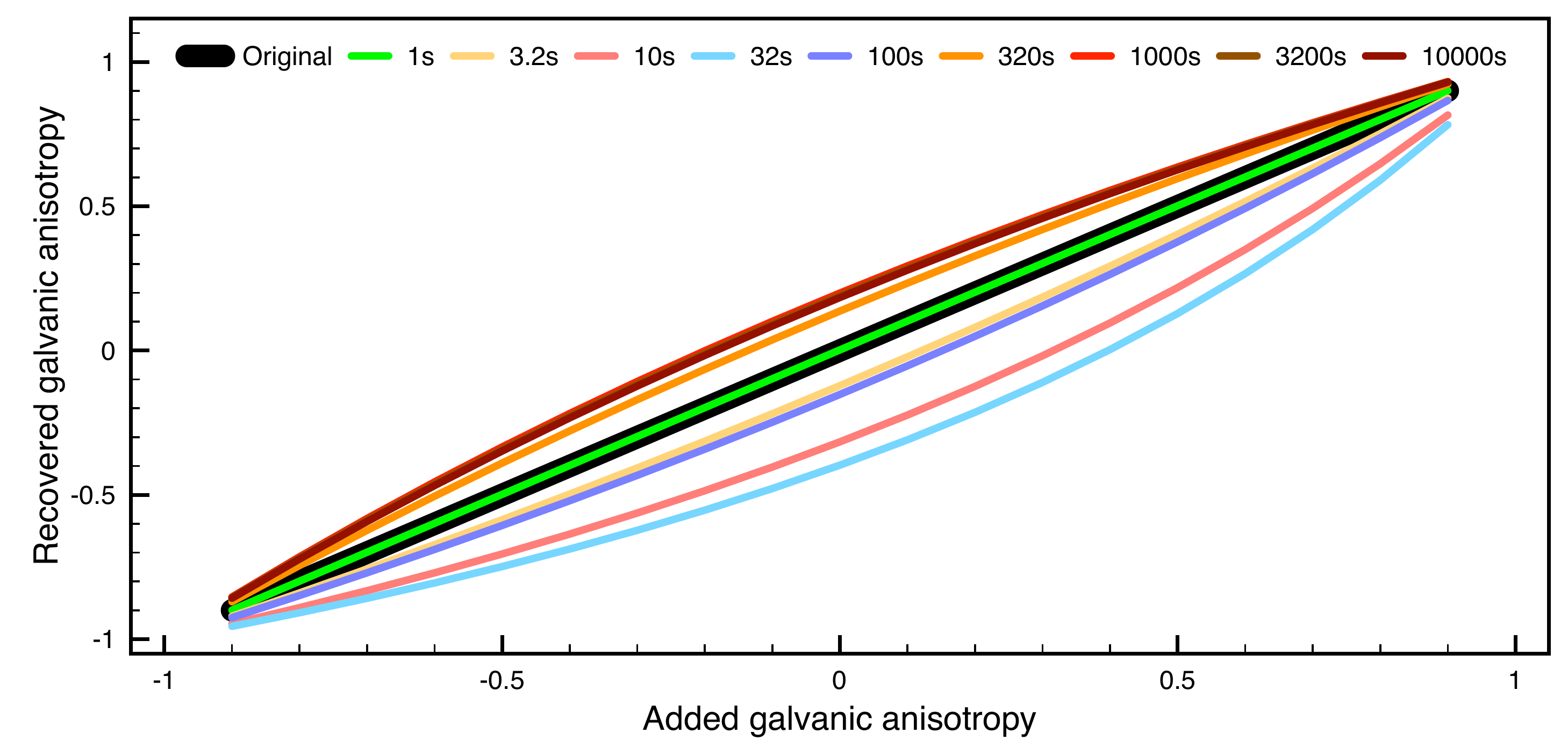}\\
		b) Recovered anisotropy from the galvanic amplitude estimate with two decade window average\\
		\includegraphics[width=.61\textwidth]{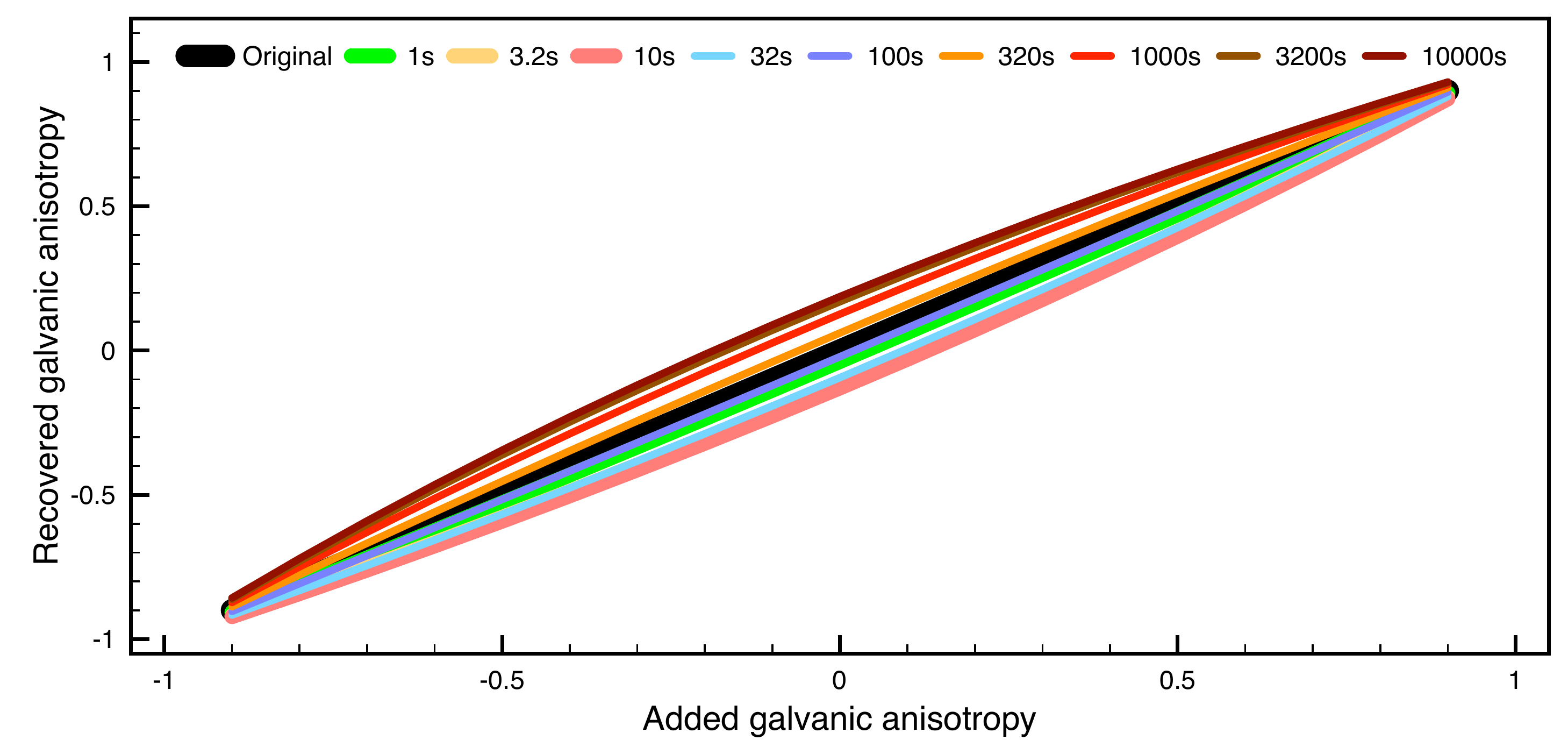}\\
		\centering
		c) Difference to added anisotropy (period-wise)\\
		\includegraphics[width=.61\textwidth]{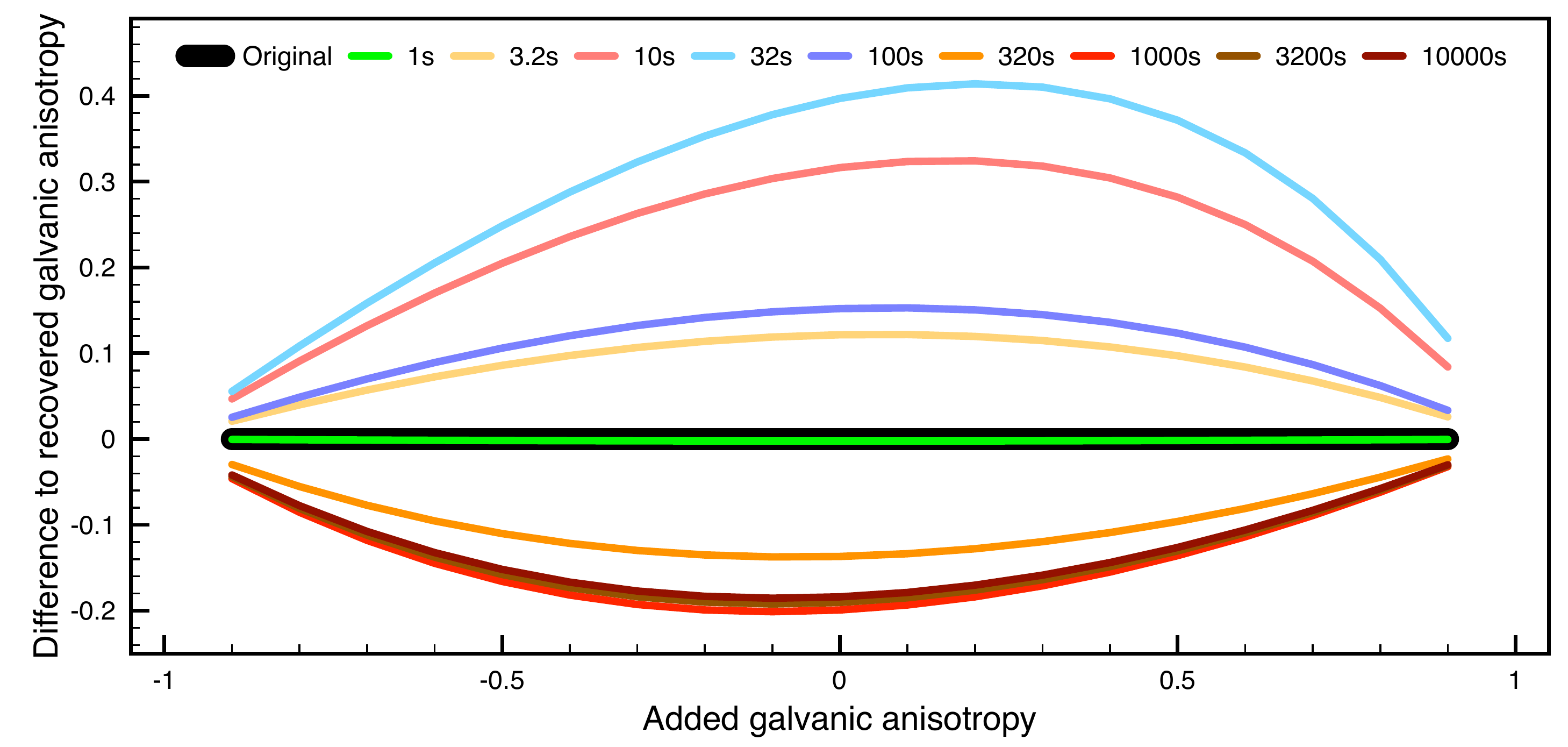}\\
		d) Difference to added anisotropy (two decade window average)\\
		\includegraphics[width=.61\textwidth]{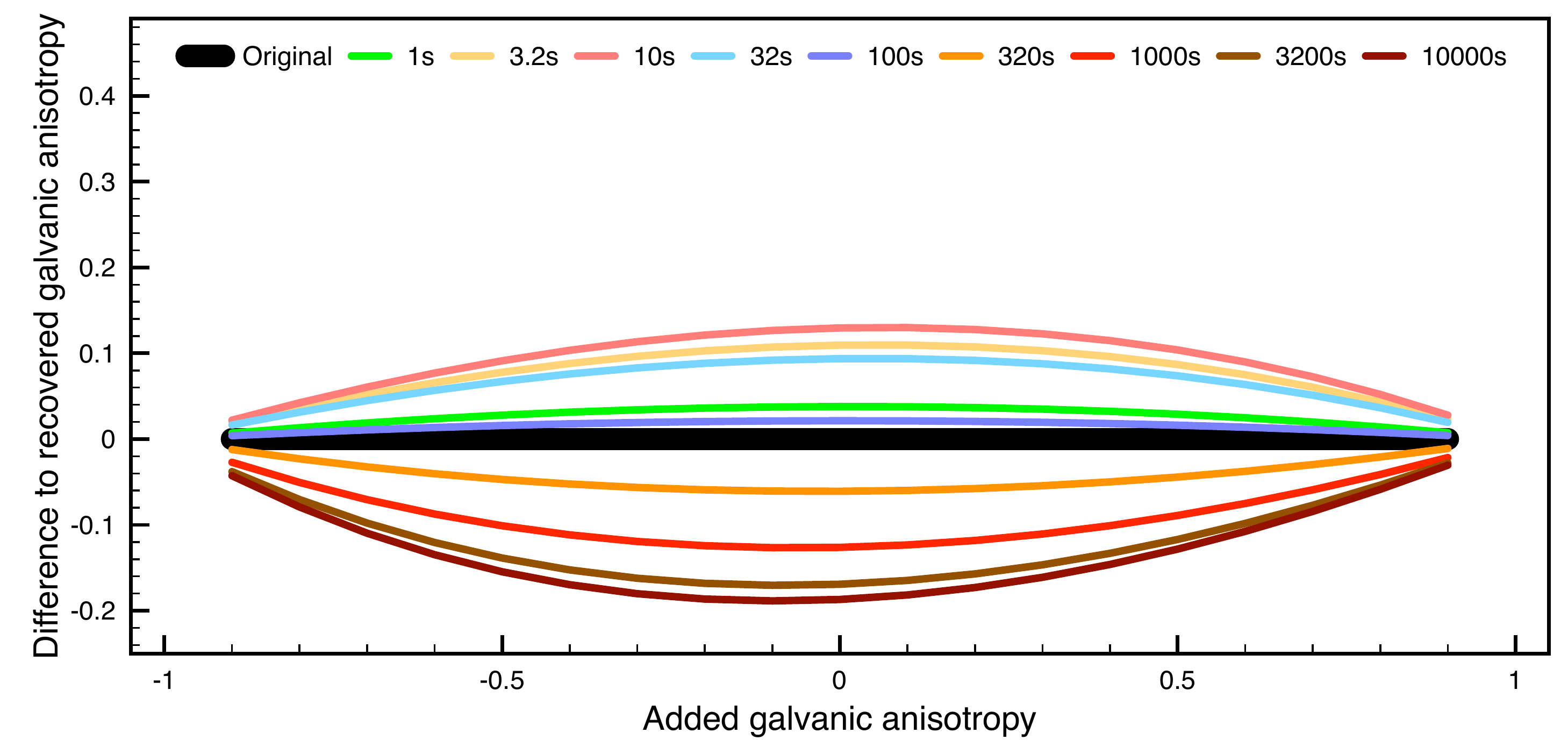}
	\caption{Recovery of the anisotropy parameter for different periods (colours) using two methods: period-wise anisotropy (a,c) and the moving average anisotropy with a two-decade gaussian window (b,d). Greenish (reddish/blueish) colour denotes a 1D (2D/3D) regime of the \emph{A09} data. Spectral averaging with the gaussian moving window of the tensor anisotropy parameters improves the galvanic amplitude estimate (b) and therewith reduces the bias of the recovered galvanic amplitude parameter (d).}
	\label{fig:galvanicAnisotropy}
\end{figure}
In state-of-the-art distortion analysis methods \citep{Groom:1989,Chave:1994,McNeice:2001,Bibby:2005}, anisotropy distortion is considered indeterminable for data that is not 1D isotropic on a sufficiently long period range. In this example, we want to show that, with the macroscopic anisotropy parameters we propose in appendix \ref{sec:logAniso}, it is possible to estimate anisotropic distortion as a first approach to overcome present difficulties in 2D impedance data interpretation. Estimating the anisotropic distortion with this methodology, we also implicitly show the validity of the approximation of $\mathbf{P}^\mathrm{ind}$ and $\mathbf{P}^\mathrm{gal}$ in \eqref{eq:AmpInd} and \eqref{eq:AmpGal}. 

We generate synthetic distorted impedance data, $\mathbf{Z}_d$, premultiplying an anisotropic distortion matrix $\mathbf{A}\in\mathbb{R}^{2\times 2}$ \citep{Groom:1989}, with an anisotropy parameter $a$ ranging from $-0.9$ to $0.9$, to the regional inductive impedance data, $\mathbf{Z}$, of the previous example (site $A09$, Figures \ref{fig:exampleImpedances} and \ref{fig:exampleAmpPhaseParam}):
 \begin{equation}
	\mathbf{Z}_d=\mathbf{AZ}=
	\begin{pmatrix}
		\frac{1-a}{\sqrt{1-a^2}}&0\\
		0&\frac{1+a}{\sqrt{1-a^2}}
	\end{pmatrix}
	\begin{pmatrix}
		Z_{xx}&Z_{xy}\\
		Z_{yx}&Z_{yy}
	\end{pmatrix}.
	\label{eq:DistImp}
\end{equation} 
We consider the impedance tensor data at each period as separate data in order to test 1D, 2D and 3D data (cf.~dimensionality of period sections with Figures \ref{fig:exampleImpedances} and \ref{fig:exampleAmpPhaseParam}).  
Besides, from the Amplitude-Phase decomposition of the distorted impedance, $\mathbf{Z}_d(\omega)$, into Amplitude and Phase Tensors and \eqref{eq:AmpGalInd}: 
 \begin{equation}
\mathbf{Z}_d(\omega)=\mathbf{P}_d(\omega)e(\mathbf{\Phi(\omega)})=\mathbf{P}^\mathrm{gal}(\omega)\mathbf{P}^\mathrm{ind}(\omega)e(\mathbf{\Phi}(\omega)).
	\label{eq:DistImpDeco}
\end{equation} 
Thus, from the comparison of \eqref{eq:DistImp} and \eqref{eq:DistImpDeco}, we need to compute $\mathbf{P}^\mathrm{gal}$ of the distorted data in order to show that we can recover the anisotropy parameter $a$, since $\mathbf{Z}$ denotes the regional inductive impedance. Then, we decompose the data $\mathbf{Z}_d(\omega)$ and we parameterise the resulting $\mathbf{\Phi}(\omega)$ and $\mathbf{P}_d(\omega)$ following \eqref{eq:TensorParameterBooker}. With the singular values of $\mathbf{\Phi}(\omega)$ and $\mathbf{P}_d(\omega)$ we compute the galvanic amplitude singular values, $\rho_1^{\mathrm{gal}}(\omega)$ and $\rho_2^{\mathrm{gal}}(\omega)$, using (see Appendix B):
 \begin{equation}
  \rho_1^{\mathrm{gal}}(\omega)=\exp\left(\rho_a(\omega)-\phi_a(\omega)\right)
  \qquad\mathrm{and}\qquad
  \rho_2^{\mathrm{gal}}(\omega)=\exp\left(-\left(\rho_a(\omega)-\phi_a(\omega)\right)\right),
  \label{eq:galvanicAmplitude}
\end{equation} 
where $\rho_a$ and $\phi_a$ are the macroscopic anisotropy parameters as defined by \eqref{eq:rPArAA} in appendix \ref{sec:logAniso}, and where we assumed a 2D impedance as detailed in appendix \ref{sec:AmpSV}. 
Then, identifying each element of $\mathbf{A}$ with each element of $\mathbf{P}^{\mathrm{gal}}(\omega)$ and using that our impedance data are at a single period, the recovered anisotropic distortion parameter, $a_r(\omega)$, for 2D impedances is given by:
 \begin{equation}
a_r(\omega)=-\frac{\left(\rho_1^{\mathrm{gal}}(\omega)\right)^2-1}{\left(\rho_1^{\mathrm{gal}}(\omega)\right)^2+1}=\frac{\left(\rho_2^{\mathrm{gal}}(\omega)\right)^2-1}{\left(\rho_2^{\mathrm{gal}}(\omega)\right)^2+1}.
\end{equation} 
The comparison of $a$ with $a_r(\omega)$ is illustrated in Figure \ref{fig:galvanicAnisotropy} at different periods and demonstrates that the estimation is qualitatively correct with a maximum bias of $0.2$ for small anisotropy values and when considering averaged anisotropy values (see appendix \ref{sec:logAniso} for more details). 
Moreover, the bias between $a_r$ and $a$ is similar for different periods, and in particular, the periods corresponding to 3D data, $32\,\mathrm{s}$ and $100\,\mathrm{s}$, show slightly smaller bias than the period of $1\,\mathrm{s}$ associated to 1D data. Thus, surprisingly, it seems that the bias is unrelated to the data dimensionality, although we assumed a 2D impedance to derive the explicit expressions for the galvanic amplitude singular values in \eqref{eq:galvanicAmplitude}.

We evaluate if the obtained values of bias are significant compared to typical measurement errors. First, we calculate the galvanic amplitude singular values of the original distortion matrix $\mathbf{A}$, which are directly the diagonal elements of $\mathbf{A}$. Then, we divide them by the corresponding estimated galvanic amplitude singular values of 
$\mathbf{P}^{\mathrm{gal}}$, $\rho_1^{\mathrm{gal}}$ and $\rho_2^{\mathrm{gal}}$, to obtain the systematic error, $\Delta_{\rho_1^{\mathrm{gal}}}$ and $\Delta_{\rho_2^{\mathrm{gal}}}$, introduced by approximating $\mathbf{P}^{\mathrm{ind}}$ with the Phase Tensor parameters as in \eqref{eq:AmpInd}.
This error measure indicates the magnitude of anisotropic galvanic distortion that remains in the data and is represented by a relative error of the amplitude. 
Instead of investigating this relative error, which is asymmetric due to the logarithmic nature of the amplitude, we use this result to calculate the analogue phase error, $\Delta_{\phi_{1/2}}$, by assuming, first, that the phase error is equal to the amplitude error, which is a very common assumption for MT \citep{gamble:1978,wight:1987}, and, second, by noting that the phase relates to the logarithmic amplitude (see appendix \ref{sec:logAniso}), thus:
 \begin{equation}
\Delta_{\phi_{1/2}}=\left|\log\left(\Delta_{\rho^\mathrm{gal}_{1/2}}\right)\right|\frac{180^\circ}{\pi}.
\end{equation} 
Performing these calculations, for example, when the bias of $a_r$ is $\Delta_a=0.2$ for $a=0$, we find an analogue phase error of $5^\circ$. Further, when the bias of $a_r$ is $\Delta_a=0.15$ for $a=0.5$, the analogue phase error is $5.6^\circ$. This value continues to increase to $6.6^\circ$ for $\Delta_a=0.04$ at $a=0.9$. 

From this analysis we find two conclusions. First, the additional error introduced by approximating $\mathbf{P}^{\mathrm{ind}}$ with the Phase Tensor is fairly constant, although a slight increment can be detected for larger values of anisotropic distortion $a$. And secondly, the maximal error has the same order of magnitude as the typical measurement errors and therefore it is likely within error confidence of typical measurements. Let us stress that, although in this example we have only provided an estimate of the galvanic anisotropic distortion, it is still a remarkable result since until now it was generally unresolvable for 2D MT data at a single site if no other complementary data was available \citep{Chave:1994,McNeice:2001,Bibby:2005}.

\subsection{Amplitude Tensor Parameter Maps}
The mapping of the Amplitude Tensor parameters together with the Phase Tensor parameters can be used to directly extract information of the subsurface model for, e.g., appreciation of starting models for computationally expensive 3D inversions. To show this we study four different synthetic models taken from the literature and a real data set. The parameter map plots for each of these models are given in Figures \ref{fig:cbb2004b} to \ref{fig:em3d3c} and \ref{samtex}. 
Each plot contains all four parameters (scale, relative anisotropy,
strike and skew angles) of the corresponding tensor. The scales $\ln\rho$ and $\phi$, as
defined by \eqref{eq:detRhoPhi}, are colour-coded in degrees (Phase Tensor) and
logarithmic apparent resistivity (Amplitude Tensor). The relative anisotropy
parameters, $\rho_a$ and $\phi_a$), as defined in \eqref{eq:rPArAA}, are plotted as an ellipse
with major axis $1+\rho_a$ and $1+\phi_a$ and minor axis $1-\rho_a$ and $1-\phi_a$, respectively. The unit circles of $\rho_a=2/3$, $\rho_a=0$ and $\rho_a=-2/3$, and similarly for $\phi_a$, are provided for comparison at the bottom right corner of the plots. The strike angle is typically aligned
to the major or minor axis of the ellipse but additionally highlighted by a
black line. Finally, the skew angle is displayed qualitatively as white
($0^\circ$), grey ($\le 6^\circ$) and black ($> 6^\circ$) filling out the
smallest ellipse unit circle to illustrate regimes of 2D, quasi 2D and 3D,
respectively.

\subsubsection{Model of \cite{Caldwell:2004}}
\begin{sidewaysfigure}[p]
	\centering
	a) Phase Tensor \\
		\includegraphics[trim=0 50 190 0,  clip,height=150bp]{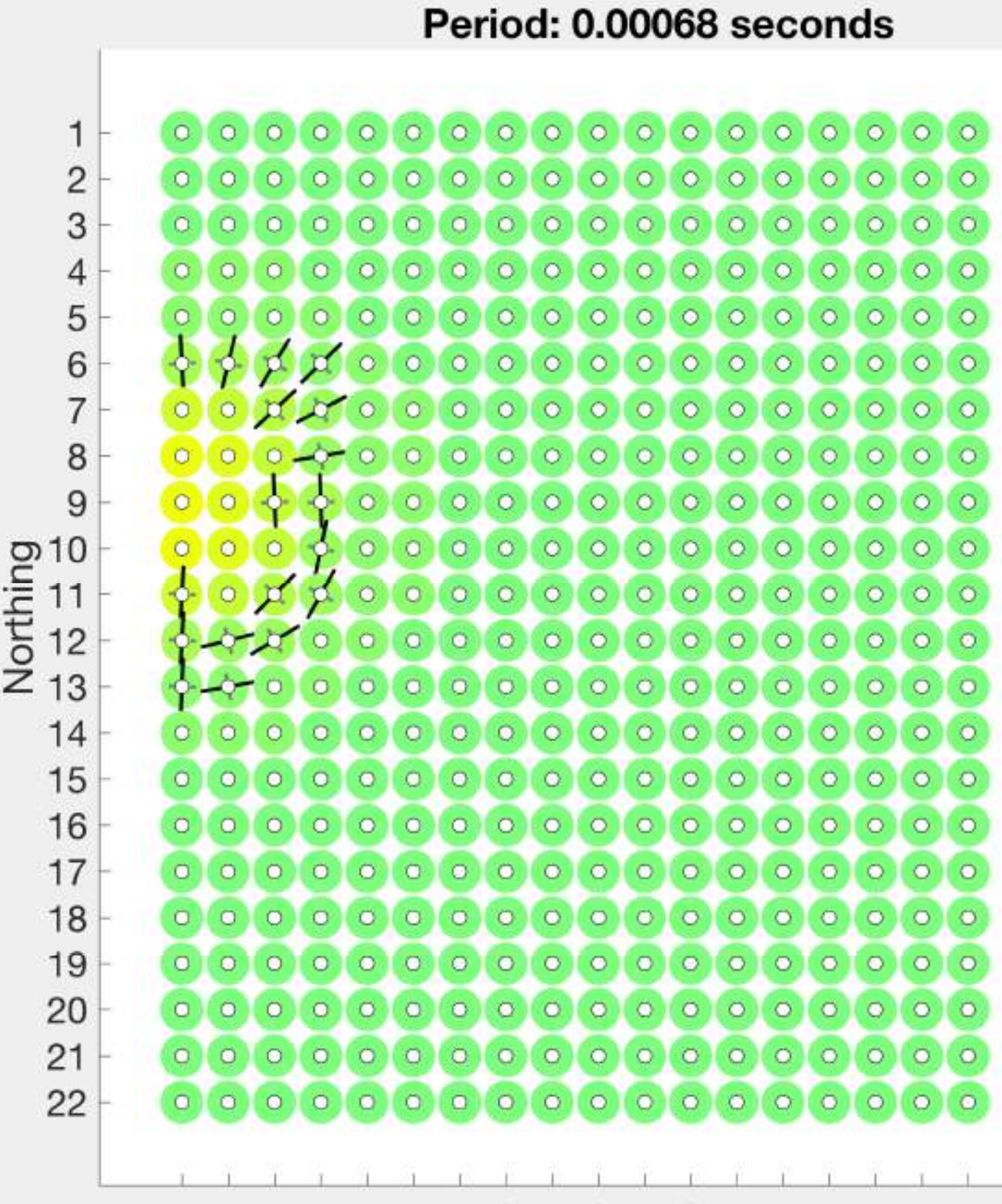}
		\includegraphics[trim=50 50 190 0,clip,height=150bp]{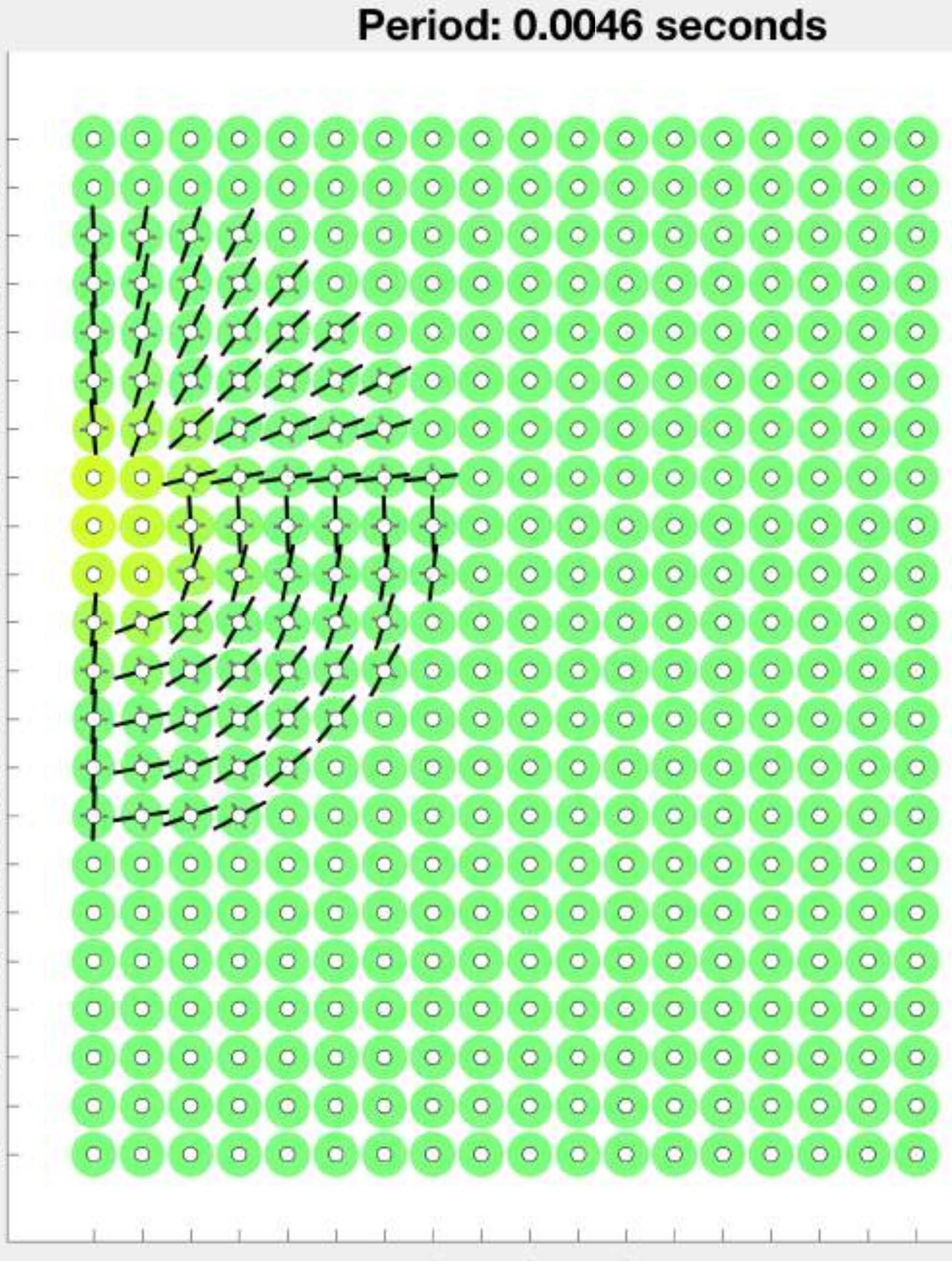}
		\includegraphics[trim=50 50 190 0,clip,height=150bp]{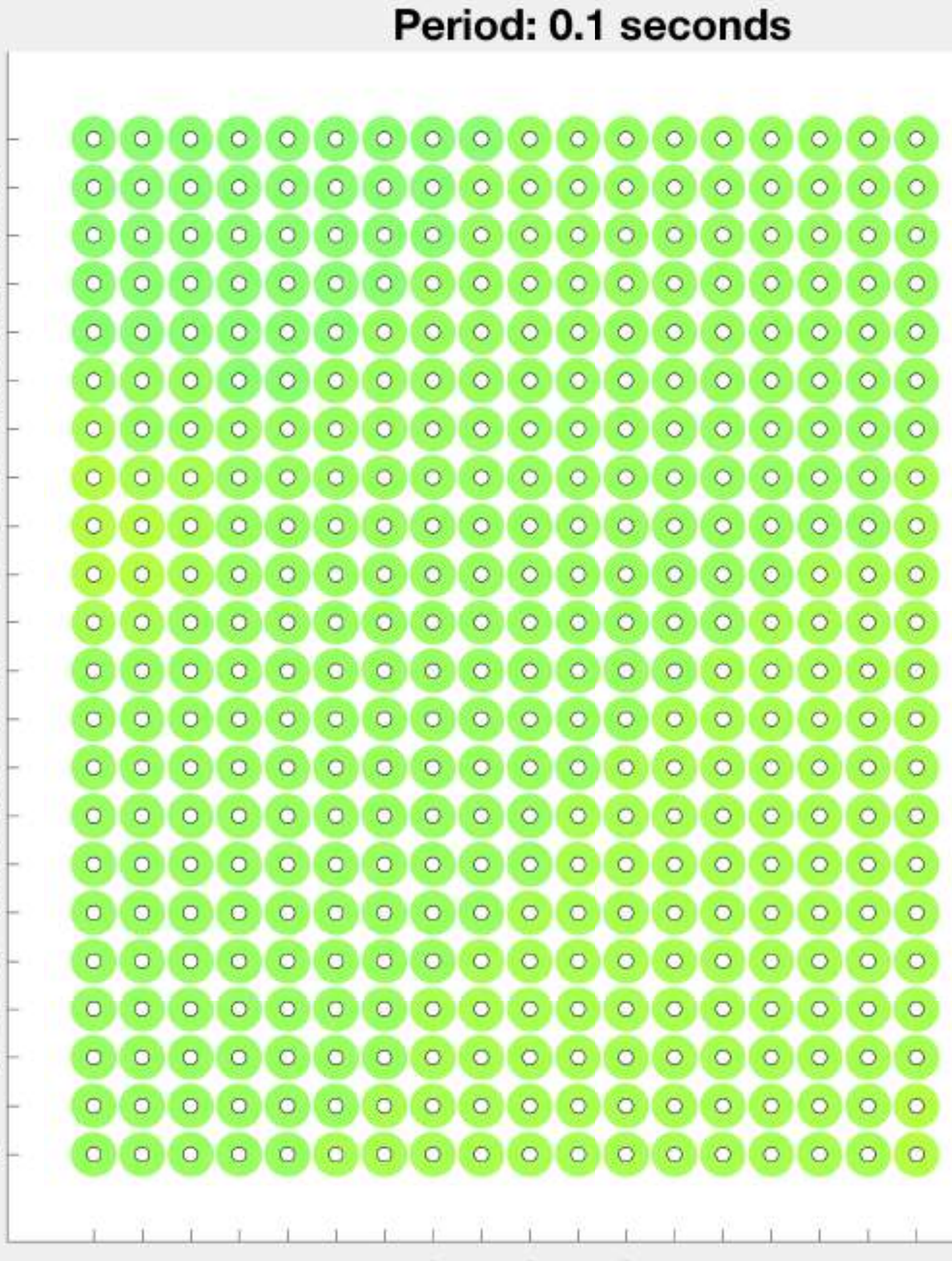}
		\includegraphics[trim=50 50 190 0,clip,height=150bp]{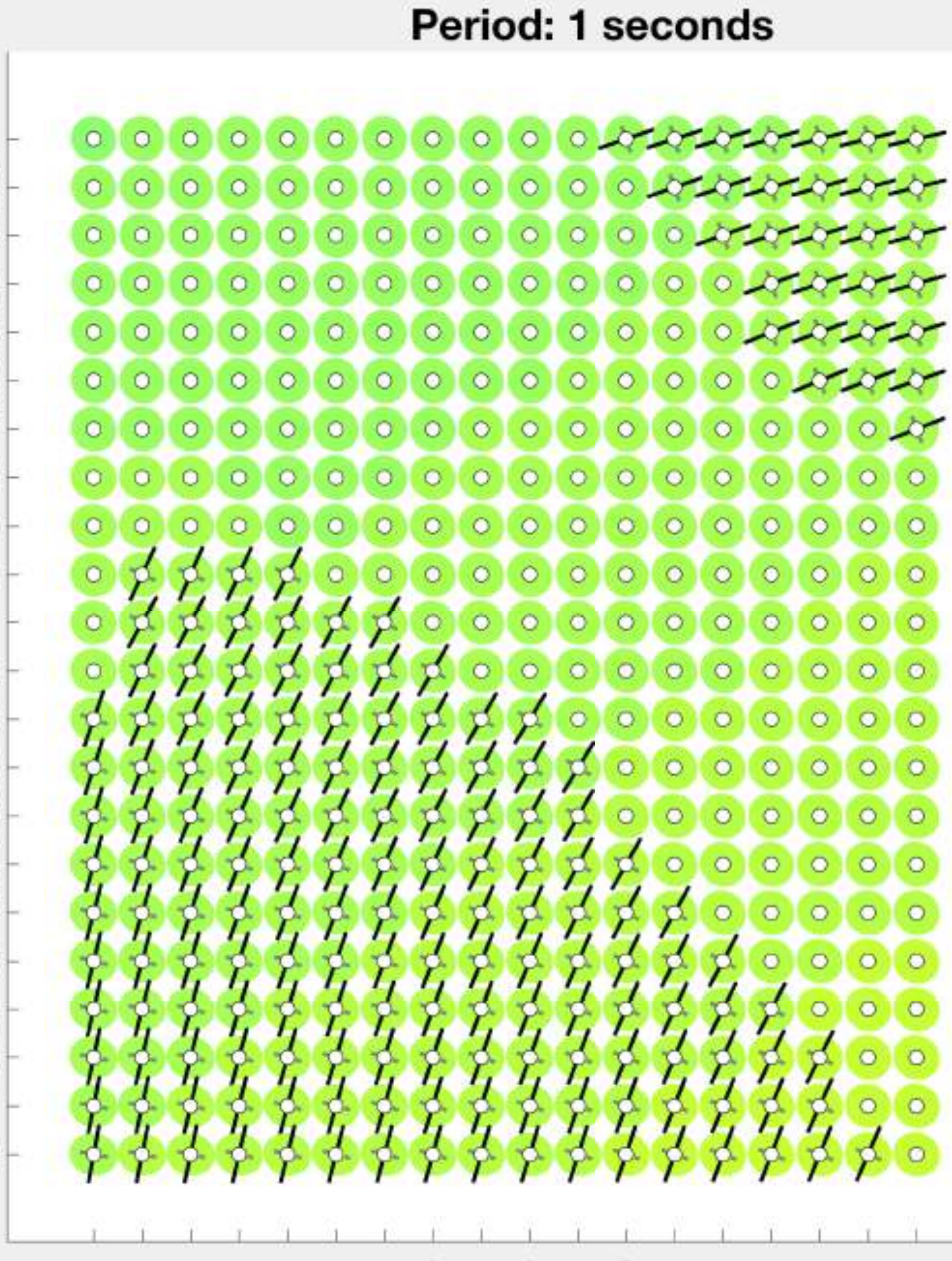}
		\includegraphics[trim=50 50 0 0,    clip,height=150bp]{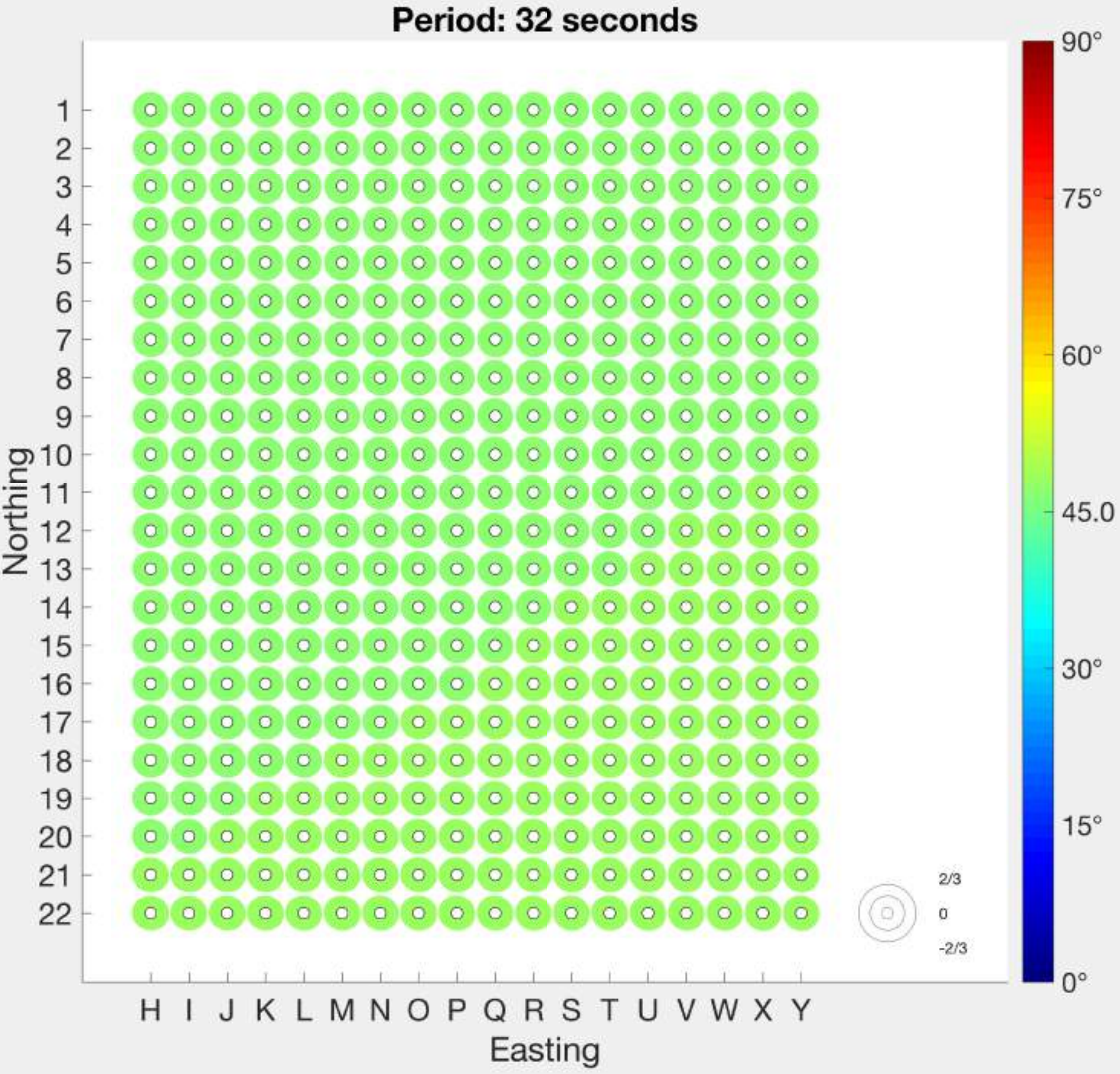}\\
	b) Amplitude Tensor\\
		\includegraphics[trim=0 0 190 50,  clip,height=150bp]{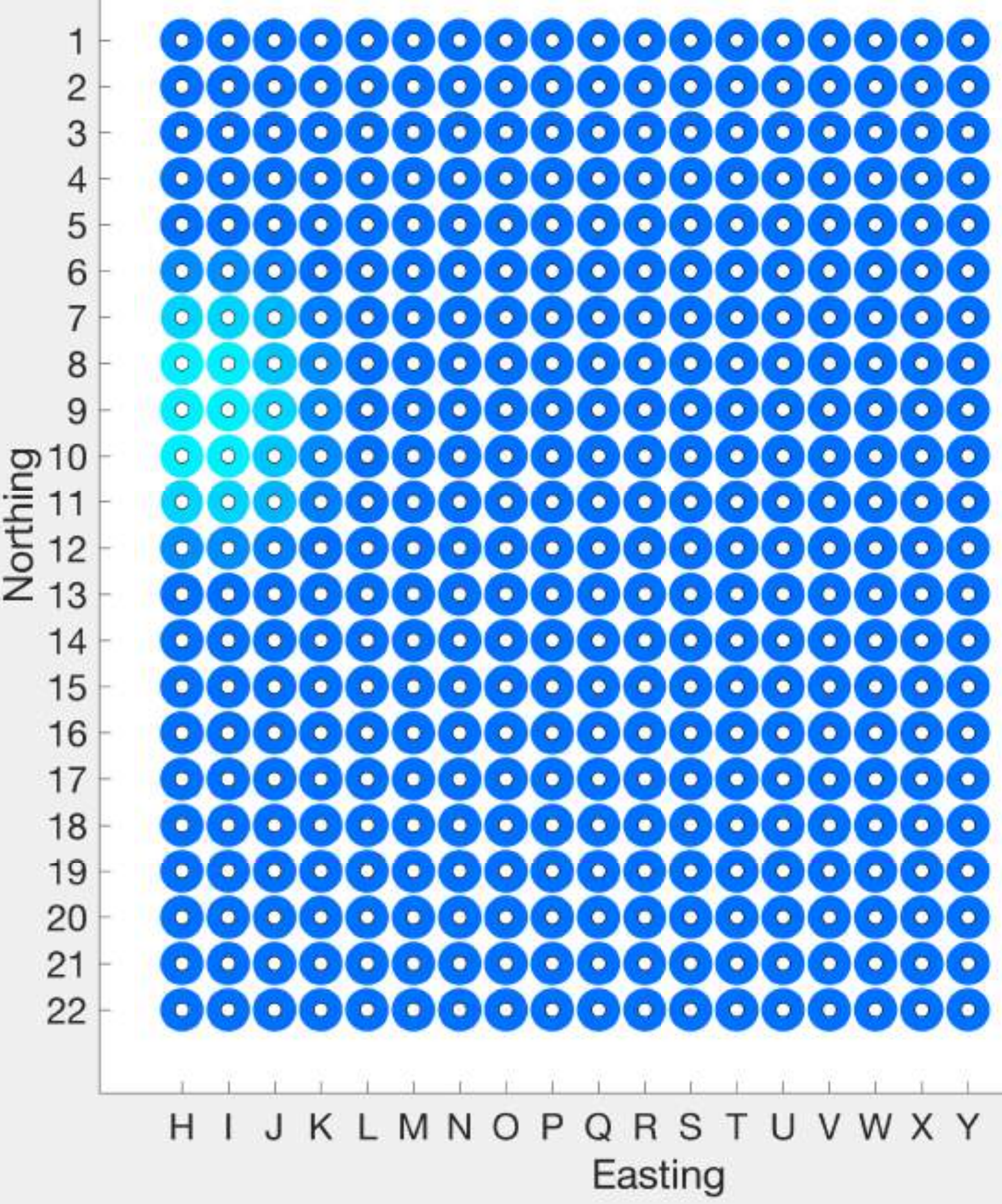}
		\includegraphics[trim=50 0 190 50,clip,height=150bp]{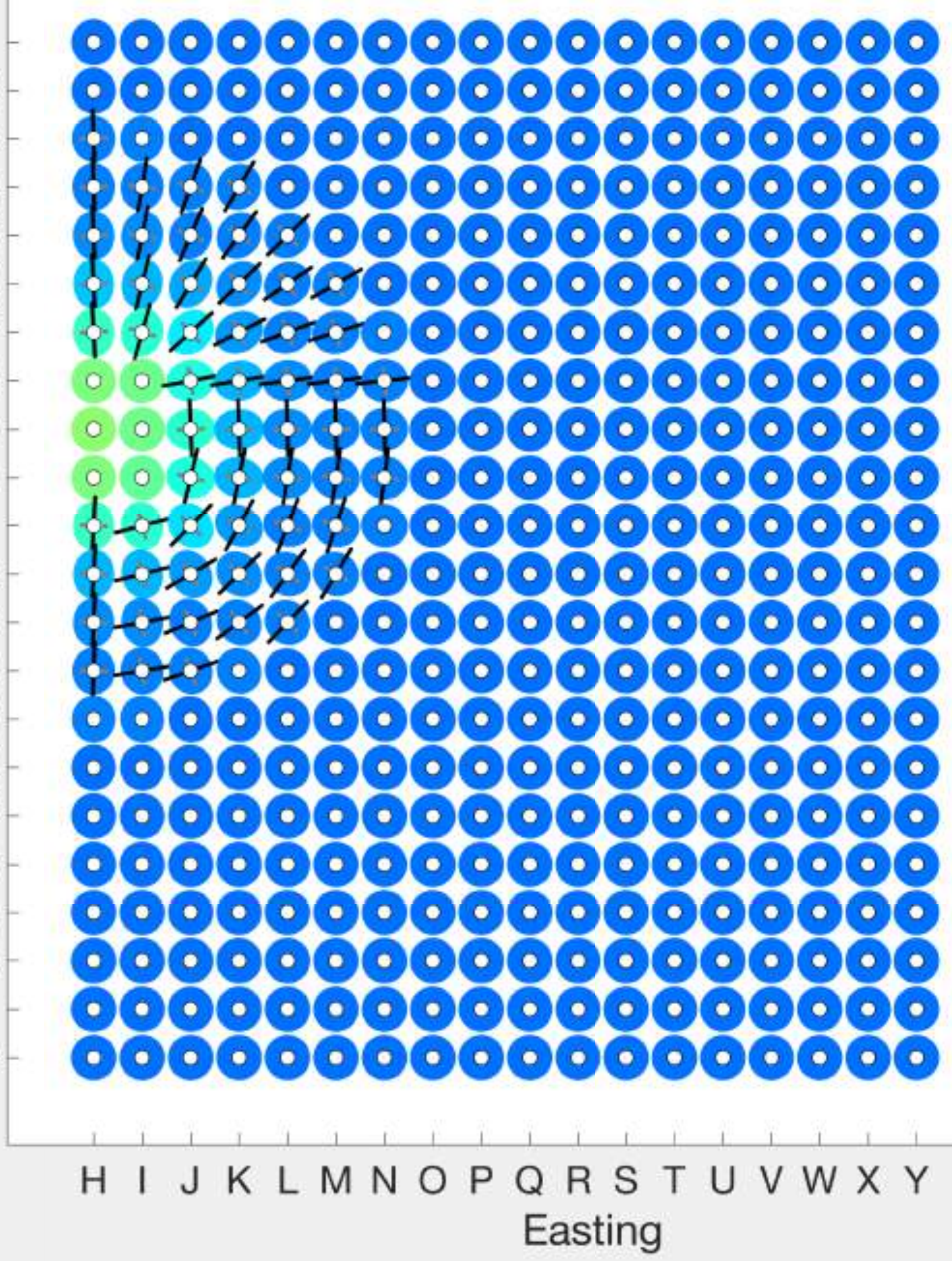}
		\includegraphics[trim=50 0 190 50,clip,height=150bp]{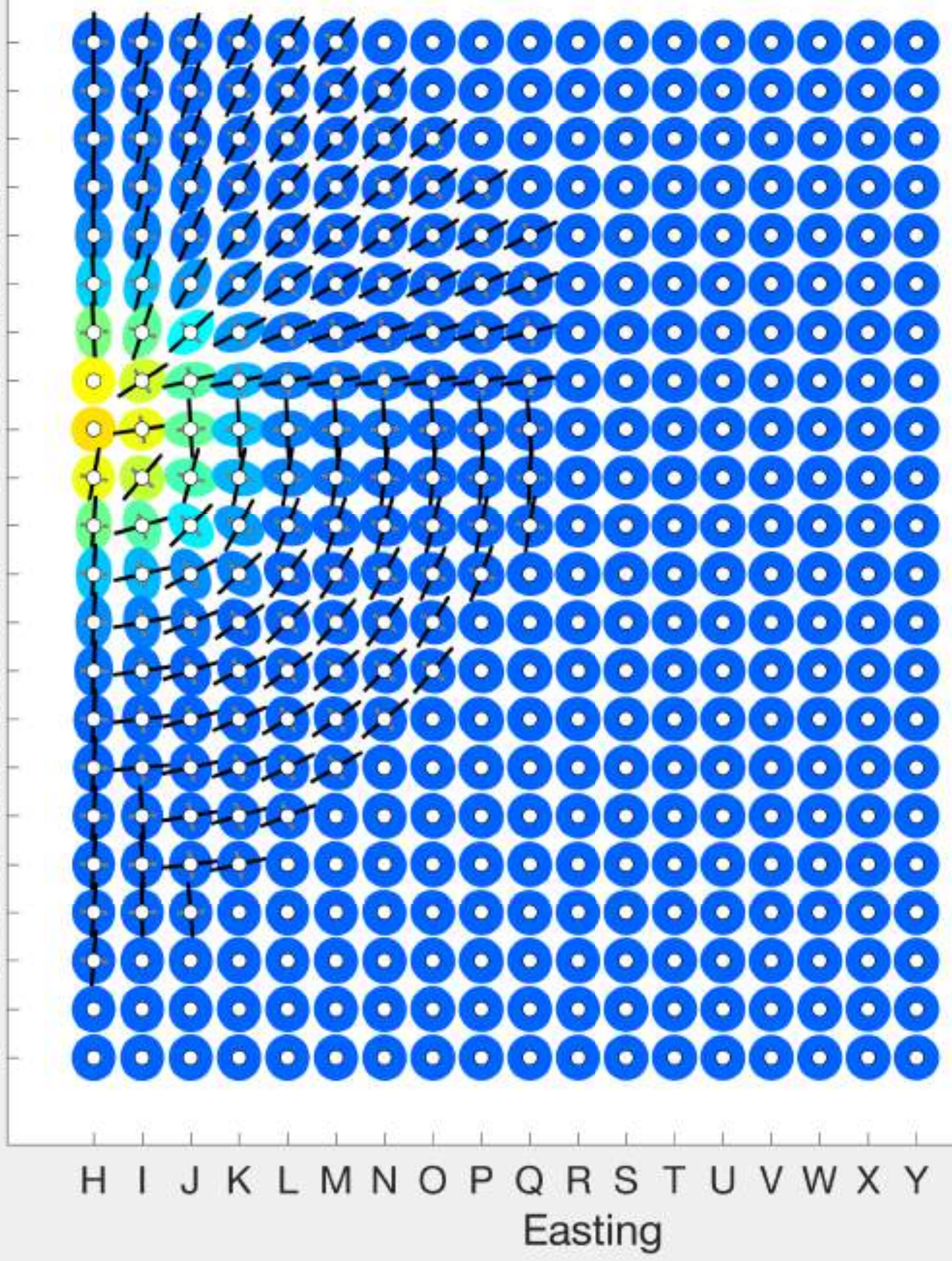}
		\includegraphics[trim=50 0 190 50,clip,height=150bp]{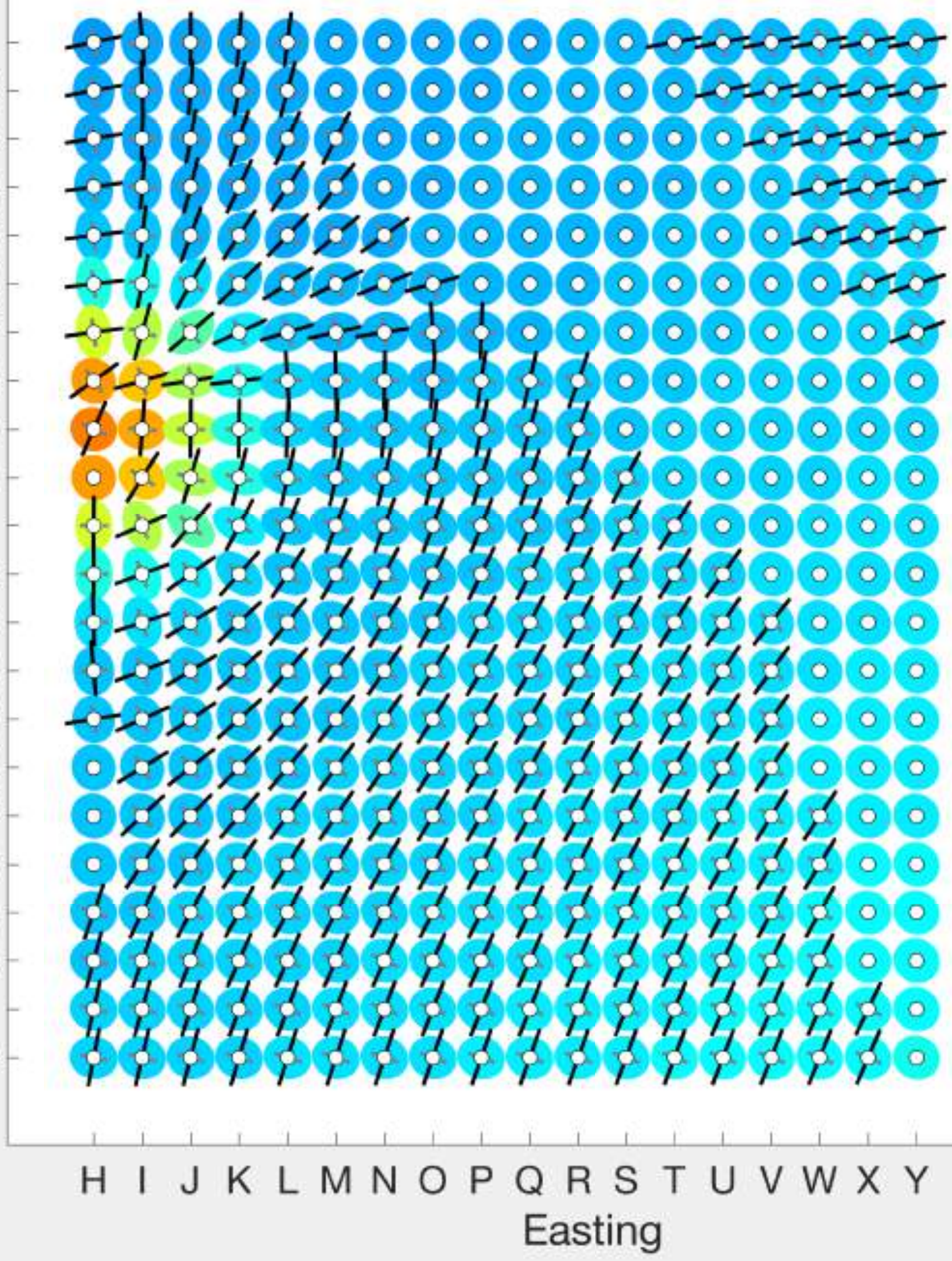}
		\includegraphics[trim=50 0 0 50,    clip,height=150bp]{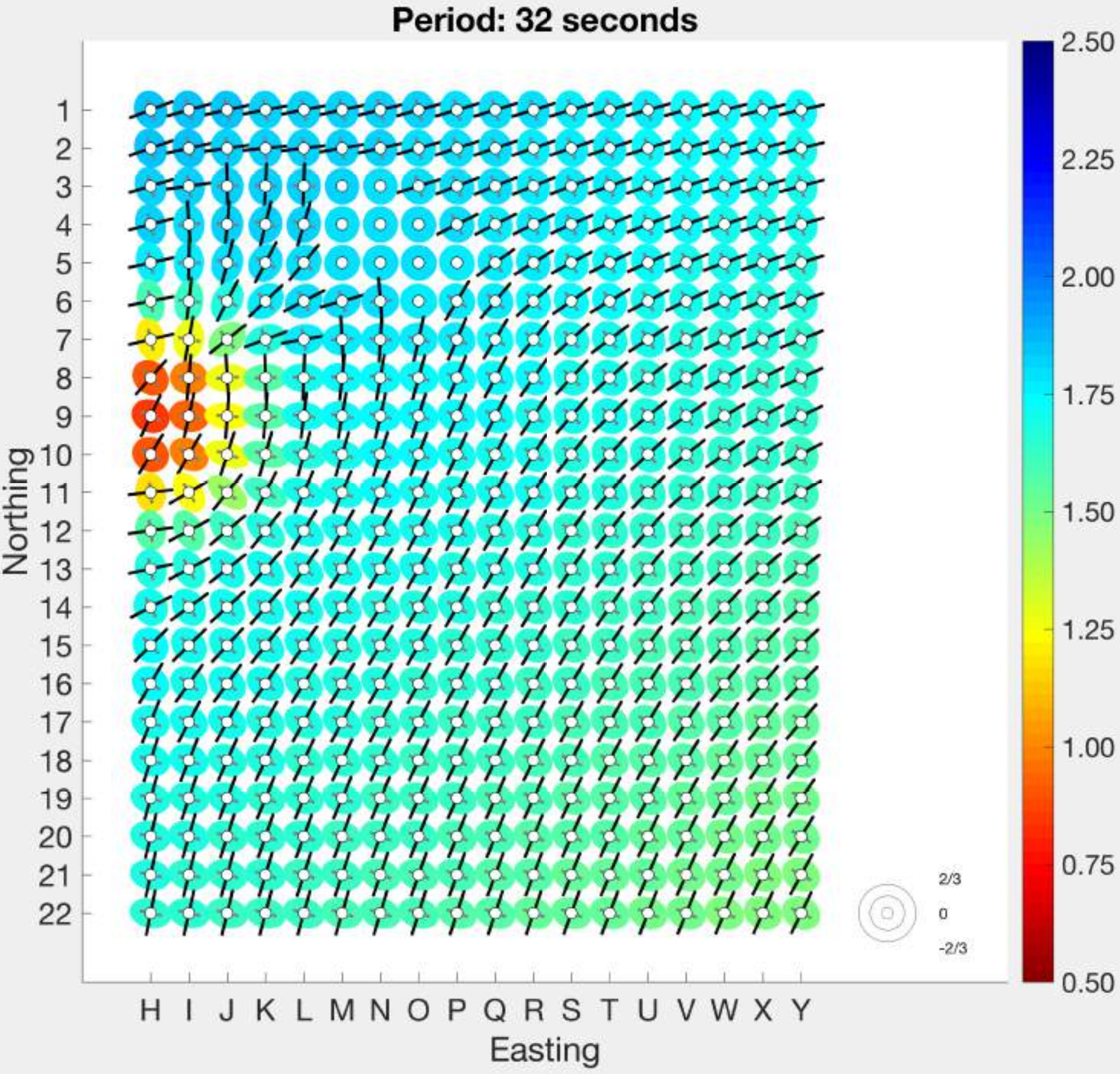}
	\caption{Phase Tensor (a) and Amplitude Tensor (b) parameters at different periods (columns) obtained using the model from \cite{Caldwell:2004}. Tensor parameters are illustrated by scale (degrees and logarithmic apparent resistivity in colour), anisotropy and strike angle (ellipse).}
	\label{fig:cbb2004b}
\end{sidewaysfigure}
The first model we study was proposed by \cite{Caldwell:2004} to introduce the Phase Tensor and demonstrates the significance of spatial strike angle variation towards three-dimensionality of data. 
We study this model to show which additional information can be extracted from the Amplitude Tensor in relation to the Phase Tensor. 
The model consists in an homogeneous background of $100\,\mathrm{\Omega m}$ with a small and shallow ($50\,\mathrm{m}$ top depth) $10\,\mathrm{\Omega m}$ body to the West (only half covered by the map) and a large and deep ($1.5\,\mathrm{km}$ top depth) $1\,\mathrm{\Omega m}$ conductive body to the South-East (only a small portion covered by the map). Figure \ref{fig:cbb2004b} shows map views of the Phase and Amplitude Tensors at various periods for this model.

The Phase Tensor begins to sense the small and shallow Western body at around
$10^{-4}\,\mathrm{s}$ (not shown here, see section \ref{sec:supp}) and the Amplitude Tensor at around $6.8 \cdot 10^{-4}\,\mathrm{s}$ as can be seen by the decreasing apparent resistivity values. 
At around a period of $4.6 \cdot 10^{-3}\,\mathrm{s}$ we can observe that the Phase Tensor has the widest extend of sensitivity to the small Western body indicated by the area with significant strike angle. The scale parameter of $\phi>45^\circ$ clearly indicates the position and the conductive nature of the body. 
Outside the position of the conductive body, the anisotropy ellipses and strike angles are both elongated and measurable, but they are not in the area below the conductive body and therewith this difference specifies also the position and form of the body.
With increasing period the Phase Tensor parameter values approach their homogenous half-space values ($45^\circ$ phase scale, zero anisotropy, zero skew angle and indeterminable strike angle) until the Western body is indeterminable at a period of $0.1\,\mathrm{s}$. Then, at a period around $0.22\,\mathrm{s}$ (not shown here, see section \ref{sec:supp}), the increase of the anisotropy values and the appearance of a measurable strike angle indicate that the Phase Tensor begins to sense the deep and large South-Eastern anomaly.
At a period around $1\,\mathrm{s}$, the area in which the large anomaly is detected by the Phase Tensor parameters is largest and thereafter the area of influence decreases again until the response is below measurable at $32\,\mathrm{s}$. From these observations, we can conclude that the Phase Tensor is sensitive only to structures of specific sizes and depth at a given period, i.e.~only for short periods small, near-surface bodies affect the Phase Tensor parameters and at longer periods deeper and larger structures are sensed. 

On the other hand, the Amplitude Tensor behaves quite differently to the Phase Tensor. 
At low periods, i.e.~at $6 \cdot 10^{-4}\,\mathrm{s}$ and $4.6 \cdot 10^{-3}\,\mathrm{s}$, the Amplitude Tensor scale parameter values at the West of the model are below the background resistivity of $100\,\Omega m$ indicating the position and conductive nature of the western anomaly. Additionally, at these periods, the amplitude strike and anisotropy values are specifying the Western body's form and position just as we observed for the Phase Tensor. 
Observing again the Amplitude Tensor parameters at a larger period, i.e.~at $0.1\,\mathrm{s}$ and $1\,\mathrm{s}$, the parameters show the same behaviour as for lower period values, indicating that the Amplitude Tensor is still affected by the near surface anomaly in contrast to what we observed for the Phase Tensor. 
At these larger periods, we also observe that the apparent resistivity, represented by the amplitude scale parameter, decreases over the shallower body's position, to values around $10\,\Omega m$ and that the anisotropy and strike angle parameters begin to be affected by the large South-Eastern anomaly.
At the longest period of $32\,\mathrm{s}$, the amplitude scale parameter decreases towards the South-East indicating the presence of the large and deep conductive body, and the anisotropy and strike angle parameters (considering the $90^\circ$ strike angle ambiguity) point towards its position. At this period, the small and shallow Western and the large and deep South-Eastern bodies can still be observed in the amplitude parameters but not in the phase parameters. The absence of the bodies' influence in the phase parameters and its presence in the amplitude parameters indicate that the anomalies produced by the bodies are not governed by inductive effects at this period but by purely galvanic effects. 
Therefore, from this example, we can draw two conclusions: (i) from the Amplitude Tensor together with the Phase Tensor parameters map plots, we can observe the position, size and conductive character of isolated subsurface bodies and therewith we would be able to deduce an educated starting model for a subsequent 3D inversion; (ii) the Amplitude Tensor contains the galvanic information of all subsurface structures and supports our interpretation of the galvanic amplitude, i.~e.~in section \ref{sec:IndGalAmpTen}.

\subsubsection{3D MT Modelling and Inversion Workshops}

\begin{sidewaysfigure}[p]
	\centering
	a) Phase Tensor\\
		\includegraphics[trim=0 50 275 0,  clip,height=150bp]{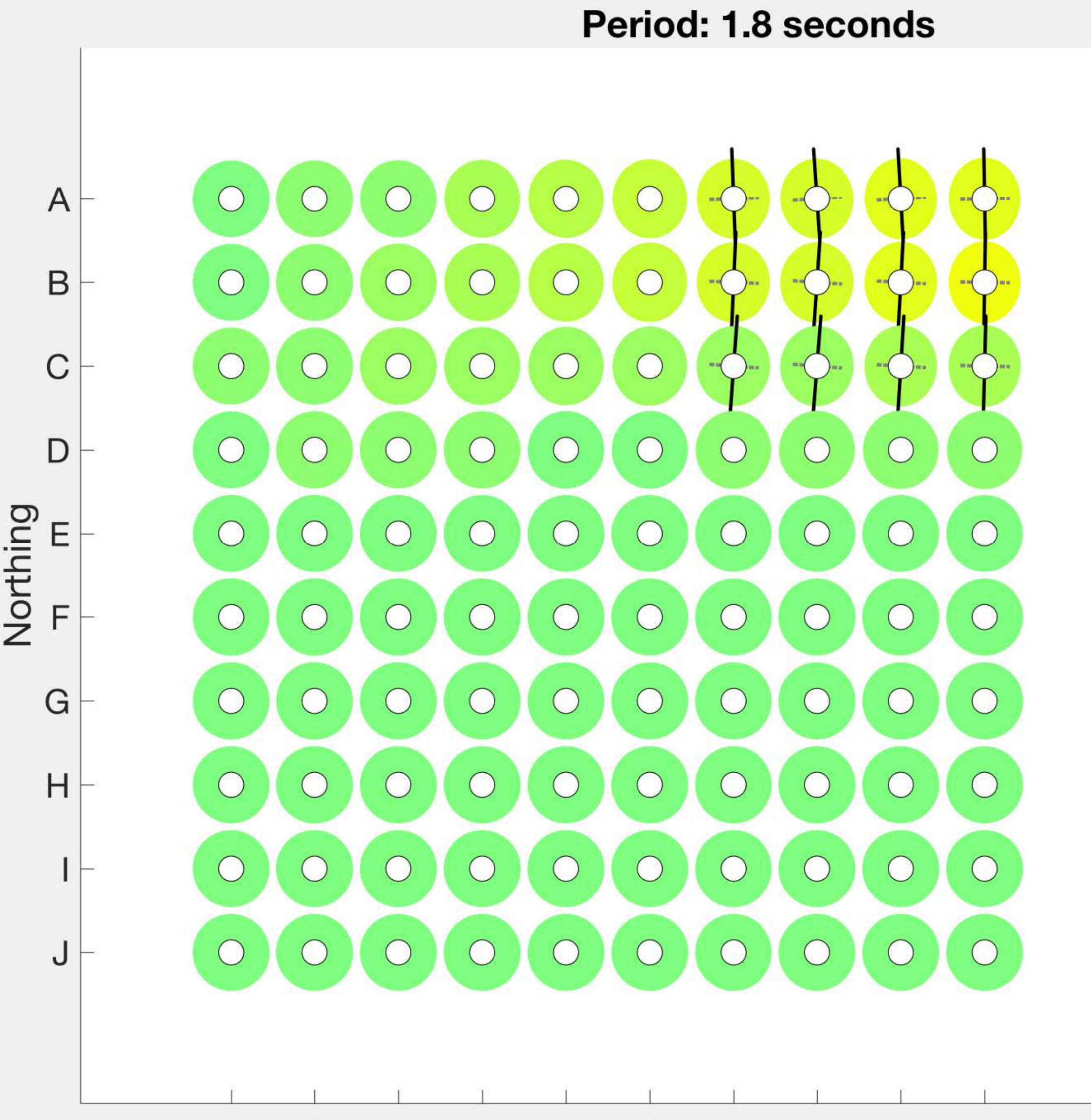}
		\includegraphics[trim=50 50 275 0,clip,height=150bp]{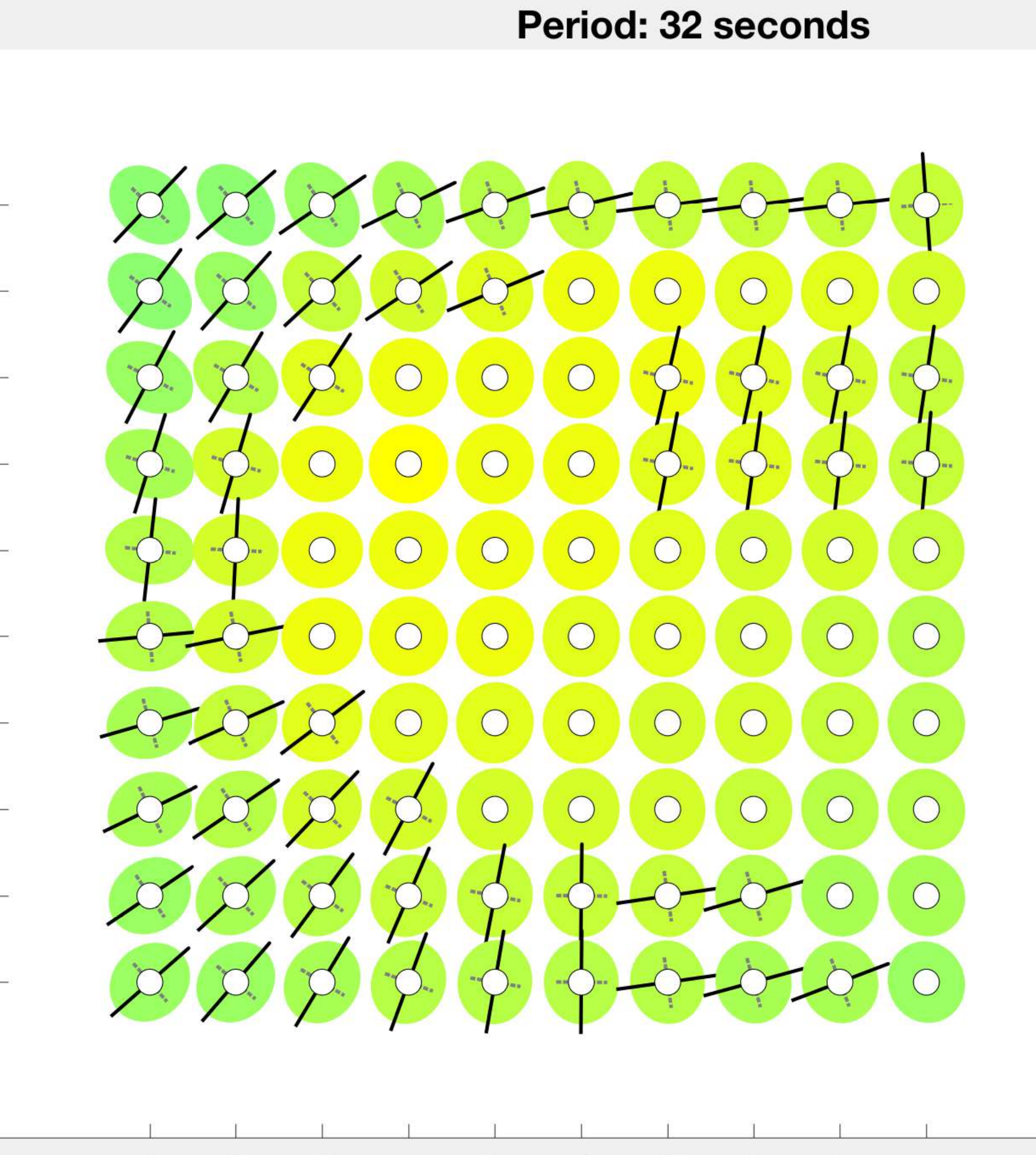}
		\includegraphics[trim=50 50 275 0,clip,height=150bp]{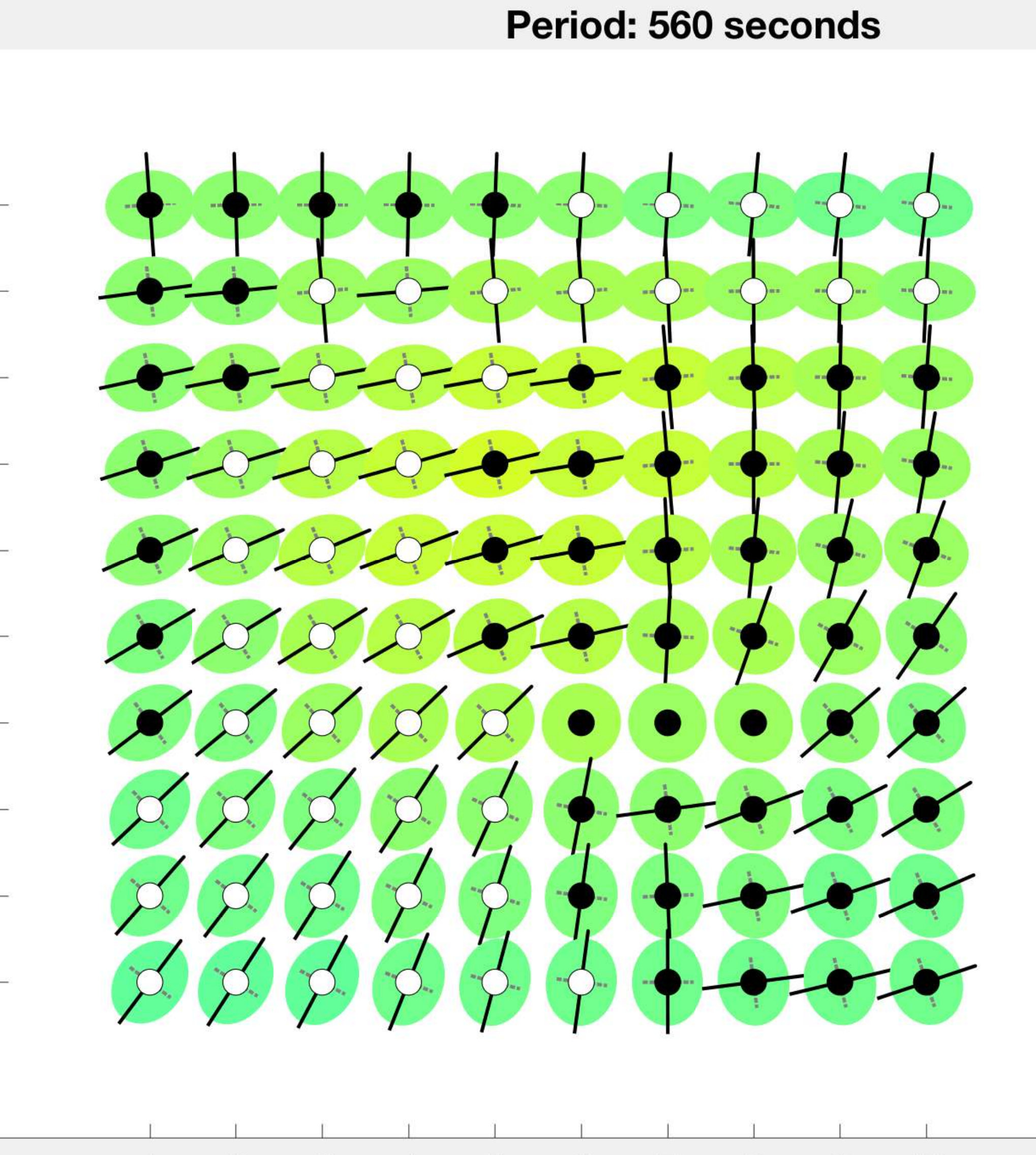}
		\includegraphics[trim=50 50 0 0,    clip,height=150bp]{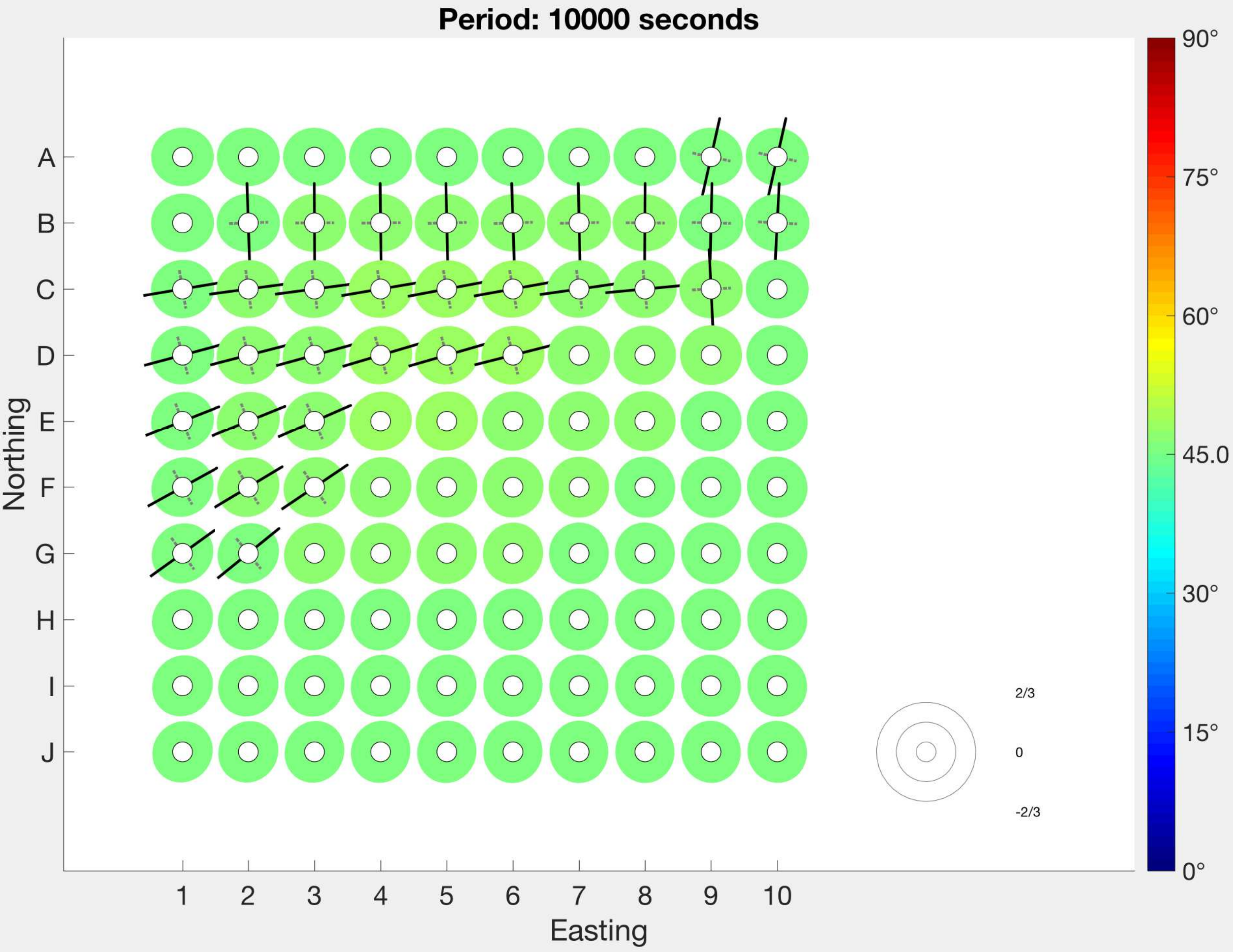}\\
	b) Amplitude Tensor\\
		\includegraphics[trim=0 0 275 50,  clip,height=150bp]{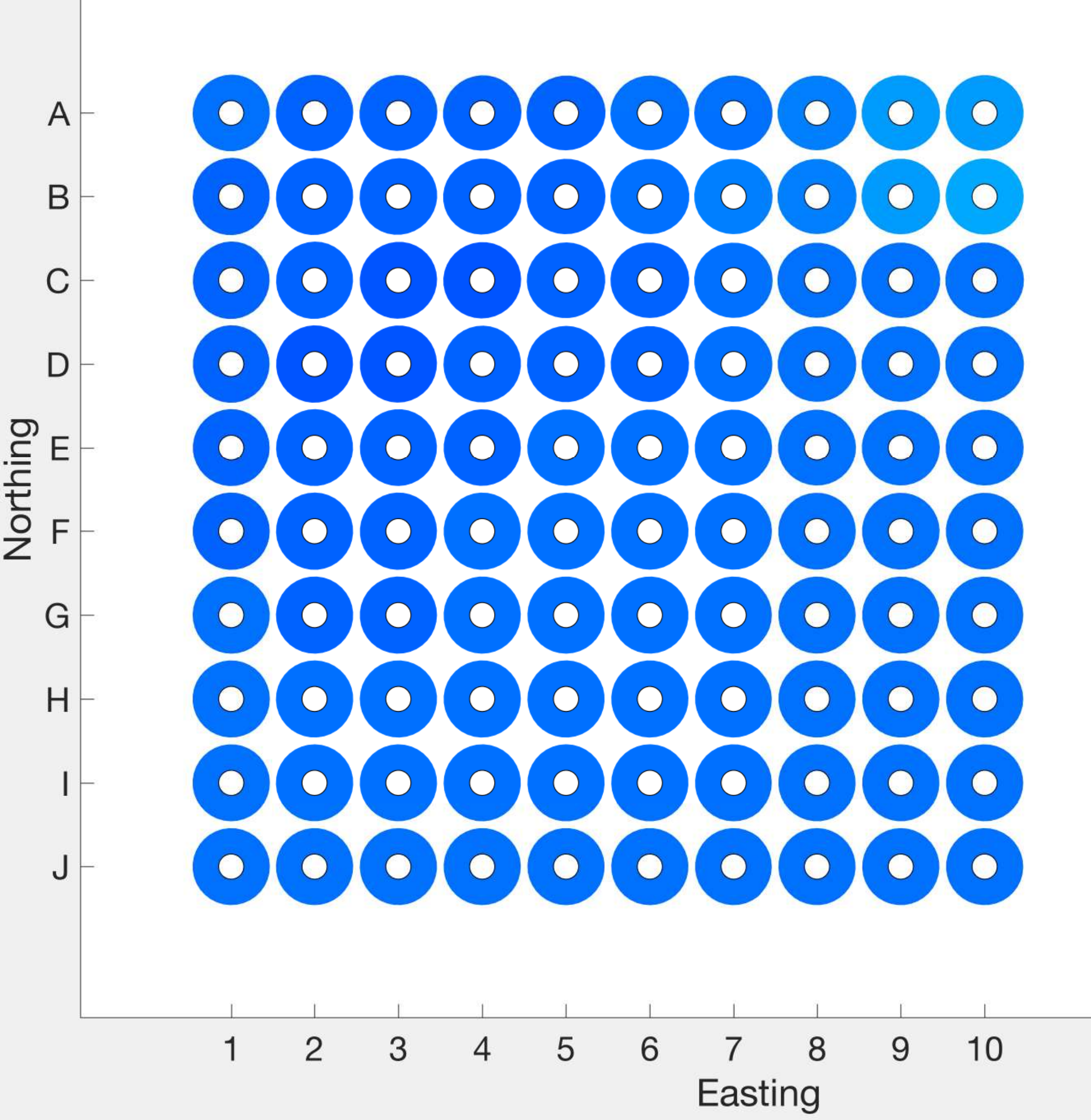}
		\includegraphics[trim=50 0 275 50,clip,height=150bp]{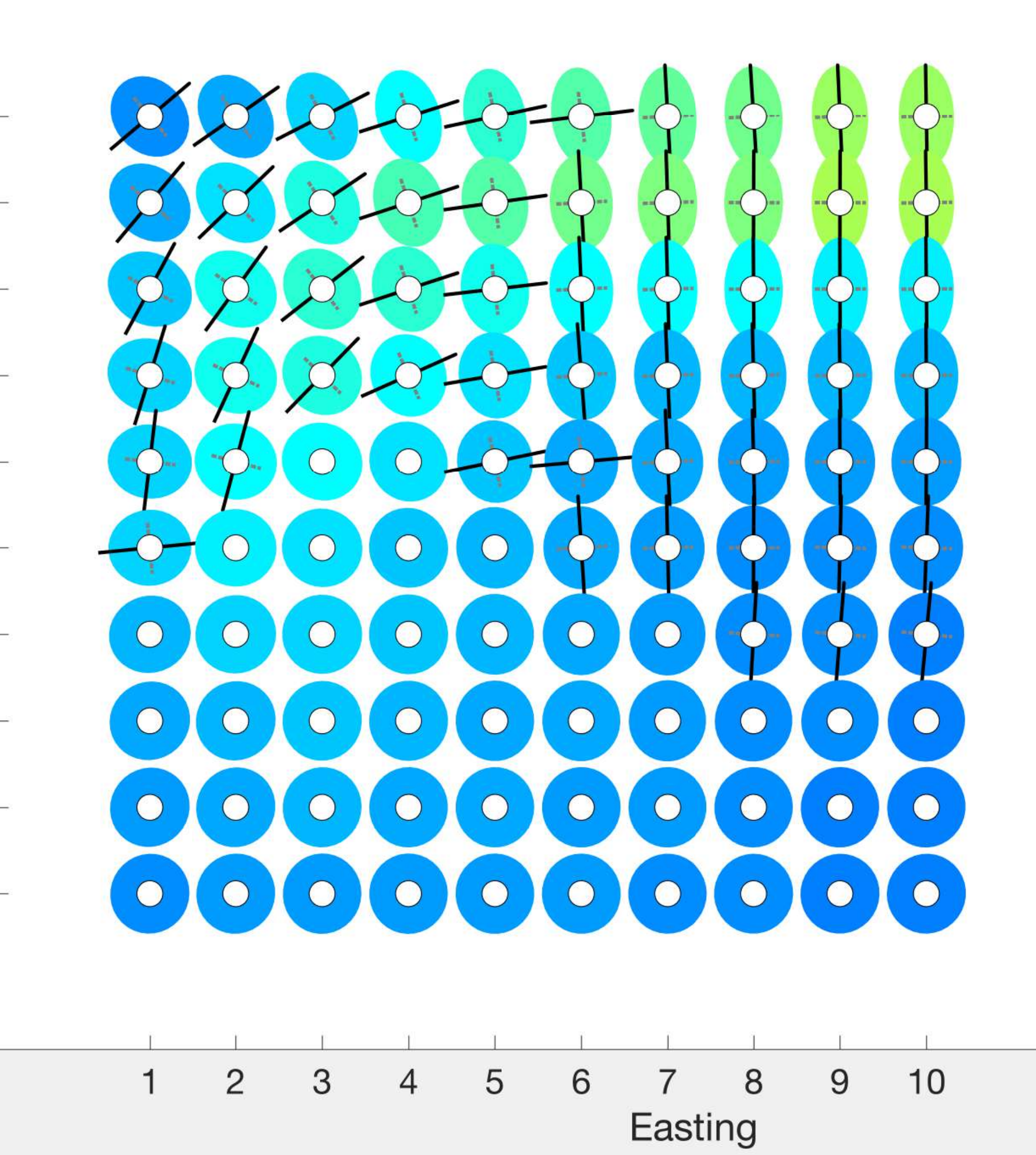}
		\includegraphics[trim=50 0 275 50,clip,height=150bp]{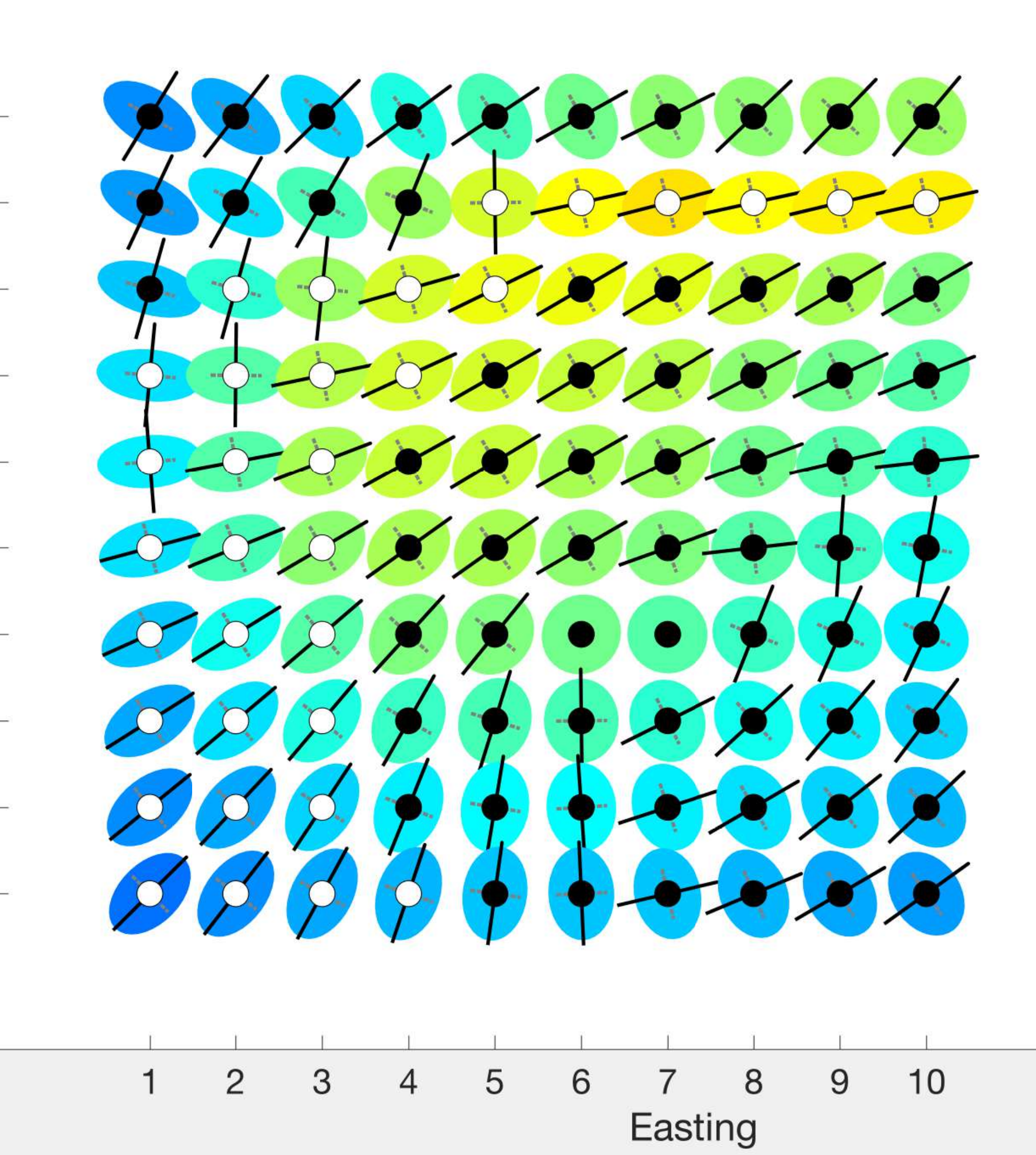}
		\includegraphics[trim=50 0 0 50,    clip,height=150bp]{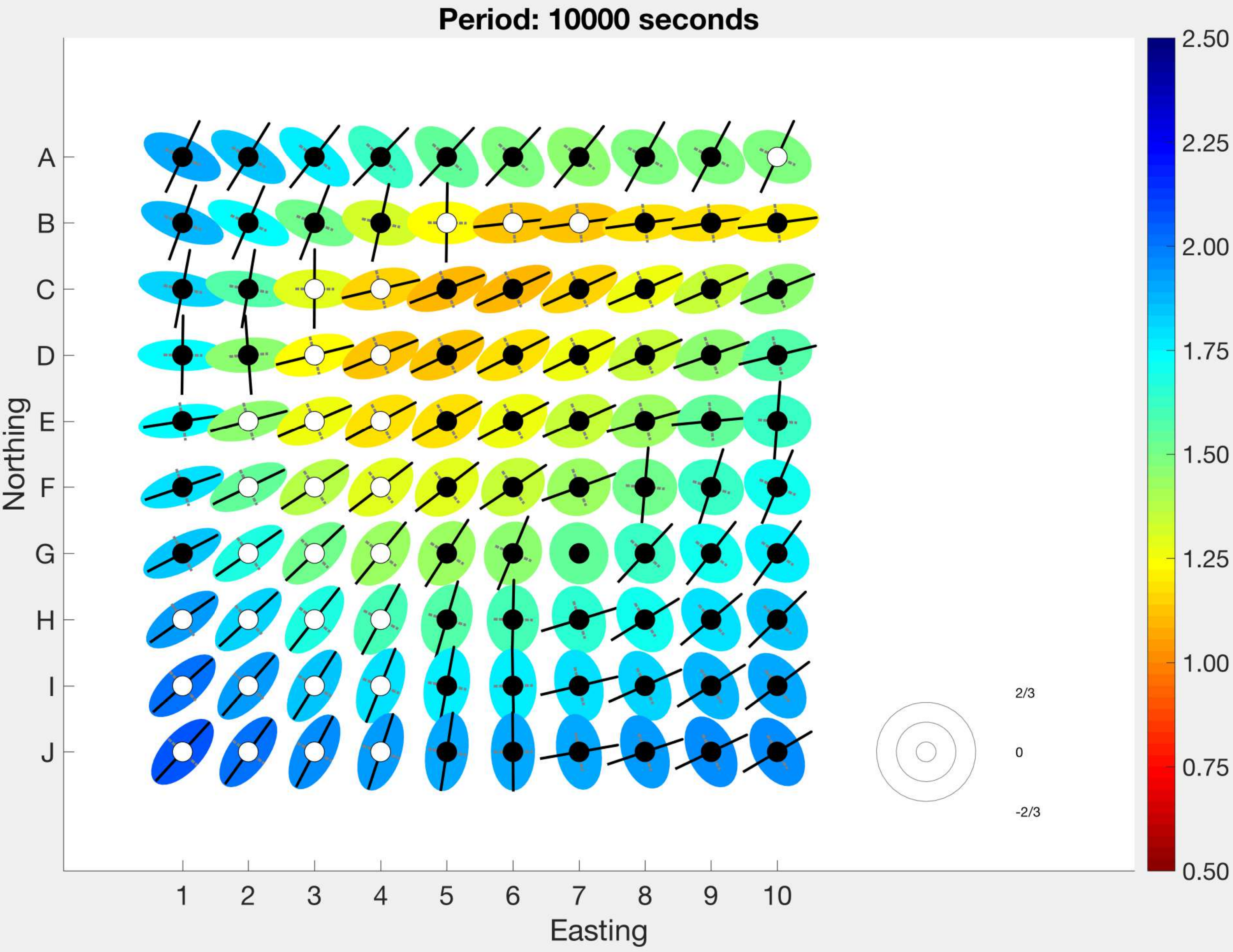}
	\caption{Phase Tensor (a) and Amplitude Tensor (b) parameters at different periods (columns) obtained using the Dublin 3D Modelling and Inversion Workshop (2008) Secret Model 1. Tensor parameters are illustrated by scale (degrees and logarithmic apparent resistivity in colour), skew (circle fill - black: 3D, white: 2D), anisotropy and strike angle (ellipse).}
	\label{fig:em3d1}
\end{sidewaysfigure}
The model used in the previous example while insightful, it lacks illustration of the skew angle parameter. Thus, to show the information given by this parameter we analyse two additional synthetic models that were particularly designed to contain 3D data.
We take them from two different 3D MT Modelling and Inversion Workshops celebrated in 2008 and 2016.

\paragraph{Secret Model $\#1$ (2008)}
One of the models is the secret Model $\#1$ from the 3D MT Modelling and Inversion Workshop held in 2008 that we already used to illustrate the dimensionality analysis of the Amplitude and Phase Tensors with site $A09$. It consists of a $100\,\mathrm{\Omega m}$ homogenous background with a conductive body ($1\,\mathrm{\Omega m}$) which, starting from the North-East, spirals down towards, first West, then South, East and finally North as displayed in Figure \ref{fig:DSM1model}. More details of the model are given by \cite{Miensopust:2013}. Figure \ref{fig:em3d1} illustrates the Phase and Amplitude Tensor parameters for the impedance response of this model at selected periods. At the shortest period, $1.8\,\mathrm{s}$, the phase scale, anisotropy ellipses and strike angle parameters are lower than $45^\circ$, elongated in the North-South direction and measurable, respectively, in the North-Eastern corner. This indicates, together with the skew values lower than $3^\circ$, that the shallowest part of the anomaly can be considered a 2D feature at this period. 
At the same period and at the same location, the amplitude scale parameter 
is slightly smaller compared to its value in other regions of the model, indicating the presence of the conductive body.
At the next period, $32\,\mathrm{s}$, the phase strike angles and the anisotropy ellipses indicate the orientation of the downwards spiralling body, and the phase scale values are larger than $45^\circ$ in the centre of the map showing the body's conductive nature. 
At the same period, the amplitude parameters show the same behaviour as the phase parameters but do not reach the same depth, which we can deduce from the fact that the strike angles and anisotropy ellipses are not yet measurable for the deeper structure that is oriented in the East-West direction. Also at this period, the Phase Tensor is less sensitive to the near-surface part of the body as shown by the phase strike angle and the anisotropy ellipse, which are immeasurable and zero, respectively.

The Phase Tensor skew angle parameter, which indicates a 3D structure when its value is non-zero, shows a maximum at 
around $560\,\mathrm{s}$ revealing the complexity of the model. At this period, $560\,\mathrm{s}$, the phase scale parameter $\phi$ approaches $45^\circ$ at most sites, indicating that the Phase Tensor sensitivity to the overall conductive body is diminishing.
At the same period, the Amplitude Tensor recognises that the shallow structure is a dominant part of the model and the parameters only diverge slightly from the values at earlier periods. This is particularly pronounced for the strike angle and anisotropy ellipses, which, in the Northern part of the map, indicate the strike direction of the shallow part of the body that is oriented East-West in contrast to the respective phase parameters, which corresponds to the deep part of the body that is oriented North-South.

The MT response is based on a diffusion process and, therefore, deeper structures must be much larger than shallower ones in order to have the same impact on the impedance. 
This explains what we are observing in the amplitude strike angle in the example discussed above at $560\,\mathrm{s}$ and what we also observe at $10,\!000\,\mathrm{s}$. The amplitude strike angle is much more affected by the superficial part of the model than by the deep one because although the structure widens and thickens with depth, this variation is not enough pronounced so that the deeper part has the same impact as the shallower one.
At the longest period, $10,\!000\,\mathrm{s}$, the Phase Tensor parameters are almost not affected by the conductive anomaly of the model, whereas the Amplitude Tensor contains the full integrated information of it according to the relative dominance of the structural scales. We interpret this phenomenon, which we also observed in the previous example, as the Amplitude Tensor is a purely galvanic response at this period.

\paragraph{Secret Model $\#3$ (2016)}
\begin{sidewaysfigure}[p]
	\centering
	a) Phase Tensor\\
		\includegraphics[trim=0 50 200 0,  clip,height=150bp]{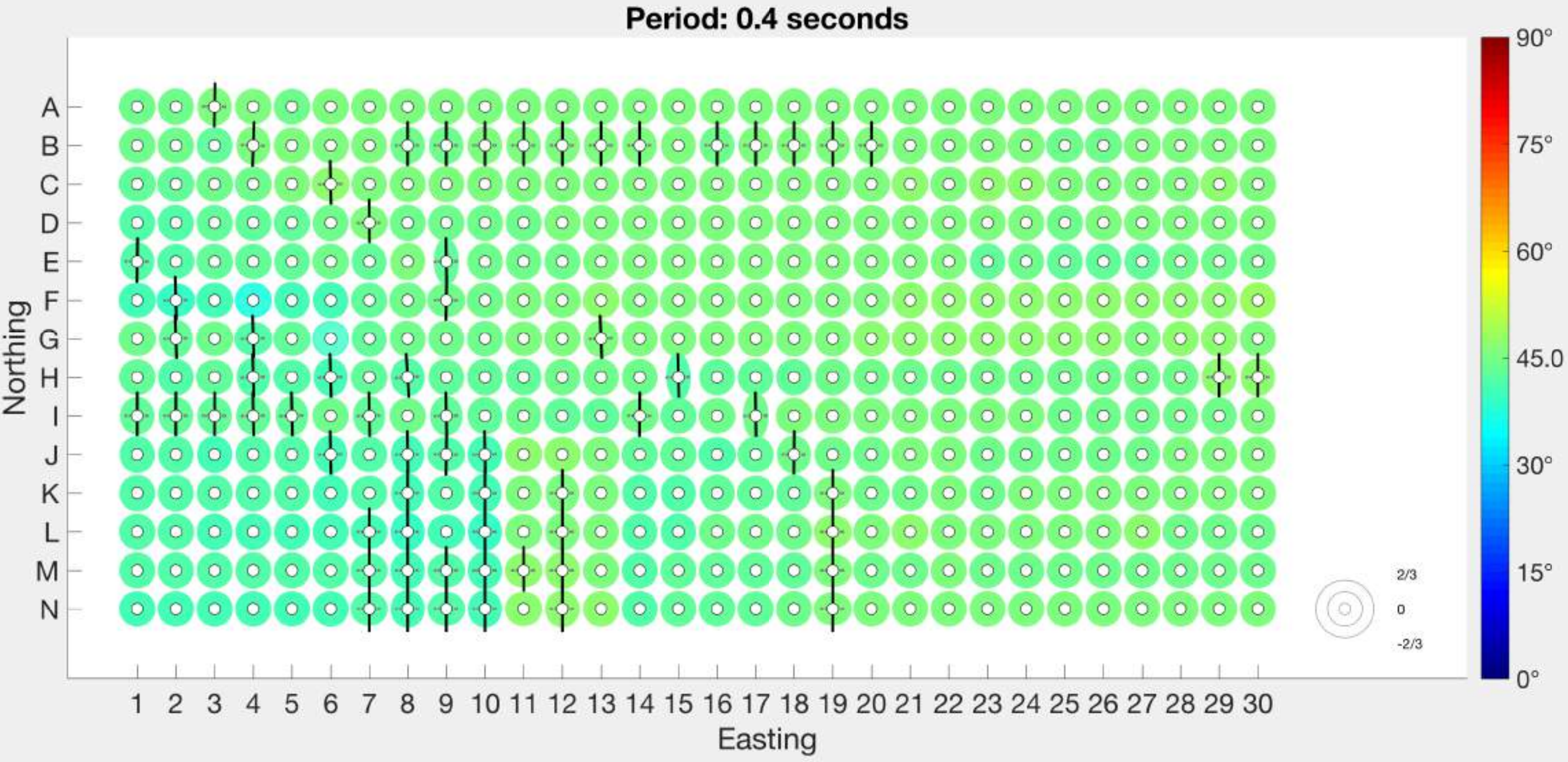}
		\includegraphics[trim=50 50 0 0,    clip,height=150bp]{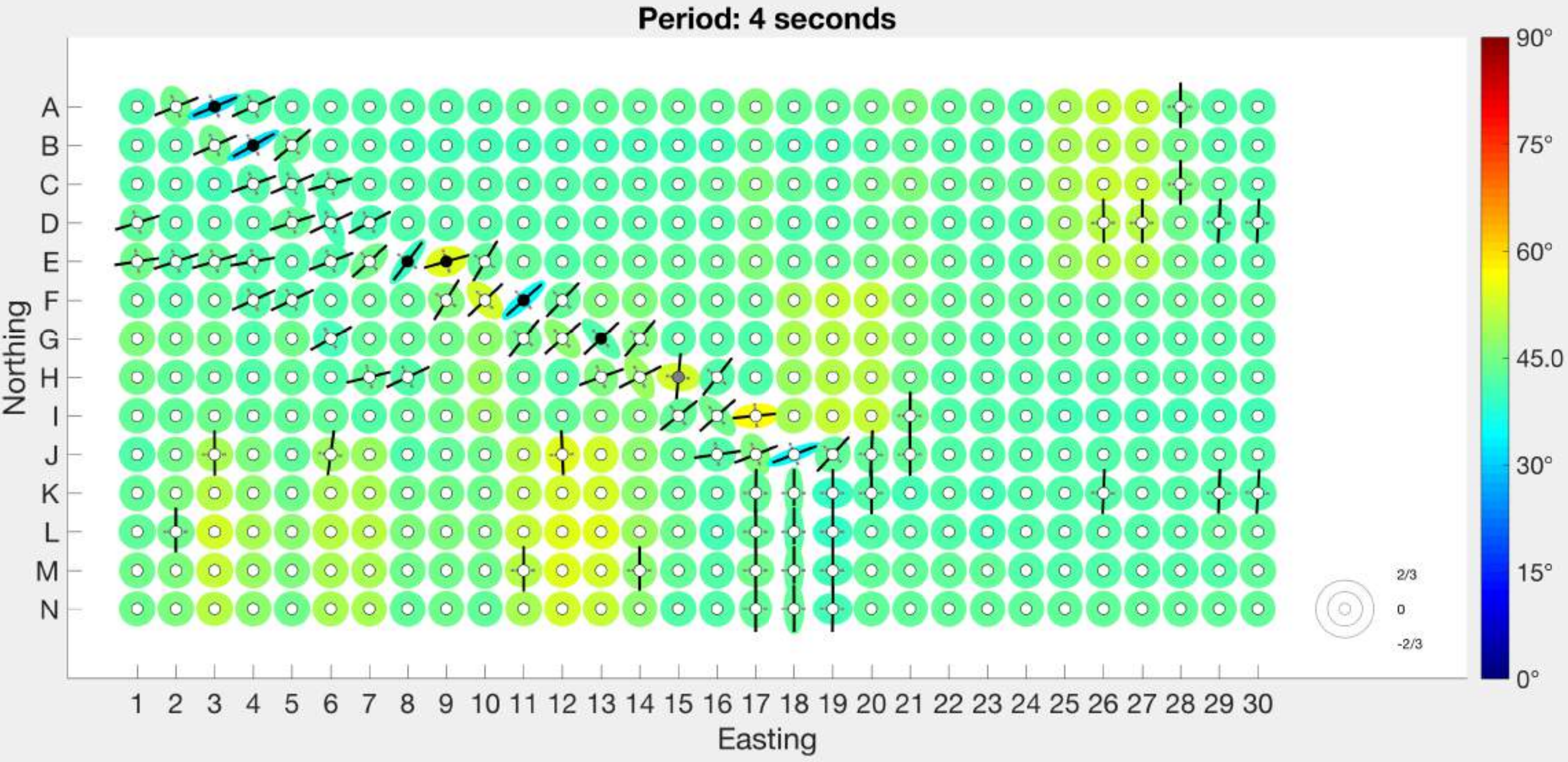}\\
	b ) Amplitude Tensor\\
		\includegraphics[trim=0 0 200 50,  clip,height=150bp]{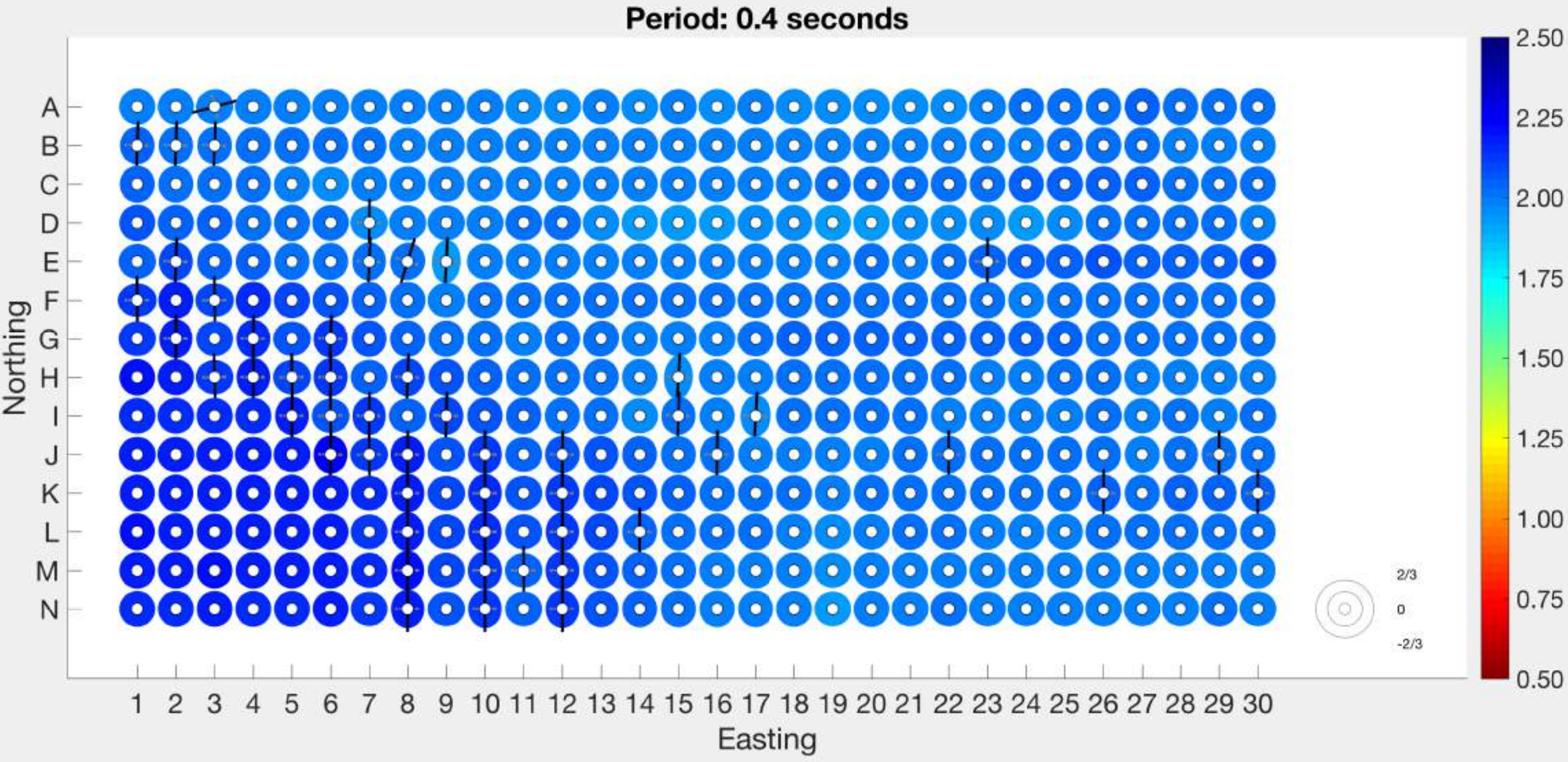}
		\includegraphics[trim=50 0 0 50,    clip,height=150bp]{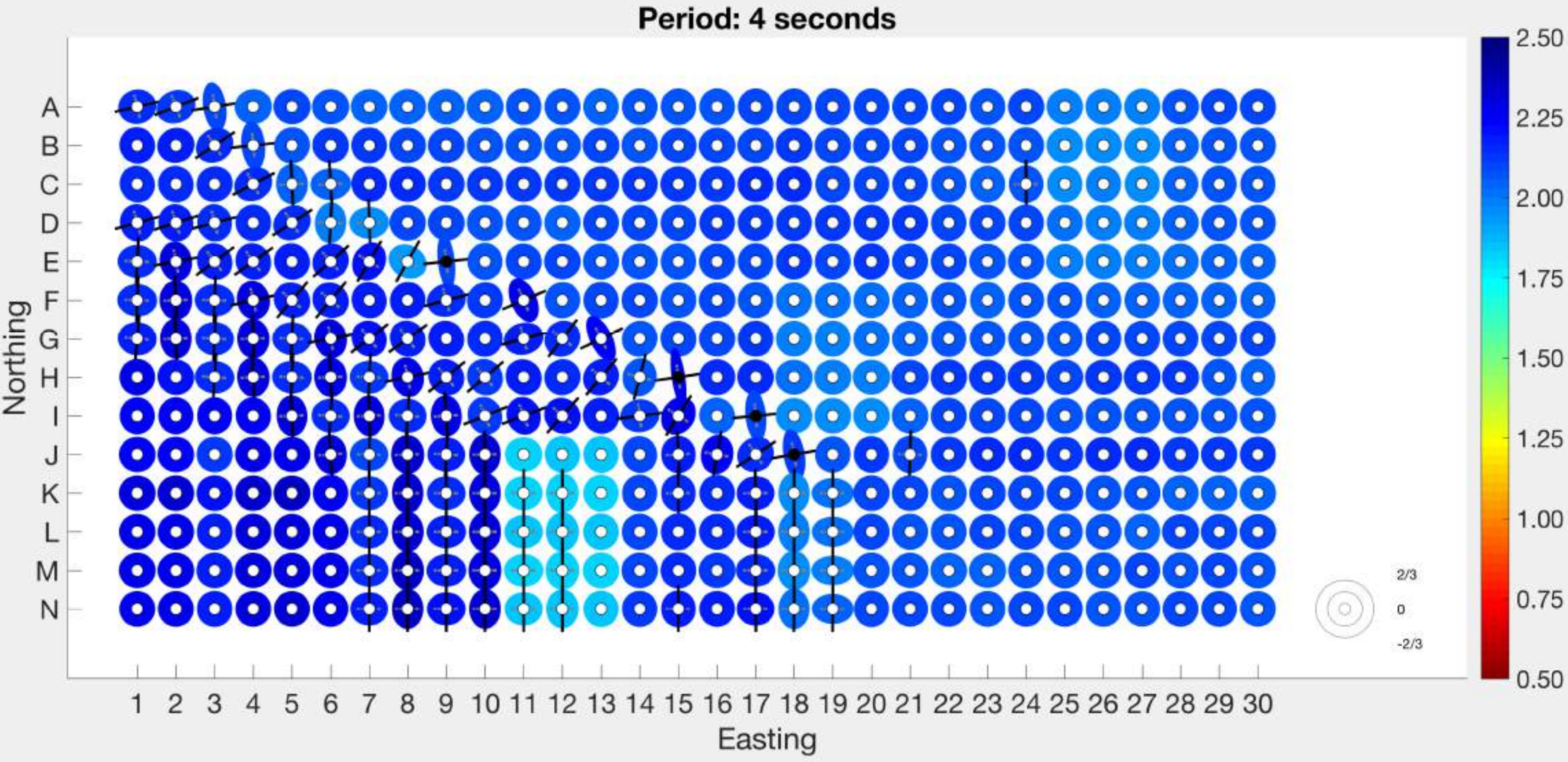}
	\caption{Phase Tensor (a) and Amplitude Tensor (b) parameters at periods of $0.4\,\mathrm{s}$ (left) and $4\,\mathrm{s}$ (right) obtained using the 3D Modelling and Inversion Workshop (2016) Secret Model 3. Tensor parameters are illustrated by scale (degrees and logarithmic apparent resistivity in colour), skew (circle fill - black: 3D, grey: quasi 2D, white: 2D), anisotropy and strike angle (ellipse).}
	\label{fig:em3d3a}
\end{sidewaysfigure}
\begin{sidewaysfigure}[p]
	\centering
	a) Phase Tensor\\
		\includegraphics[trim=0 50 200 0,  clip,height=150bp]{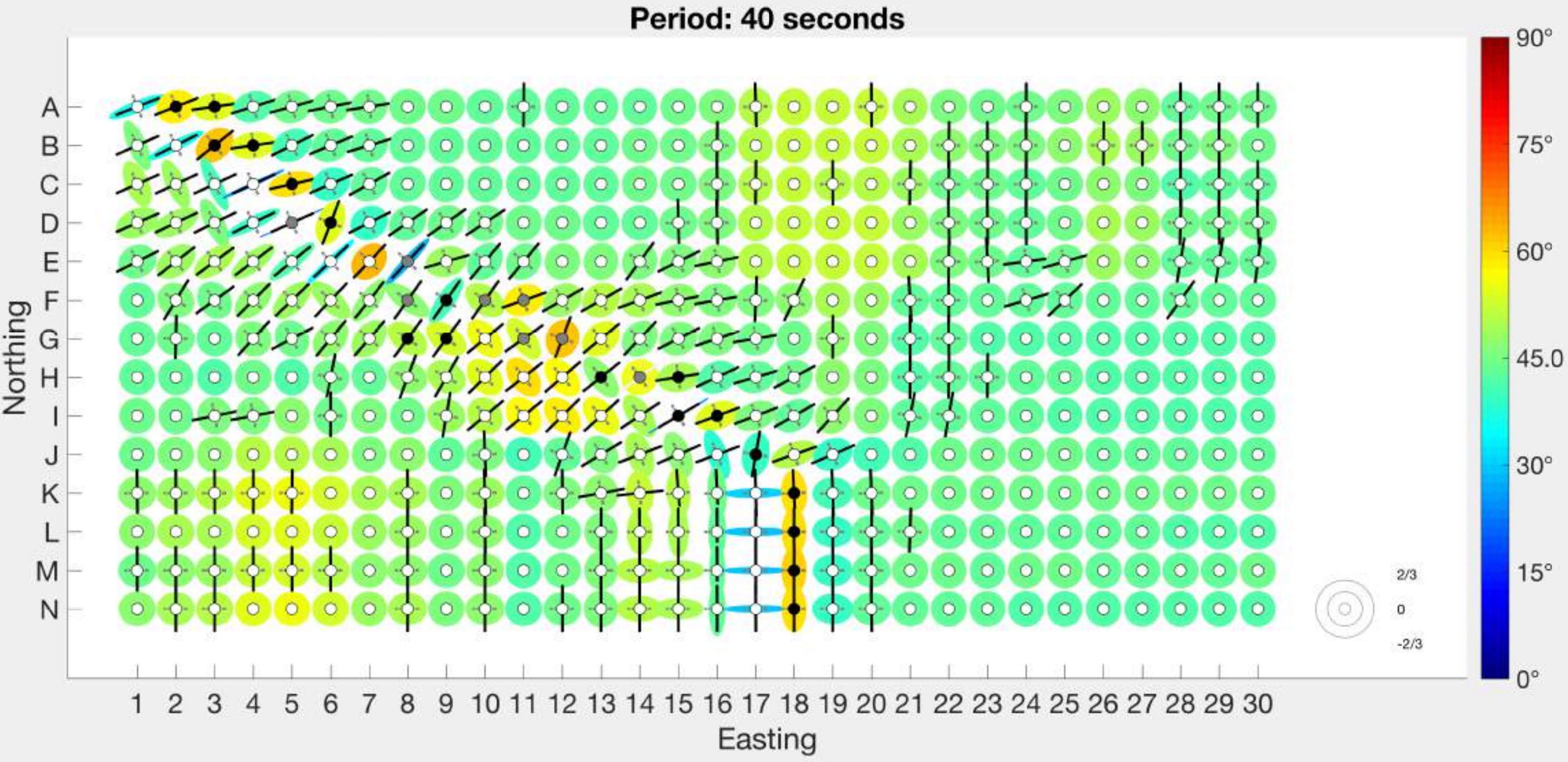}
		\includegraphics[trim=50 50 0 0,    clip,height=150bp]{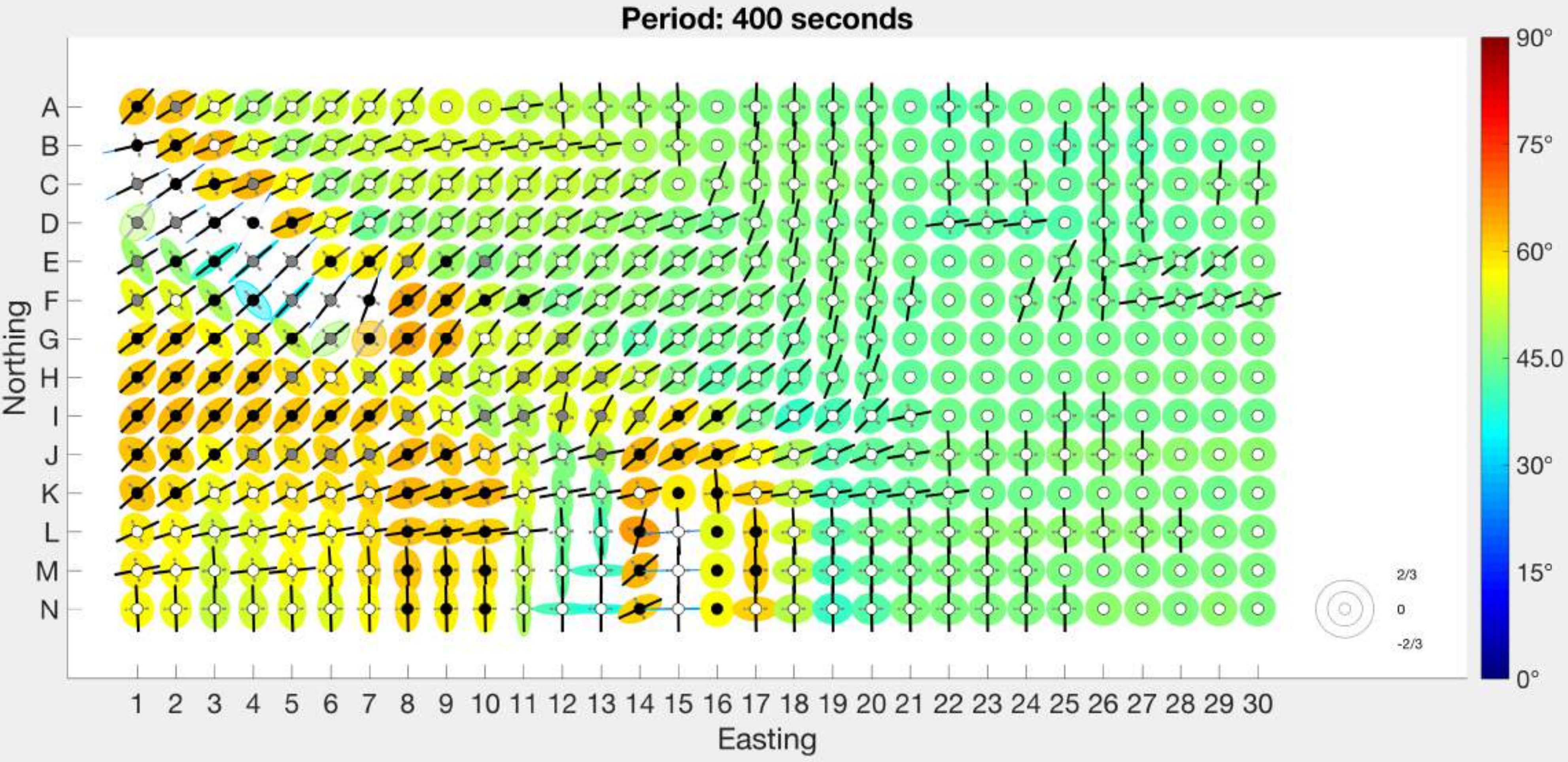}\\
	b) Amplitude Tensor\\
		\includegraphics[trim=0 0 200 50,  clip,height=150bp]{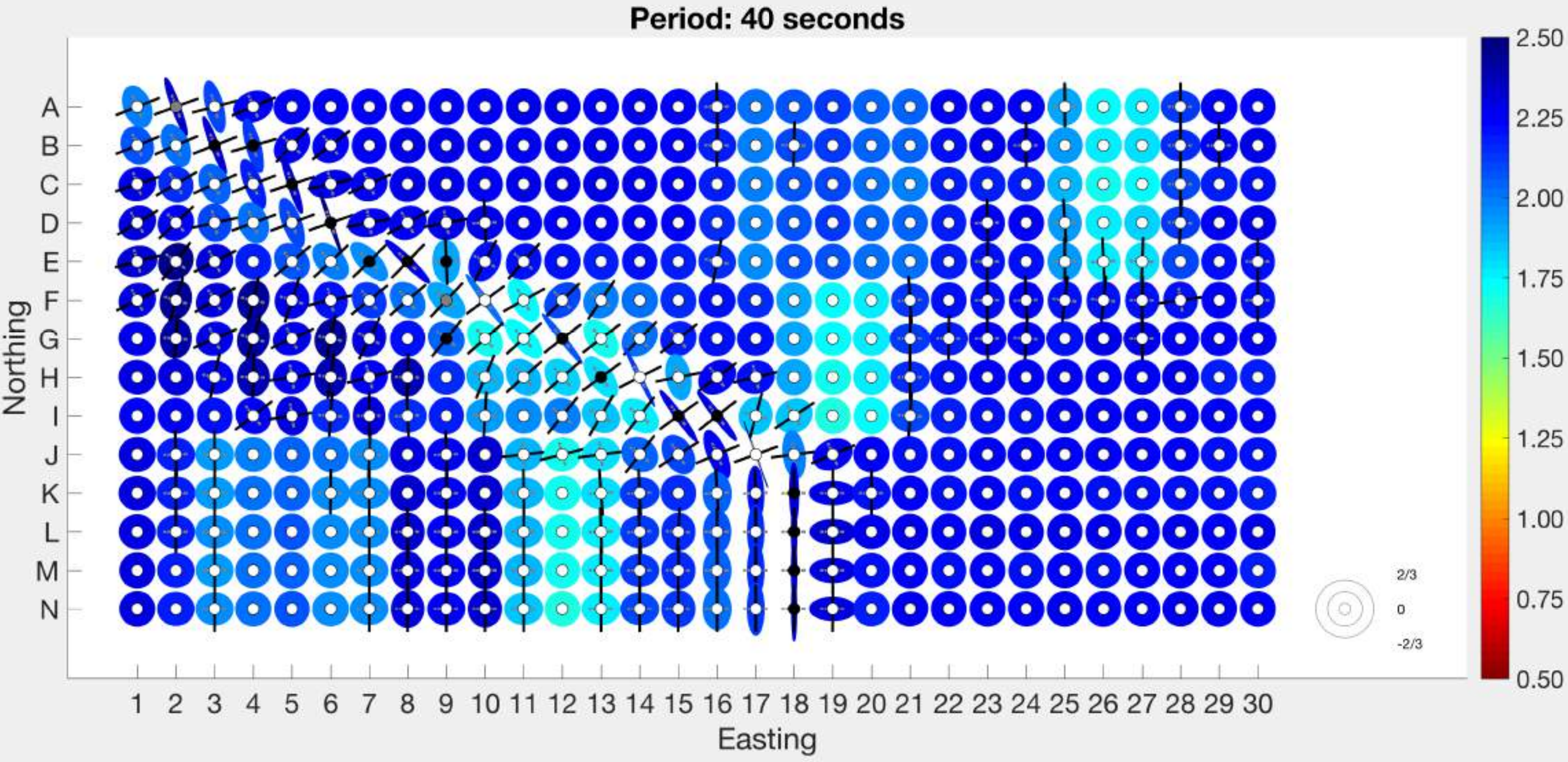}
		\includegraphics[trim=50 0 0 50,    clip,height=150bp]{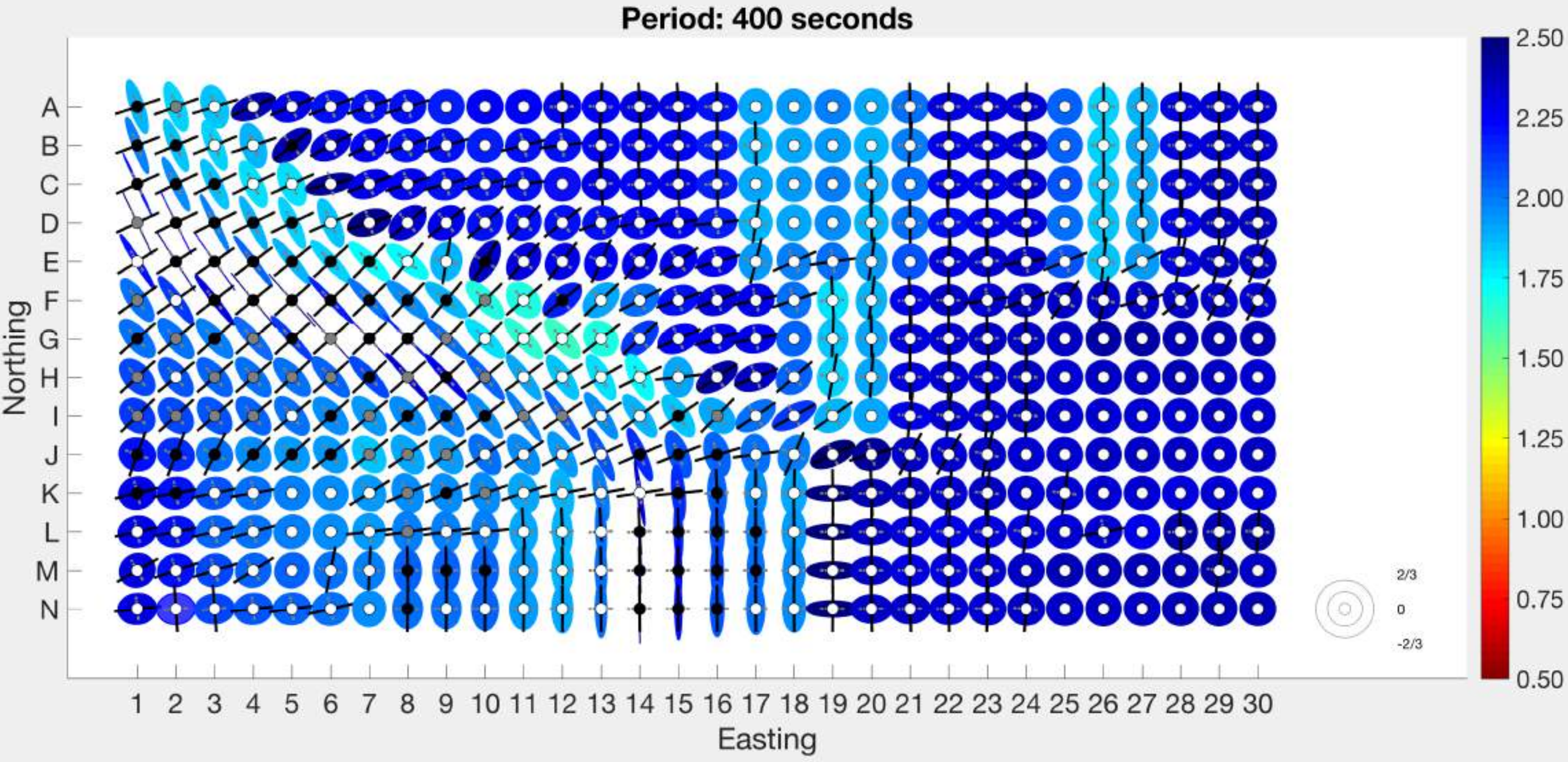}
	\caption{Phase Tensor (a) and Amplitude Tensor (b) parameters at periods of $40\,\mathrm{s}$ (left) and $400\,\mathrm{s}$ (right) obtained using the 3D Modelling and Inversion Workshop (2016) Secret Model 3. Tensor parameters are illustrated by scale (degrees and logarithmic apparent resistivity in colour), skew (circle fill - black: 3D, grey: quasi 2D, white: 2D), anisotropy and strike angle (ellipse).}
	\label{fig:em3d3b}
\end{sidewaysfigure}
\begin{sidewaysfigure}[p]
	\centering
	a) Phase Tensor\\
		\includegraphics[trim=0 50 200 0,  clip,height=150bp]{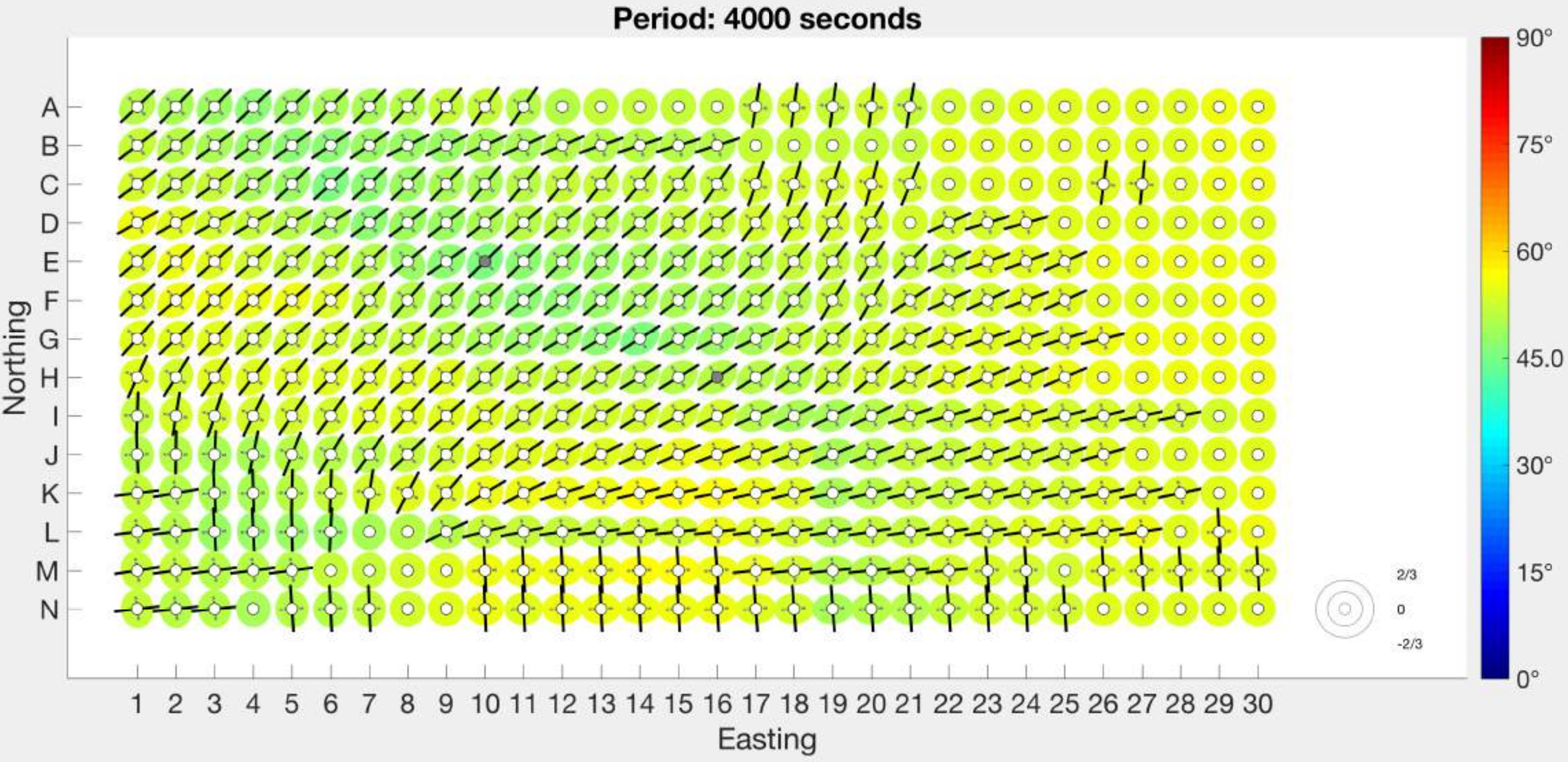}
		\includegraphics[trim=50 50 0 0,    clip,height=150bp]{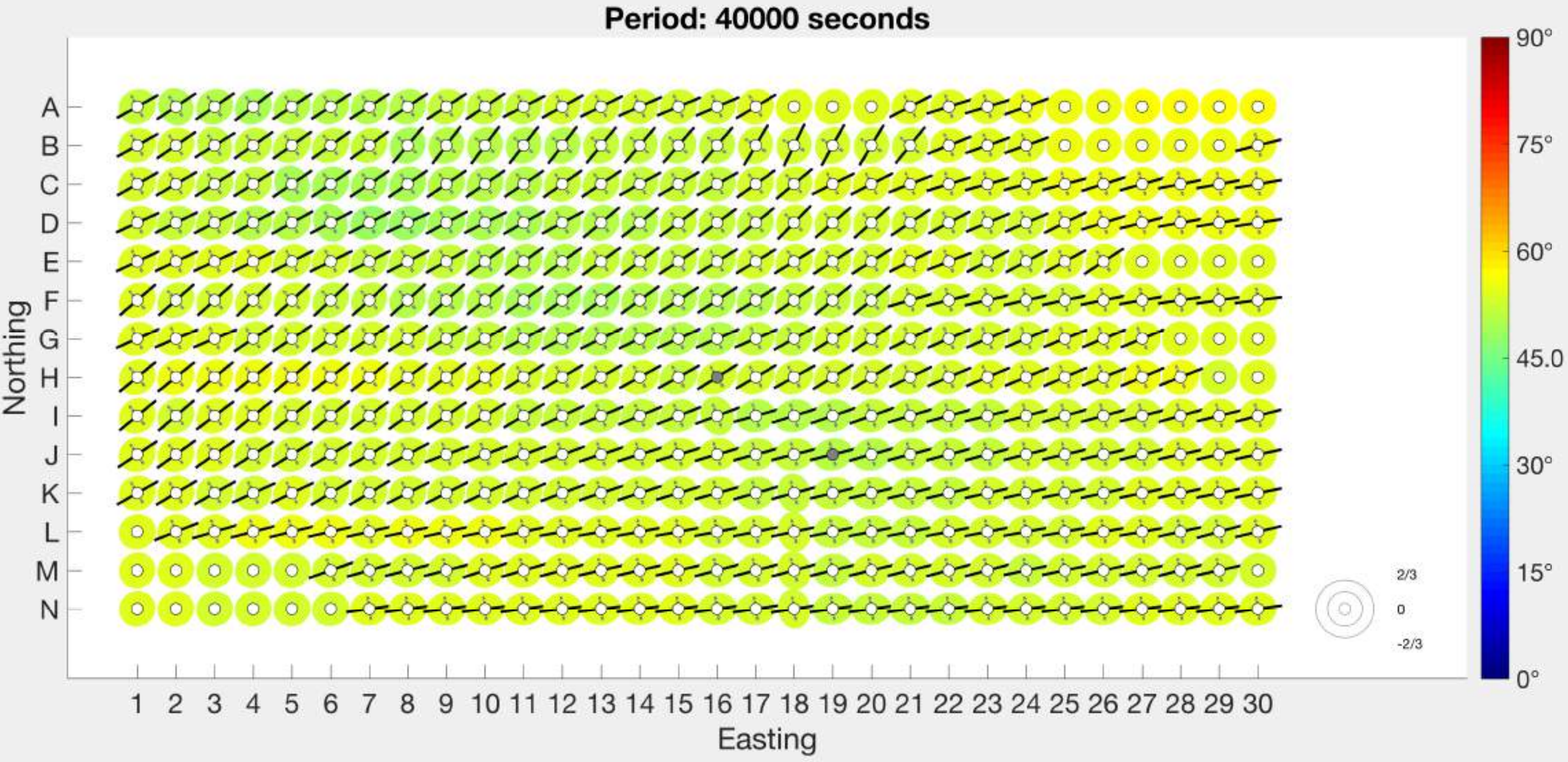}\\
	b) Amplitude Tensor\\
		\includegraphics[trim=0 0 200 50,  clip,height=150bp]{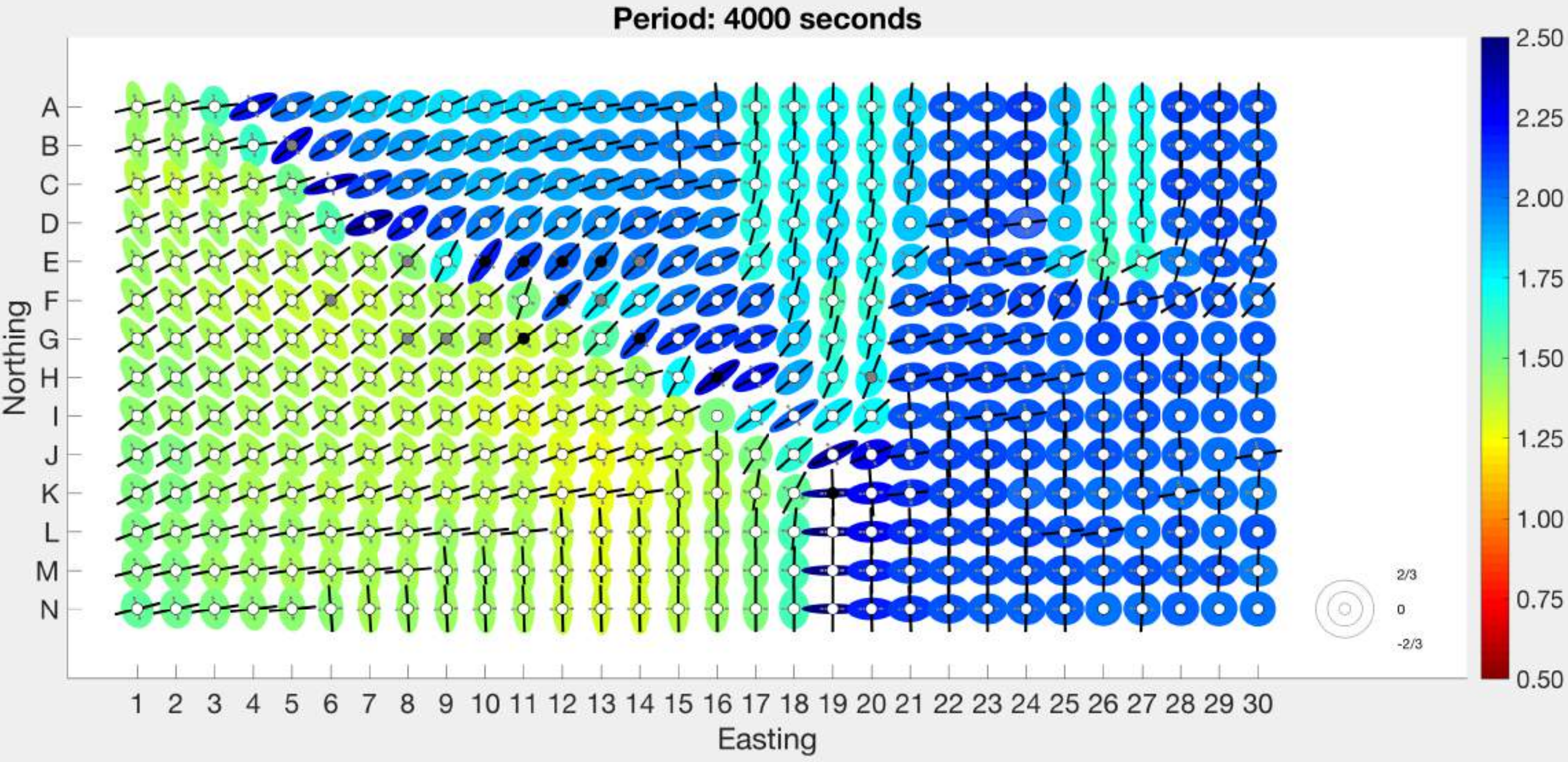}
		\includegraphics[trim=50 0 0 50,    clip,height=150bp]{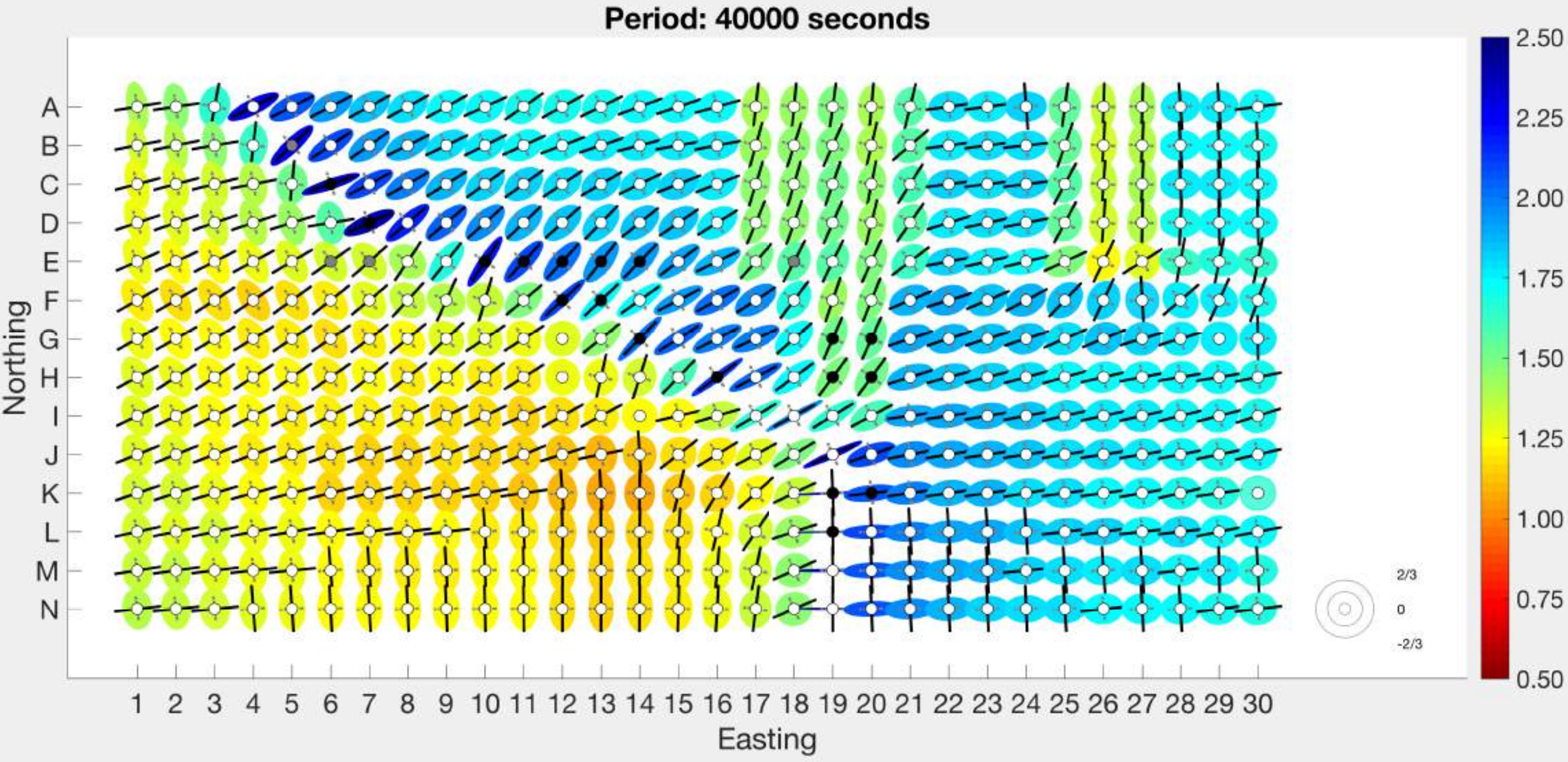}
	\caption{Phase Tensor (a) and Amplitude Tensor (b) parameters at periods of $4,\!000\,\mathrm{s}$ (left) and $40,\!000\,\mathrm{s}$ (right) obtained using the 3D Modelling and Inversion Workshop (2016) Secret Model 3. Tensor parameters are illustrated by scale (degrees and logarithmic apparent resistivity in colour), skew (circle fill - black: 3D, grey: quasi 2D, white: 2D), anisotropy and strike angle (ellipse).}
	\label{fig:em3d3c}
\end{sidewaysfigure}
We also interpret the secret model $\#3$, a large scale model that was discussed at the MT3DINV3 workshop held in 2016. The model represents an ocean-continent setting inspired by the US West coast, which includes bathymetry and topography and is noted as showing strong ocean effects. Therewith it contains more complex data that resembles more closely realistic data than the previous examples. 
Figures \ref{fig:em3d3a}, \ref{fig:em3d3b} and \ref{fig:em3d3c} display parameter map plots at a selection of periods from this model  
specifics of the data can be found on the website (\url{http://www.dias.ie/mt3dinv3/Home.html}) but model specifics are not yet published.

In Figure \ref{fig:em3d3a}, we plot the Phase and Amplitude Tensor parameters at the period of $0.4\,\mathrm{s}$. It can be observed that, at this period and at each station, the parameters of both tensors are almost all identical, with phase and amplitude scales of $45^\circ$ and $100\,\Omega m$, respectively, with indeterminable strike angle, zero anisotropy and zero skew angle, indicating that they both sense the local (homogeneous) subsurface.

At a period of $4\,\mathrm{s}$, still in Figure \ref{fig:em3d3a}, the phase scale parameter shows different shallow conductive anomalies (J3 to N3, J6-7 to N6-7, J11-13 to N11-13, F18-20 to I18-20 and A25-27 to E25-27), which are not yet obvious in the amplitude scale parameter, with a constant value of $100\,\Omega m$. Only on the region J11-13 to N11-13, the amplitude scale parameter is around $50\,\Omega m$, therefore, the conductivity anomaly in this region is likely shallower than the others. At the same period, we observe very large phase anisotropy values and large phase strike angle deviations which must be due to extreme coastal effects. However, it is not yet possible to clearly appreciate these effects in the amplitude parameters.

In Figure \ref{fig:em3d3b}, at a period of $40\,\mathrm{s}$, the Amplitude Tensor shows unmistakably the severe coastal effect represented by large anisotropy values and the different shallow anomalies 
represented by large phase anisotropy values, which could already be appreciated in the Phase Tensor parameters at $4\,\mathrm{s}$. 
At periods between $40\,\mathrm{s}$ and $400\,\mathrm{s}$, the large values of the phase and amplitude anisotropy parameters that we associate to the coastal effect continue to increase and additionally, in the region next to the aforementioned anomaly located between F18-20 to I18-20, the amplitude (phase) scale parameters decrease (increase), showing the presence of a new, larger and deeper conductive body to the North (A17-21 to E17-21).
At $400\,\mathrm{s}$, it is not possible to appreciate anymore the described responses of the phase scale parameter to the shallow conductive structures, but the effects are still present in the amplitude scale. However, at this period, we can observe another phenomenon on the region between J11-13 and N11-13, where the phase scale values have decreased below $45^\circ$ indicating the presence of a resistive body just in a position beneath a previously detected conductive body. 

In Figure \ref{fig:em3d3c}, at $4,\!000\,\mathrm{s}$, the large values of the anisotropy parameter associated to the coastal effect diminish for the Phase (reduced sensitivity) and reduces for the Amplitude (less dominance) Tensors. Moreover, at this period and also at $40,\!000\,\mathrm{s}$, the Phase Tensor strike angle is aligned to the coastline (note the $90^{\circ}$ ambiguity) which indicates that the underlying bedrock is either 1D anisotropic or features a 2D conductivity distribution. On the other hand, at 
$4,\!000\,\mathrm{s}$ and $40,\!000\,\mathrm{s}$, the Amplitude Tensor scale parameter has two areas of distinct values, in one area it lies between $10\,\Omega m$ and $20\,\Omega m$ and in the other it is between $60\,\Omega m$ and $100\,\Omega m$, clearly distinguishing the resistive continental and the conductive marine regions of the model. Further, at these periods, the Amplitude Tensor still shows the dominant features of the model. These are some conductive anomalies located between A17-21 and between A25-27 and E-25-27, which are visible through the low value of the amplitude scale parameter. We first detected them at the period of $40\,\mathrm{s}$ and therefore they are shallower than the depth that is penetrated at the present periods. Thus, it is remarkable that they are still visible even at $40,\!000\,\mathrm{s}$ where almost all Phase Tensor scale values are homogenous indicating the homogeneous and conductive bedrock. 
The dominant character of these conductive anomalies in the impedance responses could be associated to relatively high conductivity or due to large thickness. 

\subsubsection{Anisotropic Model}
\begin{figure}[h]
	\centering
	\includegraphics[width=1\textwidth]{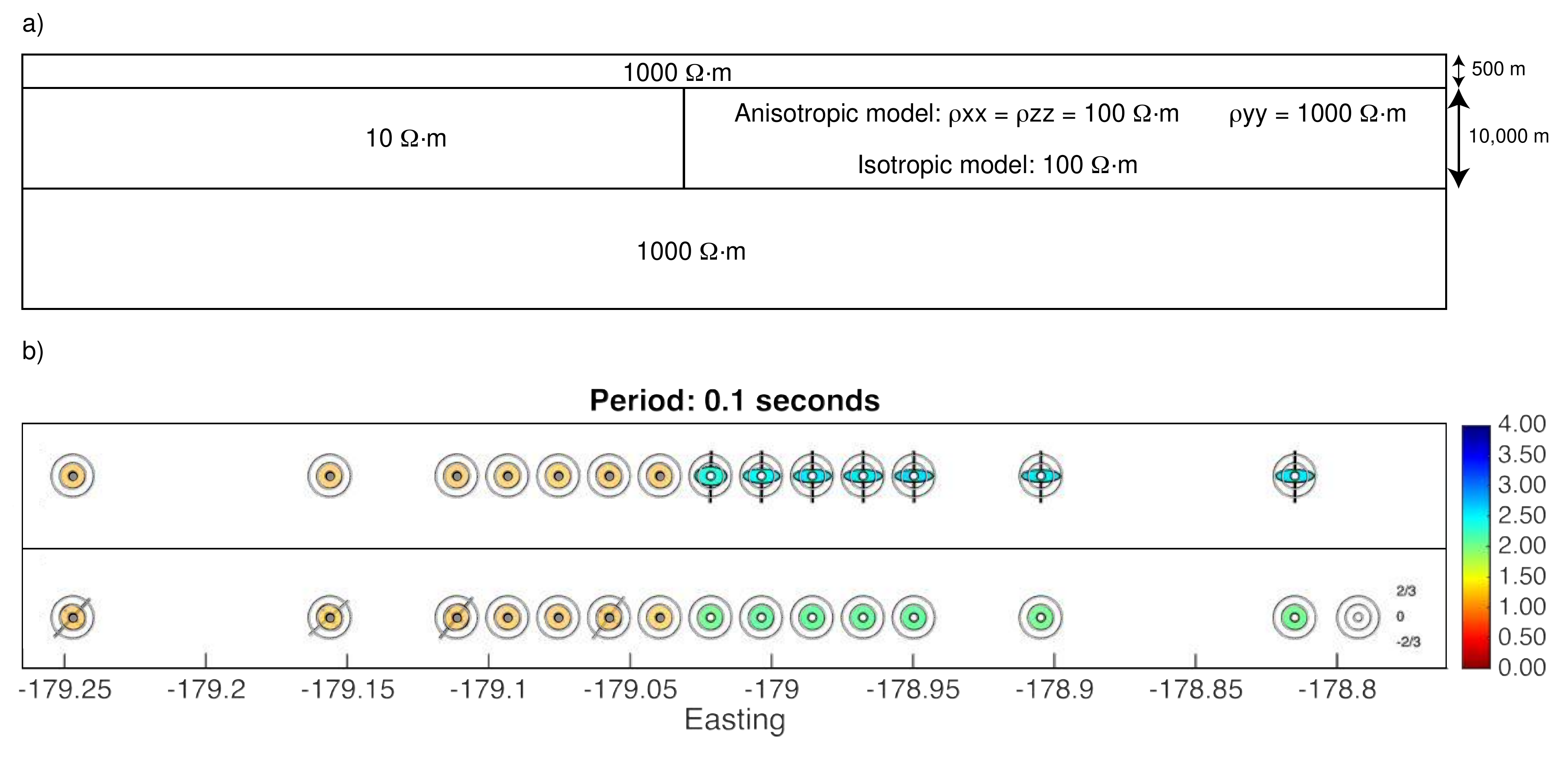}
	\caption{a) Model used to test anisotropy ($x$: into plane, $y$: along profile, and $z$: depth). b) Amplitude Tensor parameter map of the scale (colour) and anisotropy (ellipse) from the anisotropic model (top), compared to the isotropic model (bottom) for a period of $0.1\,\mathrm{s}$. }
	\label{ANI}
\end{figure}
Detection of anisotropy is an important factor in MT data \citep{Marti:2014}. To test the potential
of the Amplitude Tensor to detect microscopic anisotropy we have extracted its parameters
and focused on this feature. Figure \ref{ANI}a shows the anisotropic model we have used for this purpose.
It consists of a $1,\!000\,\Omega \mathrm{m}$ resistive top layer of $500\,\mathrm{m}$ thickness,
followed by two quarter space structures of $10,\!000\,\mathrm{m}$ thickness on top of a $1,\!000\,\Omega \mathrm{m}$ half-space. One of the middle quarter
spaces is $10\,\Omega \mathrm{m}$ isotropic and the second one is anisotropic with
resistivities $\rho_{xx} = 100\,\Omega \mathrm{m}$ and $\rho_{yy} =
1,\!000\,\Omega \mathrm{m}$, with $y$ pointing East. 
We also considered a second model with the same resistivity structures but substituting the anisotropic quarter space with an isotropic one with a resistivity of $100\Omega \mathrm{m}$.

In Figure \ref{ANI}b we show the Amplitude Tensor parameters from the responses of these two models at a frequency of $10\,\mathrm{Hz}$. Observing the parameters associated to each model we see that the parameter values differ on both sides of the vertical contrast line (at approximately $-179.03$ East) but are almost equal within the domain of either side, thus the quarter spaces of both models are well resolved and easy to distinguish for both, the isotropic and the anisotropic models. 
Comparing the top row of parameters (anisotropic model) to the lower one (isotropic model) in Figure \ref{ANI}b, it is clear which parameters correspond to which model (anisotropic or isotropic). The top row parameters show elongated ellipses in the position where the anisotropic quarter space is located in this model, indicating the anisotropy. Contrarily, in the bottom row, the anisotropic parameter in the same positions is null (circles), indicating isotropy in the quarter space of the second isotropic model. 
This technique to pin-point anisotropy was introduced by \citet{Hamilton.ea:2006} within the framework of the SAMTEX experiment.

\subsubsection{Real Data}
\begin{sidewaysfigure}[p]
	\centering
	\includegraphics[height=300bp]{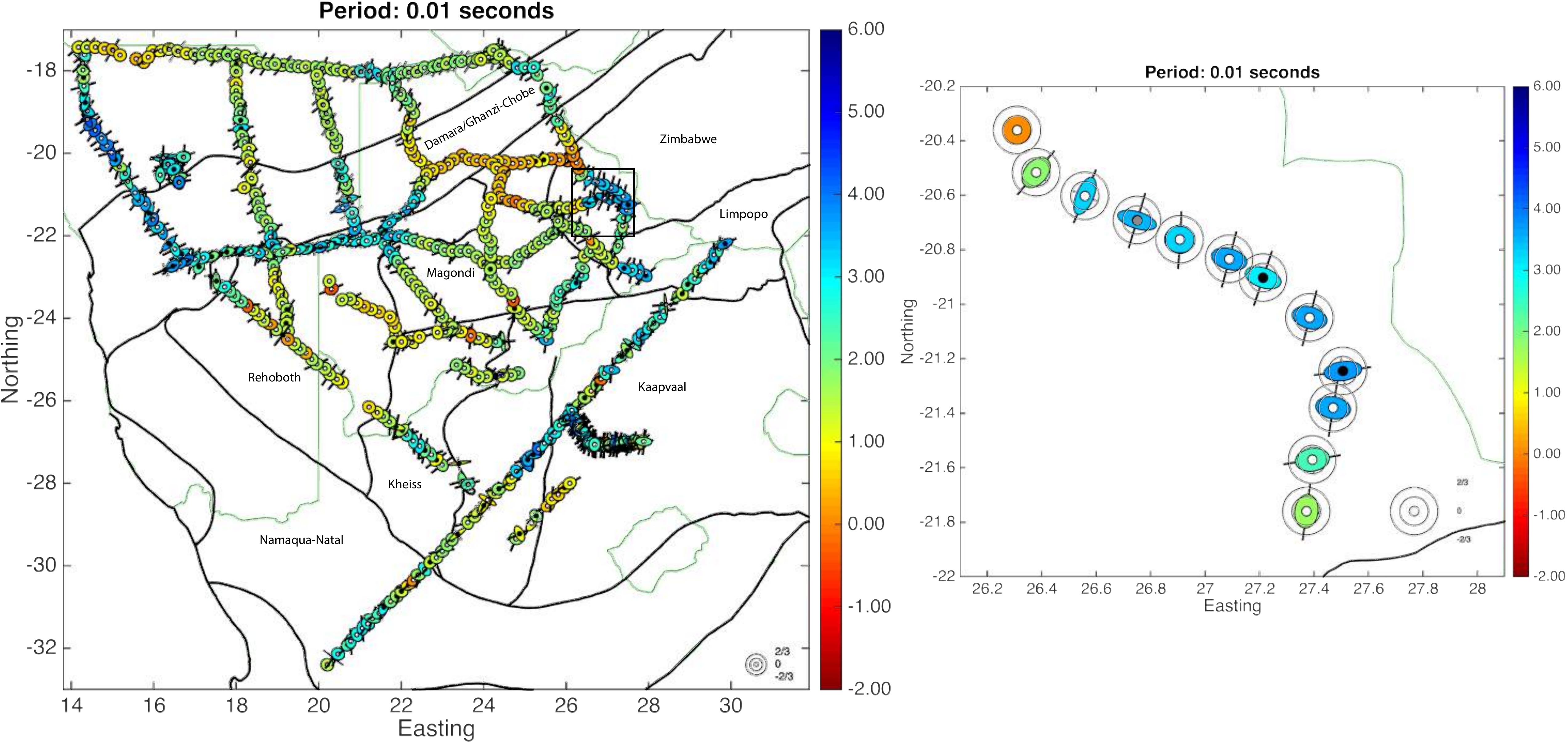}
	\caption{Amplitude Tensor map of the SAMTEX experiment illustrated by scale (logarithmic apparent resistivity in colour), skew (circle fill - black: 3D, grey: quasi 2D, white: 2D), anisotropy and strike angle (ellipse). On the right, there is a zoom of a few sites of an area located in the 
	rectangle on the left plot.}
	\label{samtex}
\end{sidewaysfigure}
Finally, we analyse a field example to show the potential of the Amplitude Tensor for a qualitative interpretation of a real data set. We consider a dataset from the South African {MagnetoTelluric} {EXperiment} ({SAMTEX}).
The {SAMTEX} was an international MT study to delineate and map the
Kaapvaal craton and adjacent terranes. The whole dataset consists of over 740
sites \citep{Jones:2009}, ranging from some $1,\!000\,\mathrm{Hz}$ up to $10,\!000\,\mathrm{s}$. Details of the survey and interpretation have been published elsewhere and we
refer to \citet{Jones:2009} for more information.

Figure \ref{samtex} shows maps of Amplitude Tensor parameters at $100\,\mathrm{Hz}$ derived from this dataset. The amplitude of the apparent resistivities, represented by the
colour of the ellipses, shows a good correlation between Kaapvaal and Zimbabwe Archaean
terranes and Ghanzi-Chobe belt and higher resistivity values (cooler colours) and lower
resistivity values (warmer colours) for some of the Phanerozoic terranes, like the Rehoboth
terrane, already pointed out by \citet{evans2011electrical}. In particular, the
northern Rehoboth terrane border and the data collected in Namibia, on the
Damara and Ghanzi-Chobe belts, are marked by high resistivities, which
correlates well with inversion results from \citet{Muller.ea:2009} and \citet{Khoza:2013}. As
mentioned in the previous paragraph, \citet{Hamilton.ea:2006} plotted the most
conductive direction for the SW-NE profile crossing the Kaapvaal craton. A
direct comparison of the results from \citet{Hamilton.ea:2006} and ours is not
possible since these authors averaged the most conductive directions for crustal
and lithospheric depths, while we present the most conductive direction
(anisotropy) for a specific very short period that is only sampling the upper and middle crust.

Focusing now on the eastern edge of the area covered by the experiment, we
inspect data analysis results from the southwestern border of the Zimbabwe
craton (Figure \ref{samtex} and zoom image). The Magondi terrane and the
northwestern edge of the Zimbabwe cratons are marked by high conductivities,
which have been associated with saline fluids \citep{Miensopust.ea:2011}. Another
of the tectonic features at this location that has a great impact on MT data is the Okavango dike
swarm. \citet{Miensopust.ea:2011} pointed out a couple of effects of the
presence of the dike swarm, first the high resistivities associated to the
dolorites that make the dikes ($30,\!000\,\Omega\mathrm{m}$), which is also observed in our
analysis, that shows a very conductive upper crust to the north of the Magondi
terrane and adjacent terranes. Second, the authors suggest that there is an
anisotropic structure at middle to low crustal depths. Our analysis (see zoom
image in Figure \ref{samtex}) shows that the circles become ellipses, indicative
of anisotropy, at the same location. Our analysis at $100\,\mathrm{Hz}$, which corresponds
to a skin depth of about $5\,\mathrm{km}$, already can sense the anisotropy caused by the
dolorite intrusions (see zoom plot in Figure \ref{samtex}), and analysis at
longer periods (see supplements of the zoom image) show that the anisotropy extends to the lower
crust. As seen in the previous example, according to the orientation of the
ellipses, the most conductive direction of the anisotropy is approximately NS (cp.~zoom image),
except to the Northwest where the direction changes to a more EW angle.

To the West, along the boundary between the Damara/Ghanzi-Chobe belt and the
Rehoboth terrane the data show anisotropy (as mentioned earlier, circles
become ellipses) with an approximate EW trend. Any modelling of these data should
take into account this anisotropy to avoid interpreting artifacts, as the ones pointed out
by \citet{Miensopust.ea:2011}.

\section{Conclusions}
We propose a new MT impedance tensor decomposition into the known Phase Tensor and a newly defined Amplitude Tensor. The tensor decomposition, and in particular the information contained in the Amplitude Tensor, is presented and discussed with synthetic and real data. Analogies to the Phase Tensor are drawn with respect to dimensionality and it is laid out
that the Amplitude Tensor holds information about both galvanic and inductive
effects. 
Especially, it is shown that galvanic effects are present at all period ranges in the Amplitude Tensor but only the galvanic part from the shortest periods should be considered as distortion. 
The galvanic effect at longer periods represents interpretable information about subsurface structure, which is not present in the Phase Tensor and is lost when the Phase Tensor is used solely for interpretation. 
Therefore, the formulation of the Amplitude Tensor bridges the gap between Phase and Impedance Tensors and enables the possibility to interpret Phase Tensor and impedance amplitude information without redundancy, which is present, i.e.~by instead complimenting the Phase Tensor with apparent resistivity data. 

Since the Phase Tensor is unaffected by galvanic distortion of the electric field, all such distortion must be present in the Amplitude Tensor. We demonstrate that the Amplitude Tensor itself also contains inductive information about the subsurface, just as the Phase Tensor, and that, because this information represents the same subsurface geometry, it must be similar to one contained in the Phase Tensor. Through that coupling, we postulate the hypothesis that the galvanic amplitude can be separated from the inductive effects present in the Amplitude Tensor and therewith, the galvanic distortion matrix can be estimated up to a single constant, usually denoted as galvanic shift or site gain $g$. We show exemplarily that, by using the Amplitude Tensor, the electric galvanic distortion anisotropy can be estimated explicitly for 2D subsurfaces, since anisotropy of inductive phase and amplitude must correspond to the same subsurface but the tensor singular values may differ greatly due to anisotropic galvanic distortion. 

For the proposed approximation of the inductive Amplitude Tensor and therewith the estimation of the galvanic electric distortion matrix from the galvanic Amplitude Tensor, we break with the principal assumptions of nearly all distortion analysis methods and do not constrain the sought after impedance to any dimensionality. Therefore, we offer a new way to explore galvanic effects (including electric galvanic distortion) in the impedance data for general subsurfaces. This shows that the introduction of the Amplitude Tensor opens a new perspective on distortion analysis. 

The Amplitude Tensor presented in this work is clearly complementary to the Phase Tensor as it contains integrated information about inductive and galvanic effects of the subsurface. Therewith, it can also be used as an additional source of independent information for Phase Tensor inversions. 
Furthermore, we demonstrate that map plots of the Amplitude Tensor and of the Phase Tensor parameters at different periods can be directly interpreted to obtain qualitative information of the subsurface structure. Moreover, we show that the information added by the Amplitude Tensor can complement the information of the Phase Tensor for a more accurate interpretation. Thus, 
map plots of the Amplitude Tensor can be useful, for instance, to design a preliminary model before starting a computationally expensive 3D inversion.

\section{Supplements}
\label{sec:supp}
A MatLab executable to compute Amplitude and Phase Tensor parameters is available  at:
\begin{center}
\url{https://www.dropbox.com/sh/16ldi7omr99uvvd/AADNnjL8RuugEWiNupaKgAh3a?dl=0}
\end{center}
\noindent for academic use. At the same location, parameter maps of the models discussed in this work are available for a longer period range for the interested reader.


\section*{Acknowledgments}
This work was funded by Repsol under the framework of the CO-DOS project. 
All MT3D workshop datasets are publicly available at \begin{center} \url{http://www.complete-mt-solutions.com/mtnet/workshops/em_workshops.html#3DMTINV}\end{center} for download and further details. For this we especially thank the organisers of the workshops and the creators of the models, Alan Jones, Marion Miensopust, Pilar Queralt and Jan Vozar. The constructive criticism of the reviewers John Booker, Alan Jones, Anna Avdeeva and two anonymous reviewers, and of the editors Gary Egbert and Ute Weckmann have greatly improved earlier versions of this paper. 

\appendix

\section{Proofs for the Amplitude-Phase Decomposition}
\label{app:proofs}
Let $\mathbf{M}\in\mathbb{C}^{n\times n}$ with $\mathbf{M}=\mathbf{X}+i\mathbf{Y}$, where
 $\mathbf{X},\mathbf{Y}\in \mathbb{R}^{n\times n}$ are invertible.
 Set $\mathbf{\Phi} = \mathbf{X}^{-1} \mathbf{Y}$ and:
  \begin{align*}
  \mathbf{C}_0 & = (\mathbb{I}+\mathbf{\Phi \Phi}^T)^{-1/2},\\
  \mathbf{S}_0 & = \mathbf{C}_0 \mathbf{\Phi},\\
  \mathbf{P}_0 & = \mathbf{X C}_0^{-1} = \mathbf{Y S}_0^{-1},
 \end{align*} 
  where $\mathbb{I}\in \mathbb{R}^{n\times n}$ denotes the identity matrix.

\begin{thm}  \label{thm}
 For any orthogonal $\mathbf{U}\in \mathbb{R}^{n\times n}$ the matrices,
   \[
  \mathbf{C} = \mathbf{U C}_0,\quad \mathbf{S}= \mathbf{U S}_0,\quad \mathbf{P}= \mathbf{P}_0 \mathbf{U}^T
 \] 
 satisfy:
 \begin{align}
\label{1}
      \mathbf{S} & = \mathbf{C\Phi},\\
      \label{2}
  \mathbf{M} & = \mathbf{P}(\mathbf{C}+i\mathbf{S}),\\
  \label{3}
  \mathbb{I} &= \mathbf{CC}^T + \mathbf{SS}^T.
\end{align} 
\end{thm}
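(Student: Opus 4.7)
The plan is to verify the three identities directly by exploiting the orthogonality of $\mathbf{U}$ and the symmetry of $\mathbf{C}_0$. Before touching any of them, I would record the structural facts that make the construction consistent: $\mathbf{\Phi}=\mathbf{X}^{-1}\mathbf{Y}$ is invertible (with inverse $\mathbf{Y}^{-1}\mathbf{X}$) because $\mathbf{X},\mathbf{Y}$ are; the real symmetric matrix $\mathbb{I}+\mathbf{\Phi\Phi}^T$ is positive definite since $\mathbf{\Phi\Phi}^T$ is positive semidefinite, so it admits a unique symmetric positive-definite inverse square root, giving $\mathbf{C}_0^T=\mathbf{C}_0$; and the two definitions $\mathbf{P}_0=\mathbf{X}\mathbf{C}_0^{-1}$ and $\mathbf{P}_0=\mathbf{Y}\mathbf{S}_0^{-1}$ agree, because
\[
\mathbf{Y}\mathbf{S}_0^{-1}=\mathbf{Y}(\mathbf{C}_0\mathbf{\Phi})^{-1}=\mathbf{Y}\mathbf{\Phi}^{-1}\mathbf{C}_0^{-1}=\mathbf{Y}(\mathbf{Y}^{-1}\mathbf{X})\mathbf{C}_0^{-1}=\mathbf{X}\mathbf{C}_0^{-1}.
\]

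For \eqref{1} I would simply left-multiply the defining relation $\mathbf{S}_0=\mathbf{C}_0\mathbf{\Phi}$ by $\mathbf{U}$ to obtain $\mathbf{S}=\mathbf{U}\mathbf{S}_0=(\mathbf{U}\mathbf{C}_0)\mathbf{\Phi}=\mathbf{C}\mathbf{\Phi}$. For \eqref{2} I would collapse the $\mathbf{U}^T\mathbf{U}$ pair:
\[
\mathbf{P}(\mathbf{C}+i\mathbf{S})=\mathbf{P}_0\mathbf{U}^T(\mathbf{U}\mathbf{C}_0+i\mathbf{U}\mathbf{S}_0)=\mathbf{P}_0(\mathbf{C}_0+i\mathbf{S}_0)=\mathbf{X}+i\mathbf{Y}=\mathbf{M},
\]
where the last step uses both expressions for $\mathbf{P}_0$ established above. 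Thus the orthogonality $\mathbf{U}^T\mathbf{U}=\mathbb{I}$ is what lets the freedom in choosing $\mathbf{U}$ drop out of $\mathbf{M}$.

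The identity \eqref{3} is the heart of the argument. Using $\mathbf{C}_0^T=\mathbf{C}_0$ and $\mathbf{S}_0=\mathbf{C}_0\mathbf{\Phi}$, I would compute $\mathbf{C}_0\mathbf{C}_0^T+\mathbf{S}_0\mathbf{S}_0^T=\mathbf{C}_0(\mathbb{I}+\mathbf{\Phi\Phi}^T)\mathbf{C}_0$, and then observe that by the definition of $\mathbf{C}_0$ this equals $(\mathbb{I}+\mathbf{\Phi\Phi}^T)^{-1/2}(\mathbb{I}+\mathbf{\Phi\Phi}^T)(\mathbb{I}+\mathbf{\Phi\Phi}^T)^{-1/2}=\mathbb{I}$. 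Sandwiching with $\mathbf{U}$ and $\mathbf{U}^T$ and using $\mathbf{U}\mathbf{U}^T=\mathbb{I}$ yields $\mathbf{C}\mathbf{C}^T+\mathbf{S}\mathbf{S}^T=\mathbb{I}$. The only delicate point in the whole proof, and therefore the main obstacle to be pinned down, is the justification that $\mathbf{C}_0$ is genuinely symmetric so that the inner cancellation $\mathbf{C}_0(\mathbb{I}+\mathbf{\Phi\Phi}^T)\mathbf{C}_0=\mathbb{I}$ goes through; once the well-definedness and symmetry of the matrix inverse square root are acknowledged, everything else is a short manipulation.
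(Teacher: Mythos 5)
Your proposal is correct and follows essentially the same route as the paper: identity \eqref{1} by left-multiplying $\mathbf{S}_0=\mathbf{C}_0\mathbf{\Phi}$ with $\mathbf{U}$, identity \eqref{3} via $\mathbf{C}_0=\mathbf{C}_0^T$ and the sandwich $\mathbf{U}\mathbf{C}_0(\mathbb{I}+\mathbf{\Phi\Phi}^T)\mathbf{C}_0^T\mathbf{U}^T=\mathbb{I}$, and identity \eqref{2} by cancelling $\mathbf{U}^T\mathbf{U}$. Your added remarks on the positive definiteness of $\mathbb{I}+\mathbf{\Phi\Phi}^T$ and the consistency $\mathbf{X}\mathbf{C}_0^{-1}=\mathbf{Y}\mathbf{S}_0^{-1}$ only make explicit what the paper uses implicitly (the paper finishes \eqref{2} by writing $\mathbf{P}_0\mathbf{C}_0(\mathbb{I}+i\mathbf{\Phi})=\mathbf{X}(\mathbb{I}+i\mathbf{X}^{-1}\mathbf{Y})$, which needs only the first expression for $\mathbf{P}_0$), so there is no substantive difference.
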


\begin{proof}
 By $\mathbf{S}_0 = \mathbf{C}_0 \mathbf{\Phi}$ we obviously have \eqref{1}. Equation \eqref{3} follows from
 $\mathbf{C}_0 = \mathbf{C}_0^T$, $\mathbf{UU}^T = \mathbb{I}$ and
  \[
  \mathbf{CC}^T + \mathbf{SS}^T
= \mathbf{UC}_0(\mathbb{I}+\mathbf{\Phi \Phi}^T) \mathbf{C}_0^T \mathbf{U}^T = \mathbb{I}.
 \] 
 Since $\mathbf{U}^T \mathbf{U}=\mathbb{I}$ we obtain \eqref{2} by:
  \[
 \mathbf{P}(\mathbf{C}+i\mathbf{S})= \mathbf{P}_0 \mathbf{U}^T \mathbf{U C}_0 (\mathbb{I}+i\mathbf{\Phi}) = \mathbf{X}( \mathbb{I} + i \mathbf{X}^{-1}\mathbf{Y}) = \mathbf{M}.\qedhere
 \] 
 \end{proof}

\begin{thm}  \label{thm2}
 For matrices $\mathbf{C},\mathbf{S},\mathbf{P}\in \mathbb{R}^{n\times n}$ which satisfy
 \eqref{1}, \eqref{2} and \eqref{3},
 there exists an orthogonal matrix $\mathbf{U}\in \mathbb{R}^{n\times n}$ so that:
  \[
  \mathbf{C} = \mathbf{U C}_0,\quad \mathbf{S}= \mathbf{U S}_0,\quad  \mathbf{P}= \mathbf{P}_0 \mathbf{U}^T.
 \] 
\end{thm}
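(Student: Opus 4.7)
My plan is to extract, from hypotheses \eqref{1}--\eqref{3}, a canonical positive-symmetric factor that the unknown matrices $\mathbf{C},\mathbf{S},\mathbf{P}$ share with $\mathbf{C}_0,\mathbf{S}_0,\mathbf{P}_0$, and to read off $\mathbf{U}$ as the orthogonal excess. The starting move is to combine \eqref{1} and \eqref{3}: substituting $\mathbf{S}=\mathbf{C\Phi}$ into $\mathbf{CC}^T+\mathbf{SS}^T=\mathbb{I}$ gives
\[
  \mathbf{C}(\mathbb{I}+\mathbf{\Phi\Phi}^T)\mathbf{C}^T=\mathbb{I}.
\]
Since $\mathbb{I}+\mathbf{\Phi\Phi}^T$ is symmetric positive definite, this identity already shows that $\mathbf{C}$ is invertible (its determinant is nonzero) and, after taking inverses on both sides, that
\[
  \mathbf{C}^T\mathbf{C}=(\mathbb{I}+\mathbf{\Phi\Phi}^T)^{-1}=\mathbf{C}_0^{2}.
\]
Because $\mathbf{C}_0$ is the unique symmetric positive-definite square root of the right-hand side, this is the pivotal relation: it forces $\mathbf{C}$ and $\mathbf{C}_0$ to agree up to a left orthogonal factor.

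The natural candidate is therefore $\mathbf{U}:=\mathbf{C}\mathbf{C}_0^{-1}$. A short computation, $\mathbf{U}^T\mathbf{U}=\mathbf{C}_0^{-1}\mathbf{C}^T\mathbf{C}\,\mathbf{C}_0^{-1}=\mathbf{C}_0^{-1}\mathbf{C}_0^{2}\mathbf{C}_0^{-1}=\mathbb{I}$, confirms that $\mathbf{U}$ is orthogonal, and by construction $\mathbf{C}=\mathbf{UC}_0$. The identity \eqref{1} then gives $\mathbf{S}=\mathbf{C\Phi}=\mathbf{UC}_0\mathbf{\Phi}=\mathbf{US}_0$ for free.

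It remains to pin down $\mathbf{P}$. Splitting \eqref{2} into real and imaginary parts yields $\mathbf{X}=\mathbf{PC}$ and $\mathbf{Y}=\mathbf{PS}$; since $\mathbf{C}$ is invertible, $\mathbf{P}=\mathbf{X}\mathbf{C}^{-1}=\mathbf{X}\mathbf{C}_0^{-1}\mathbf{U}^T=\mathbf{P}_0\mathbf{U}^T$, as desired. (The alternative expression via $\mathbf{S}$ is automatically consistent because $\mathbf{S}=\mathbf{C\Phi}$ and $\mathbf{\Phi}=\mathbf{X}^{-1}\mathbf{Y}$.) The only delicate point in the whole argument is the uniqueness of the positive-definite square root used to pass from $\mathbf{C}^T\mathbf{C}=\mathbf{C}_0^2$ to a canonical orthogonal $\mathbf{U}$; this is what ensures that the decomposition is truly rigid and not merely one solution among many. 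Everything else reduces to substitutions using \eqref{1}--\eqref{3} and the definitions of $\mathbf{C}_0,\mathbf{S}_0,\mathbf{P}_0$.
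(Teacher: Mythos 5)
Your proof is correct and follows essentially the same route as the paper: you define $\mathbf{U}=\mathbf{C}\mathbf{C}_0^{-1}$, establish its orthogonality from \eqref{1} and \eqref{3} (you do this via the intermediate identity $\mathbf{C}^T\mathbf{C}=\mathbf{C}_0^2$, the paper computes $\mathbf{U}\mathbf{U}^T$ directly, which is the same calculation reorganized), and then read off $\mathbf{S}=\mathbf{U}\mathbf{S}_0$ and $\mathbf{P}=\mathbf{P}_0\mathbf{U}^T$ from \eqref{1} and the real part of \eqref{2}. The appeal to uniqueness of the positive-definite square root is not actually needed, since your argument verifies everything directly from the definition of $\mathbf{C}_0$.
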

\begin{proof}
 We set $\mathbf{U} = \mathbf{C C}_0^{-1}$ and verify that
 $\mathbf{U}$ is an orthogonal matrix with
  $\mathbf{C} = \mathbf{U C}_0$, $ \mathbf{S}= \mathbf{U S}_0$ and $ \mathbf{P}= \mathbf{P}_0  \mathbf{U}^T.$
 We have by $\mathbf{C}_0^{-1}= (\mathbf{C}_0^{-1})^T$, \eqref{1} and \eqref{3} the orthogonality:
  \[
 \mathbf{U  U}^T =  \mathbf{C C}_0^{-1} \mathbf{C}_0^{-1}  \mathbf{C}^T
 =  \mathbf{C} (\mathbb{I} + \mathbf{\Phi \Phi}^T)  \mathbf{C}^T =  \mathbf{C  C}^T +  \mathbf{S  S}^T = \mathbb{I}.
 \] 
  Then, clearly $\mathbf{C} =  \mathbf{U C}_0$ and by \eqref{1} holds:
   \begin{align*}
   \mathbf{U S}_0 =  \mathbf{C C}_0^{-1} \mathbf{S}_0 =  \mathbf{C \Phi} = \mathbf{S}.
  \end{align*} 
 Finally, from \eqref{2} follows $ \mathbf{P  C}= \mathbf{X}$ and
  \[
  \mathbf{P}_0 = \mathbf{X C}_0^{-1} =  \mathbf{P  C C}_0^{-1} =  \mathbf{P  U},
 \] 
 which leads by the orthogonality of $ \mathbf{U}$ to $\mathbf{P}_0  \mathbf{U}^T =  \mathbf{P}$.
\end{proof}

Theorem~\ref{thm} and Theorem~\ref{thm2} show that the solutions of \eqref{1}, \eqref{2} and \eqref{3}
are given up to orthogonal transformations by $\mathbf{C}_0$, $\mathbf{S}_0$ and $\mathbf{P}_0$. An advantage of $\mathbf{C}_0$ and $\mathbf{S}_0$ compared to orthogonal transformations of those matrices lies in the fact that the singular value decomposition (SVD) of $\mathbf{\Phi}$ carries to some extent over to $\mathbf{C}_0$ and $\mathbf{S}_0$. To see this, let us denote the SVD of $\mathbf{\Phi}$ by:
  \[
  \mathbf{\Phi} = \mathbf{V}_\Phi \mathbf{\Sigma}_\Phi \mathbf{W}_\Phi^T,
 \] 
 where $\mathbf{V}_\Phi,\mathbf{W}_\Phi\in\mathbb{R}^{n\times n}$ are orthogonal matrices
 and $\mathbf{\Sigma}_\Phi= \rm diag(\sigma_1,\dots,\sigma_n)$ is a diagonal matrix, containing the singular values $\sigma_i \geq 0$ for $i=1,\dots,n$ of $\mathbf{\Phi}$.
 Then, one easily recovers the following expressions due to the matrix square root properties as discussed below:
  \begin{align*}
  \mathbf{C}_0 & = \mathbf{V}_\Phi (\mathbb{I}+\mathbf{\Sigma}_\Phi^2)^{-1/2} \mathbf{V}_\Phi^T, \\
  \mathbf{S}_0 & = \mathbf{V}_\Phi(\mathbb{I}+\mathbf{\Sigma}_\Phi^{-2})^{-1/2} \mathbf{W}_\Phi^T.
 \end{align*} 

In the expressions above we have used that the matrix square root applied to the singular value decomposition of any symmetric matrix $\mathbf{C}\in\mathbb{R}^{n\times n}$ yields:
 \begin{equation}
\sqrt{\mathbf{C}}=\sqrt{\mathbf{U}_C  \mathbf{\Sigma}_C   \mathbf{U}_C^T}=\mathbf{U}_C  \sqrt{\mathbf{\Sigma}_C}   \mathbf{U}_C^T,
\end{equation} 
with $\mathbf{U}_C\in\mathbb{R}^{n\times n}$ being the orthogonal SVD matrix and the diagonal matrix $\mathbf{\Sigma}_C$ containing the singular values of $\mathbf{C}$. This result can be readily corroborated by: 
 \begin{equation}
\mathbf{C}=\sqrt{\mathbf{C}}\sqrt{\mathbf{C}}=\mathbf{U}_C  \sqrt{\mathbf{\Sigma}_C}   \mathbf{U}_C^T\mathbf{U}_C  \sqrt{\mathbf{\Sigma}_C}   \mathbf{U}_C^T=\mathbf{U}_C \mathbf{\Sigma}_C \mathbf{U}_C^T.
\end{equation} 
Due to this property, the singular values of $\mathbf{C}=c(\mathbf{\Phi})$, $\mathbf{S}=s(\mathbf{\Phi})$ and $\mathbf{\Phi}$ can be linked directly and independently by trigonometric identities that relate in squares and square roots such as half-angle and double-angle formulae, and thus, allows to formulate analogies to these identities for the phase matrix $\mathbf{\Phi}$. 

\section{Logarithmic Amplitude and Phase Anisotropy}
\label{sec:logAniso}
\begin{figure}[t]
		\centering
		a) Relative anisotropy\\
		\includegraphics[width=.58\textwidth]{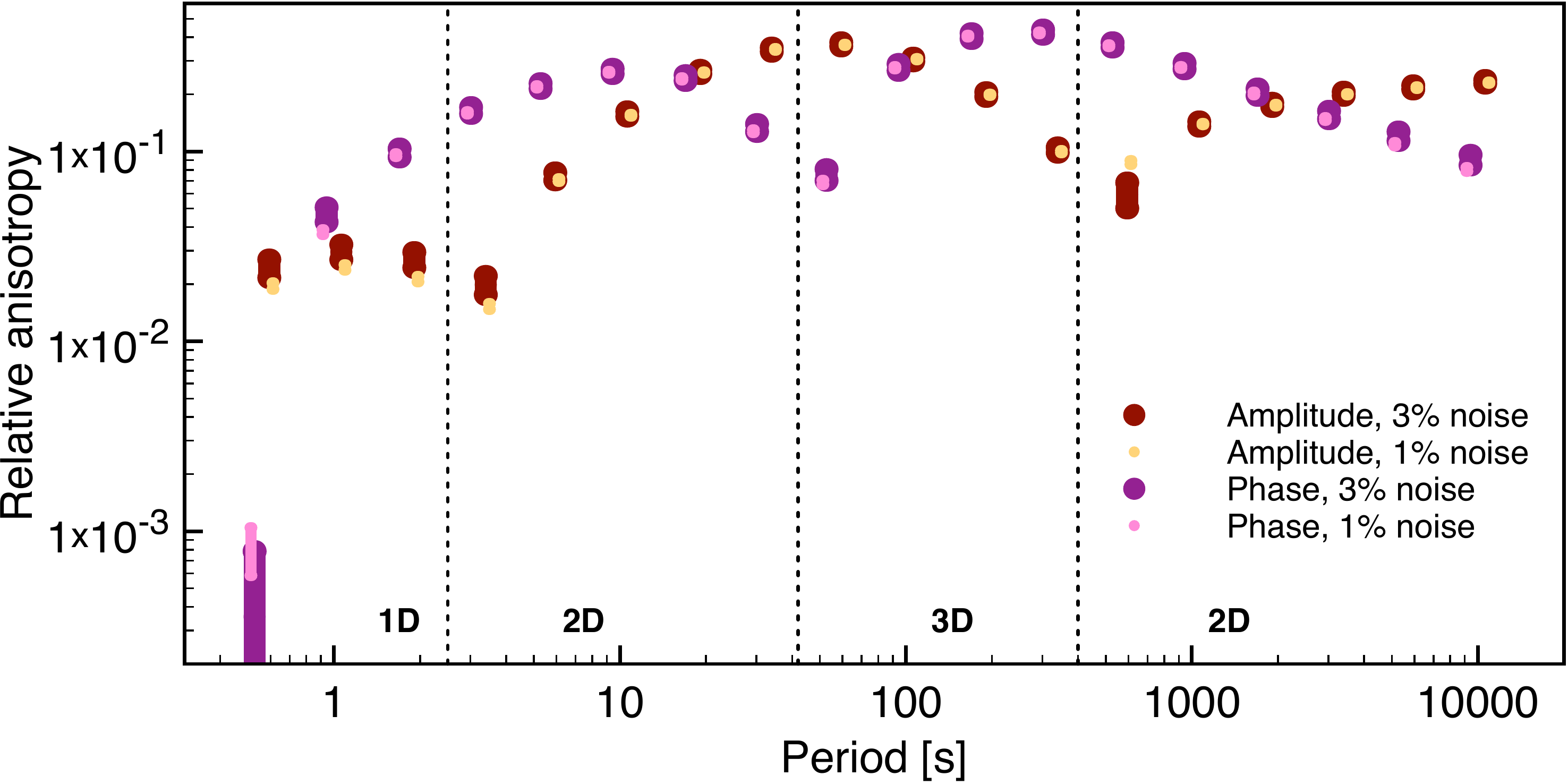}\\
		b) Moving average of relative anisotropy\\
		\includegraphics[width=.58\textwidth]{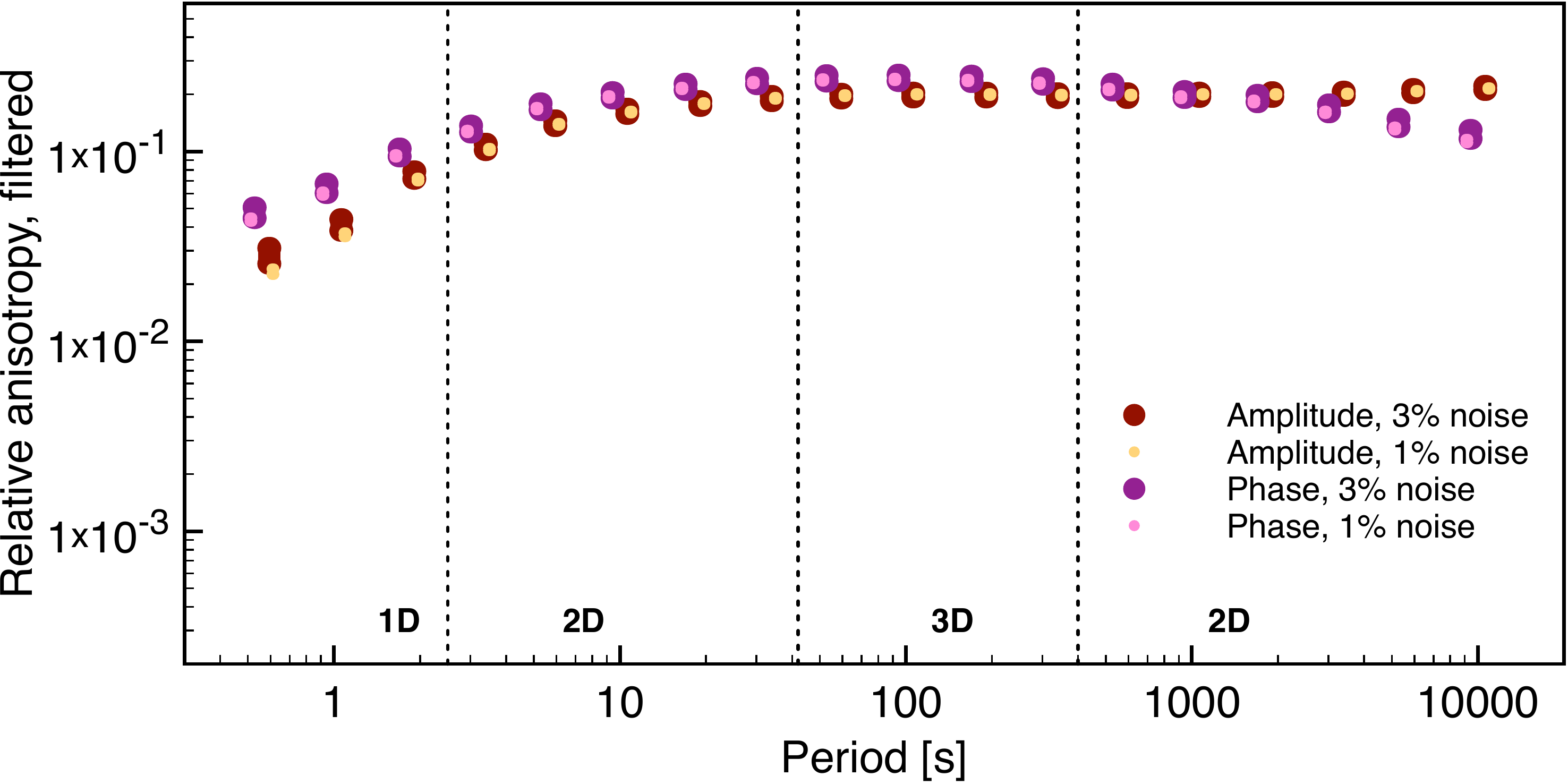}
	\caption{Impedance data of site \emph{A09} of the DSM1 \citep{Miensopust:2013} is decomposed into Phase and Amplitude Tensor singular values. From these singular values, the relative anisotropies are computed according to \eqref{eq:rPArAA}, and the moving average results from a two-decade gaussian moving window filter. To illustrate the statistical stability of the decomposition, a confidence limit of $1\%$ and $3\%$ was added to the impedance based on the impedance determinants square root. Periods of different data are slightly shifted for comparison's sake.}
	\label{fig:exampleAmpPhaseParam2}
\end{figure}
We refer to anisotropy as macroscopic anisotropy as it is observed in measured data at a certain period  
in contrast to microscopic anisotropy that can be considered in the subsurface model parameters. 
Let us define two new parameters, the phase anisotropy, $\phi_a$, and the logarithmic amplitude anisotropy, $\rho_a$, which describe the anisotropy in the measured Phase and Amplitude Tensors, respectively:
 \begin{equation}
  \phi_a =\frac{1}{2}(\arctan\phi_1-\arctan\phi_2) \qquad \mathrm{and}\qquad
  \rho_a = \frac{1}{2}(\ln\rho_1-\ln\rho_2).
  \label{eq:rPArAA}
\end{equation} 
With these definitions, the parameters can be compared like the real and imaginary parts of the logarithm of a complex number $z=\varrho\exp(i\varphi)$, that is 
  $\ln z=ln (\varrho\exp(i\varphi))=\ln\varrho+i\varphi$.
The motivation to define comparable parameters of the macroscopic anisotropy is that 
we want to approximate the logarithmic amplitude anisotropy of the inductive Amplitude Tensor with the phase anisotropy of the Phase Tensor as described in section \ref{sec:ApproxIndAmp}.

In order to illustrate the comparison of the anisotropy parameters between the Phase and the Amplitude Tensors, let us return to the previous example of site $A09$ from the DSM1 data set (cp.~Figure \ref{fig:exampleAmpPhaseParam})
, for which we display the phase and the logarithmic amplitude anisotropy parameters in Figure \ref{fig:exampleAmpPhaseParam2}a.  
The plotted phase and amplitude anisotropy parameters display the same behaviour over period, i.e.~the maxima, minima and general shape of the curves are at the same order of magnitude, but one curve appears to be constantly offset with respect to the other by approximately one decade.
From these parameters we cannot clearly associate any particular behaviour to a certain subsurface dimensionality as we did for the other parameters of the Phase and Amplitude Tensors (cp.~section \ref{sec:AmpDim}). Certainly, if the subsurface is absolutely 1D isotropic, the parameters defined above would be zero, but for any other case it is impossible to predict any particular value. 
Based on the observed offset of one decade, we define a two decade Gaussian moving window 
average for both anisotropy parameters, which shows that both averages are nearly identical in Figure \ref{fig:exampleAmpPhaseParam2}b. Note that due to the window width the period where the window is well-defined only ranges from $3$ to $2,\!000\,\mathrm{s}$. From the observed similarity of these curves we conclude that the average logarithmic amplitude anisotropy can be approximated with the  average phase anisotropy and propose that such procedure can be used to estimate the galvanic logarithmic amplitude anisotropy as we discuss in section \ref{sec:RecAniDis2D}. 

Consider the principal values, $\rho_1$, $\rho_2$ and $\phi_1$, $\phi_2$ of the Amplitude
and Phase Tensors, respectively. Defining two new variables $\rho$ and $\phi$ as:
 \begin{equation}
\rho=\sqrt{\rho_1\rho_2}=\exp\left(\frac{1}{2}(\ln\rho_1+\ln\rho_2)\right) \quad \mathrm{and} \quad
\phi=\frac{1}{2}(\arctan\phi_1+\arctan\phi_2),
\label{eq:detRhoPhi}
\end{equation} 
and considering them together with \eqref{eq:rPArAA}, we obtain $\rho_1$, $\rho_2$, $\phi_1$ and $\phi_2$ in terms of $\rho$, $\rho_a$, $\phi$ and $\phi_a$: 
 \begin{equation}
\ln\rho_1=\ln\rho+\rho_a,\quad
\ln\rho_2=\ln\rho-\rho_a,\quad
\phi_1=\tan\left(\phi+\phi_a\right)\quad\mathrm{and}\quad
\phi_2=\tan\left(\phi-\phi_a\right).
  \label{eq:impedanceSV}
\end{equation} 
Both Phase Tensor parameters, $\phi$ and $\phi_a$, are generally unaffected by galvanic electric field distortion, whereas $\rho$ and $\rho_a$ can include galvanic shift 
and anisotropic distortion, respectively. 

\subsection{Galvanic and Inductive Amplitude Singular Values for 2D Impedances}
\label{sec:AmpSV}
For the case of a regional 2D impedance, we can derive an explicit formula for the galvanic amplitude singular values, which we can use to estimate the anisotropic distortion parameter. For this, consider an impedance tensor in the coordinate system of the regional strike, with the two principal modes:
 \begin{equation}
  Z_1=\exp(\ln\rho_1+i\arctan\phi_1)
  \qquad\mathrm{and}\qquad
  Z_2=\exp(\ln\rho_2-i\arctan\phi_2),
  \label{eq:ImpPrincVal}
\end{equation} 
where $\rho_1$, $\rho_2$ and $\phi_1$, $\phi_2$ are the singular values of the Amplitude and Phase Tensors, respectively. Then, the logarithmic amplitude anisotropy parameter is a sum of inductive and galvanic logarithmic amplitude anisotropy parameters $\rho_a=\rho_a^{gal}+\rho_a^{ind}$ as a consequence of substituting the expressions of $\ln\rho_{1,2}$ from \eqref{eq:impedanceSV} in $\rho_{1,2}=\rho_{1,2}^{gal}\rho_ {1,2}^{ind}$, obtained from \eqref{eq:AmpGalInd} for 2D subsurfaces:
 \begin{equation}
\ln\rho_{1,2}=\ln\rho_{1,2}^{gal}+\ln\rho_{1,2}^{ind}=\ln\rho^\mathrm{gal}\pm\rho_a^\mathrm{gal}+\ln\rho^\mathrm{ind}\pm\rho_a^\mathrm{ind}=\ln\rho\pm(\rho_a^{gal}+\rho_a^{ind}),
  \label{eq:AmplitudeSVrelation}
\end{equation} 
where in the last equality we have combined the indeterminable $\rho^\mathrm{gal}$ and $\rho^\mathrm{ind}$ in $\rho$. With the approximation $\rho_a^{ind}\approx\phi_a$, based on the idea that the inductive subsurface response is sensed equally by the phase and inductive amplitude as discussed in section \ref{sec:ApproxIndAmp}, we can define the inductive amplitude singular values by:
 \begin{equation}
  \rho_1^{\mathrm{ind}}=\exp\left(\ln\rho+\phi_a\right)
  \qquad\mathrm{and}\qquad
  \rho_2^{\mathrm{ind}}=\exp\left(\ln\rho-\phi_a\right),
  \label{eq:indAmplitudeSV}
\end{equation} 
where we have identified the inductive part in \eqref{eq:AmplitudeSVrelation}, including the indeterminable galvanic shift $\ln\rho^\mathrm{gal}$. 
The expressions for the galvanic amplitude singular values can be derived similarly, substituting \eqref{eq:impedanceSV} and \eqref{eq:indAmplitudeSV} in \eqref{eq:AmplitudeSVrelation}, resulting in: 
 \begin{equation*}
  \rho_1^{\mathrm{gal}}=\exp\left(\rho_a-\phi_a\right)
  \qquad\mathrm{and}\qquad
  \rho_2^{\mathrm{gal}}=\exp\left(-\left(\rho_a-\phi_a\right)\right).
\end{equation*} 
Note that the estimated galvanic amplitude only depends on the anisotropy parameters, since the galvanic shift has been absorbed by $\ln\rho$ in the inductive amplitude. Finally, the anisotropic distortion parameter can be estimated from the relations above as outlined on the example given in section \ref{sec:RecAniDis2D}.


%


\begin{thebibliography}{27}
\providecommand{\natexlab}[1]{#1}
\expandafter\ifx\csname urlstyle\endcsname\relax
  \providecommand{\doi}[1]{doi:\discretionary{}{}{}#1}\else
  \providecommand{\doi}{doi:\discretionary{}{}{}\begingroup
  \urlstyle{rm}\Url}\fi

\bibitem[{\textit{Avdeeva et~al.}(2015)\textit{Avdeeva, Moorkamp, Avdeev,
  Jegen, and Miensopust}}]{avdeeva:2015}
Avdeeva, A., M.~Moorkamp, D.~Avdeev, M.~Jegen, and M.~Miensopust (2015),
  Three-dimensional inversion of magnetotelluric impedance tensor data and full
  distortion matrix, \textit{Geophysical Journal International},
  \textit{202}(1), 464--481, \doi{10.1093/gji/ggv144}.

\bibitem[{\textit{Baba and Chave}(2005)}]{baba:2005}
Baba, K., and A.~D. Chave (2005), Correction of seafloor magnetotelluric data
  for topographic effects during inversion, \textit{Journal of Geophysical
  Research: Solid Earth (1978--2012)}, \textit{110}(B12).

\bibitem[{\textit{Bahr}(1988)}]{Bahr:1988}
Bahr, K. (1988), Interpretadon of the magnetotelluric impedance tensor:
  regional induction and local telluric distortion, \textit{Journal of
  Geophysical Research}, \textit{62}, 119--127.

\bibitem[{\textit{Berdichevsky and Dmitriev}(1976)}]{berdichevsky:1976}
Berdichevsky, M.~N., and V.~I. Dmitriev (1976), Distortion of magnetic and
  electric fields by near-surface lateral inhomogeneities, \textit{Acta Geod.
  Geophys. Montan. Acad. Sci. Hung}, \textit{11}, 447--483.

\bibitem[{\textit{Bibby et~al.}(2005)\textit{Bibby, Caldwell, and
  Brown}}]{Bibby:2005}
Bibby, H.~M., T.~G. Caldwell, and C.~Brown (2005), {Determinable and
  non-��determinable parameters of galvanic distortion in
  magnetotellurics}, \textit{Geophysical Journal International},
  \textit{163}(3), 915--930.

\bibitem[{\textit{Booker}(2014)}]{Booker:2014}
Booker, J.~R. (2014), The magnetotelluric phase tensor: A critical review,
  \textit{Surveys in Geophysics}, \textit{35}(1), 7--40,
  \doi{10.1007/s10712-013-9234-2}.

\bibitem[{\textit{Caldwell et~al.}(2004)\textit{Caldwell, Bibby, and
  Brown}}]{Caldwell:2004}
Caldwell, T.~G., H.~M. Bibby, and C.~Brown (2004), The magnetotelluric phase
  tensor, \textit{Geophysical Journal International}, \textit{158}(2),
  457--469, \doi{10.1111/j.1365-246X.2004.02281.x}.

\bibitem[{\textit{Chave and Smith}(1994)}]{Chave:1994}
Chave, A.~D., and J.~T. Smith (1994), {On electric and magnetic galvanic
  distortion tensor decompositions}, \textit{Journal of Geophysical Research},
  \textit{99}(B3), 4669--4682.
  
\bibitem[{\textit{Chave and Jones}(2012)}]{chave:2012}
Chave, A.~D., and A.~G. Jones (2012), \textit{The magnetotelluric method:
  Theory and practice}, Cambridge University Press.

\bibitem[{\textit{deGroot-Hedlin}(1991)}]{deGroot:1991}
deGroot-Hedlin, C. (1991), {Removal of static shift in two dimensions by regularized inversion}, \textit{Geophysics}, \textit{56}(12), 2102--2106.

\bibitem[{\textit{deGroot-Hedlin}(1995)}]{deGroot:1995}
deGroot-Hedlin, C. (1995), {Inversion of regional 2-D resistivity structure in the presence of galvanic scatterers}, \textit{Geophysical Journal International}, \textit{122}, 877--888.

\bibitem[{\textit{Evans et~al.}(2011)\textit{Evans, Jones, Garcia, Muller,
  Hamilton, Evans, Fourie, Spratt, Webb, Jelsma, and {the SAMTEX
  TEAM}}}]{evans2011electrical}
Evans, R.~L., A.~G. Jones, X.~Garcia, M.~Muller, M.~Hamilton, S.~Evans,
  C.~Fourie, J.~Spratt, S.~Webb, H.~Jelsma, and {the SAMTEX TEAM} (2011),
  {Electrical lithosphere beneath the Kaapvaal craton, southern Africa},
  \textit{Journal of Geophysical Research: Solid Earth}, \textit{116}(B4),
  1978--2012.

\bibitem[{\textit{Gamble}(1978)}]{gamble:1978}
Gamble, T.~D. (1978), \textit{Remote reference magnetotellurics with SQUID}, Ph.D.~dissertation. Berkeley, University of California, and Lawrence Berkeley Laboratory, LBL-8062.

\bibitem[{\textit{Garcia and Jones}(2002)}]{garcia:2002}
Garcia, X., and A.~G. Jones (2002), {Decomposition of three-dimensional
  magnetotelluric data}, in \textit{Three-Dimensional Electromagnetics,
  Proceedings of the Second International Symposium}, edited by M.~S. Zhdanov
  and P.~E. Wannamaker, chap.~13, pp. 235--250, Elsevier, Amsterdam,
  \doi{10.1016/S0076-6895(02)80095-6}.

\bibitem[{\textit{Garcia et~al.}(2003)\textit{Garcia, Boerner, and
  Pedersen}}]{garcia:2003}
Garcia, X., D.~Boerner, and L.~B. Pedersen (2003), {Electric and magnetic
  galvanic distortion decomposition of tensor CSAMT data. Application to data
  from the Buchans Mine (Newfoundland, Canada)}, \textit{Geophysical Journal
  International}.

\bibitem[{\textit{Groom and Bailey}(1989)}]{Groom:1989}
Groom, R.~W., and R.~C. Bailey (1989), {Decomposition of Magnetotelluric
  Impedance Tensors in the Presence of Local Three-Dimensional Galvanic
  Distortion}, \textit{Journal of Geophysical Research}, \textit{94}(B2),
  1913--1925.

\bibitem[{\textit{Groom and Bahr}(1992)}]{groom:1992}
Groom, R.~W., and K.~Bahr (1992), Corrections for near surface effects:
  decomposition of the magnetotelluric impedance tensor and scaling corrections
  for regional resistivities: a tutorial, \textit{Surveys in Geophysics},
  \textit{13}(4-5), 341--379.
  
\bibitem[{\textit{Hamilton et~al.}(2006)\textit{Hamilton, Jones, Evans, Evans,
  Fourie, Garcia, Mountford, Spratt, and {the SAMTEX Team}}}]{Hamilton.ea:2006}
Hamilton, M.~P., A.~G. Jones, R.~L. Evans, S.~F. Evans, C.~J.~S. Fourie,
  X.~Garcia, A.~Mountford, J.~E. Spratt, and {the SAMTEX Team} (2006),
  {Electrical anisotropy of South African lithosphere compared with seismic
  anisotropy from shear-wave splitting analyses}, \textit{Physics of the Earth
  and Planetary Interiors}, \textit{158}(2-4), 226--239,
  \doi{10.1016/j.pepi.2006.03.027}.

\bibitem[{\textit{Jiracek}(1990)}]{Jiracek:1990}
Jiracek, G.~R. (1990), {Near-surface and topographic distortions in
  electromagnetic induction}, \textit{Surveys in Geophysics}, \textit{11}(2-3),
  163--203.

\bibitem[{\textit{Jones et~al.}(2009)}]{Jones:2009}
Jones, A.~G., Evans, R.~L., Muller, M.~R., Hamilton, M.~P., Miensopust, M.~P., Garcia, X., Cole, P., Ngwisanyi, T., Hutchins, D., Fourie, C.~J.~S., Jelsma, H. Aravanis, T., Pettit, W., Webb, S., Wasborg, J., and The SAMTEX Team (2009), {Area selection for diamonds using magnetotellurics: Examples from southern Africa},
  \textit{Lithos}, \textit{112S}, 83--92, \doi{10.1016/j.lithos.2009.06.011}.

\bibitem[{\textit{Jones}(2011)}]{Jones:2011}
Jones, A.~G. (2011), {Three-dimensional galvanic distortion of
  three-dimensional regional conductivity structures: Comment on
  ``Three-dimensional joint inversion for magnetotelluric resistivity and
  static shift distributions in complex media'' by Y Sasaki and M A Meju},
  \textit{Journal of Geophysical Research}, pp. 1--5.

\bibitem[{\textit{Jones}(2012)}]{jones:2012}
Jones, A.~G. (2012), Distortion decomposition of the magnetotelluric impedance
  tensors from a one-dimensional anisotropic earth, \textit{Geophysical Journal
  International}, \textit{189}(1), 268--284.

\bibitem[{\textit{Jones}(2012)}]{jones:2012b}
Jones, A.~G. (2012), {Distortion of magnetotelluric data: its identification and removal. Chapter 6 in The Magnetotelluric Method: Theory and Practice, eds. Chave, A.~D. and Jones, A.~G.} \textit{Cambridge University Press, Cambridge (UK)}.

\bibitem[{\textit{Khoza et~al.}(2013)}]{Khoza:2013}
Khoza, D., Jones, A.~G., Muller, M.~R, Evans, R.~L., Miensopust, M.~P., Webb, S. (2013), {Lithospheric structure of an Archean craton and adjacent mobile belt revealed from 2D and 3D inversion of magnetotelluric data: example from southern Congo craton in northern Namibia},
  \textit{Journal of Geophysical Research - Solid Earth}, \textit{118}(8), 4378--4397, \doi{10.1002/jgrb.50258}.

\bibitem[{\textit{Larsen}(1977)}]{Larsen:1977}
Larsen, J.~C., (1977), Removal of local surface conductivity effects from low frequency mantle response curves, \textit{Acta Geodaetica, Geophysica et Montanistica Hungarica}, \textit{12}, 183--186.

\bibitem[{\textit{Lilley}(2012)}]{lilley:2012}
Lilley, F. E.~M. (2012), {Magnetotelluric tensor decomposition: insights from
  linear algebra and Mohr diagrams}, \textit{New achievements in geoscience},
  \textit{10}(5772), 2066.

\bibitem[{\textit{Mart\'i}(2014)}]{Marti:2014}
Mart\'i, A. (2014), {The role of electrical anisotropy in magnetotelluric responses: From modelling and dimensionality analysis to inversion and interpretation}, \textit{Surveys in Geophysics}, \textit{35} 1, 179--218.

\bibitem[{\textit{McNeice and Jones}(2001)}]{McNeice:2001}
McNeice, G.~W., and A.~G. Jones (2001), {Multisite, multifrequency tensor
  decomposition of magnetotelluric data}, \textit{Geophysics}, \textit{66}(1),
  158--173.

\bibitem[{\textit{Miensopust}(2010)}]{miensopust:2010}
Miensopust, M.~P. (2010), Multidimensional magnetotellurics: A 2d case study
  and a 3d approach to simultaneously invert for resistivity structure and
  distortion parameters, \textit{PhD thesis}.

\bibitem[{\textit{Miensopust et~al.}(2011)\textit{Miensopust, Jones, Muller,
  Garcia, and Evans}}]{Miensopust.ea:2011}
Miensopust, M.~P., A.~G. Jones, M.~R. Muller, X.~Garcia, and R.~L. Evans
  (2011), {Lithospheric structures and Precambrian terrane boundaries in
  northeastern Botswana revealed through magnetotelluric profiling as part of
  the Southern African Magnetotelluric Experiment}, \textit{Journal of
  Geophysical Research}, \textit{116}(B2), B02,401, \doi{10.1029/2010JB007740}.

\bibitem[{\textit{Miensopust et~al.}(2013)\textit{Miensopust, Queralt, Jones,
  and the 3D~MT~modellers}}]{Miensopust:2013}
Miensopust, M.~P., P.~Queralt, A.~G. Jones, and the 3D~MT~modellers (2013),
  Magnetotelluric 3-d inversion: a review of two successful workshops on
  forward and inversion code testing and comparison, \textit{Geophysical
  Journal International}, \textit{193}(3), 1216--1238,
  \doi{10.1093/gji/ggt066}.

\bibitem[{\textit{Muller et~al.}(2009)\textit{Muller, Jones, Evans,
  Gr{\"{u}}tter, Hatton, Garcia, Hamilton, Miensopust, Cole, Ngwisanyi,
  Hutchins, Fourie, Jelsma, Evans, Aravanis, Pettit, Webb, and
  Wasborg}}]{Muller.ea:2009}
Muller, M.~R., A.~G. Jones, R.~L. Evans, H.~S. Gr{\"{u}}tter, C.~Hatton,
  X.~Garcia, M.~P. Hamilton, M.~P. Miensopust, P.~Cole, T.~Ngwisanyi,
  D.~Hutchins, C.~J.~S. Fourie, H.~A. Jelsma, S.~F. Evans, T.~Aravanis,
  W.~Pettit, S.~J. Webb, and J.~Wasborg (2009), {Lithospheric structure,
  evolution and diamond prospectivity of the Rehoboth Terrane and western
  Kaapvaal Craton, southern Africa: Constraints from broadband
  magnetotellurics}, \textit{Lithos}, \textit{112}, 93--105,
  \doi{10.1016/j.lithos.2009.06.023}.

\bibitem[{\textit{Neukirch et~al.}(2017)\textit{Neukirch, Garcia and Galiana}}]{Neukirch:2017b}
Neukirch, M., X.~Garcia, and S.~Galiana (2017), Appraisal of the Magnetotelluric Galvanic Electric Distortion by Optimisation of the Relation between Amplitude and Phase Tensors, simultaneously submitted to \textit{Geophysical Journal International}.
  
\bibitem[{\textit{Patro and Egbert}(2011)}]{patro:2011}
Patro, P.~K., and G.~D. Egbert (2011), Application of 3d inversion to
  magnetotelluric profile data from the Deccan volcanic province of Western
  India, \textit{Physics of the Earth and Planetary Interiors},
  \textit{187}(1), 33--46.

\bibitem[{\textit{Patro et~al.}(2012)}]{patro:2012}
Patro, P.~K., Uyeshima, M., and Siripunvaraporn, W. (2012). Three-dimensional inversion of magnetotelluric phase tensor data. \textit{Geophysical Journal International}, ggs014.

\bibitem[{\textit{Tietze et~al.}(2015)}]{Tietze:2015}
Tietze, K., Ritter, O., and Egbert, G.~D. (2015). 3-D joint inversion of the magnetotelluric phase tensor and vertical magnetic transfer functions. \textit{Geophysical Journal International}, \textit{203}(2), 1128--1148.

\bibitem[{\textit{Utada and Munekane}(2000)}]{utada:2000}
Utada, H., and H.~Munekane (2000), {On galvanic distortion of regional 3-D MT
  impedances On galvanic distortion of regional three-dimensional
  magnetotelluric impedances}, \textit{Geophysical Journal International},
  \textit{140}(2), 385--398.

\bibitem[{\textit{Weaver et~al.}(2006)\textit{Weaver, Agarwal, and
  Lilley}}]{weaver:2006}
Weaver, J.~T., A.~K. Agarwal, and F.~E.~M. Lilley (2006), {The relationship
  between the magnetotelluric tensor invariants and the phase tensor of
  Caldwell, Bibby, and Brown}, \textit{Exploration Geophysics}, \textit{37}(3),
  261--267.

\bibitem[{\textit{Weidelt and Chave}(2012)}]{weidelt:2012b}
Weidelt, P. and Chave, A.~D. (2012), {The magnetotelluric response fuction. Chapter 4 in The Magnetotelluric Method: Theory and Practice, pp. 122-164, eds. Chave, A.~D. and Jones, A.~G.}, \textit{Cambridge University Press, Cambridge (UK)}.

\bibitem[{\textit{Wight}(1987)}]{wight:1987}
Wight, D.~E. (1987). \textit{Society of Exploration Geophysicists MT/EMAP Data Interchange Standard}.

\end{thebibliography}
\end{document}